\tikzstyle{min}=[thick,circle,draw,minimum size=1.4em,inner sep=0em,text centered]
\tikzstyle{dec}=[circle,draw,fill,minimum size=.8ex,inner sep=0em]
\tikzstyle{state}=[draw,thick,minimum width=7mm,minimum height=5mm,rounded corners,text centered]
\tikzstyle{tran}=[-stealth,rounded corners,thick]
\def\squiggly{\bgroup \markoverwith{\textcolor{red}{\lower3.5\p@\hbox{\sixly \char58}}}\ULon}
\newcommand{\N}{\mathbb{N}}
\newcommand{\R}{\mathbb{R}}
\newcommand{\Q}{\mathbb{Q}}
\newcommand{\A}{\mathcal{A}}
\newcommand{\B}{\mathcal{B}}
\newcommand{\F}{\mathcal{F}}
\newcommand{\M}{\mathcal{M}}
\newcommand{\X}{\mathcal{X}}
\newcommand{\calR}{\mathcal{R}}
\newcommand{\EXPTIME}{\textbf{EXPTIME}}
\newcommand{\PSPACE}{\mbox{\rm\bf PSPACE}}
\newcommand{\AP}{\textit{AP}}
\newcommand{\run}{\textit{Run}}
\newcommand{\NP}{\textbf{NP}}
\newcommand{\coNP}{\mathbf{coNP}}
\newcommand{\tran}[1]{\xrightarrow{\makebox[4em]{\footnotesize$#1$}}}
\newcommand{\ex}[1]{\langle #1 \rangle}
\newcommand{\Init}{\mathit{Init}}
\newcommand{\Simulate}{\mathit{Sim}}
\newcommand{\NewSimulate}{\mathit{NewSim}}
\newcommand{\Step}{\mathit{Step}}
\newcommand{\Zero}{\mathit{Zero}}
\newcommand{\NewZero}{\textit{NewZero}}
\newcommand{\Line}{\mathit{Line}}
\newcommand{\Area}{\mathcal{A}}
\newcommand{\Out}{\mathit{Out}}
\newcommand{\Ins}{\mathit{Ins}}
\newcommand{\slope}{\mathit{slope}}
\newcommand{\Succ}{\mathit{Succ}}
\newcommand{\USucc}{\mathit{USucc}}
\newcommand{\at}{\mathit{at}}
\newcommand{\INC}{\mathit{Inc}}
\newcommand{\DEC}{\mathit{Dec}}
\newcommand{\Prob}{\mathit{Prob}}
\renewcommand{\vec}[1]{\pmb{#1}}
\newcommand{\Struct}{\textit{Struct}}
\newcommand{\Decrement}{\textit{Decrement}}
\newcommand{\rSuc}{\textit{rsuc}}
\newcommand{\RSuc}{\textit{Rsuc}}
\newcommand{\RKSuc}{\textit{RKsuc}}
\newcommand{\Mark}{\textit{Mark}}
\newcommand{\Eligible}{\textit{Eligible}}
\newcommand{\Copy}{\textit{Copy}}
\newcommand{\UCopy}{\textit{UCopy}}
\newcommand{\FLambda}{\textit{Lambda}}
\newcommand{\update}{\textit{update}}
\newcommand{\Update}{\textit{Update}}
\newcommand{\Dec}{\textit{dec}}
\newcommand{\UDec}{\textit{UDec}}
\newcommand{\Inc}{\textit{inc}}
\newcommand{\UInc}{\textit{UInc}}
\newcommand{\conf}{\textit{conf}}
\renewcommand{\Form}{\textit{Form}}
\newcommand{\Rec}{\textit{Rec}}
\newcommand{\Recurrent}{\textit{Recurrent}}
\newcommand{\Sync}{\textit{Sync}}
\newcommand{\cv}{\vec{\gamma}}
\newcommand{\vv}[1]{\INC^{#1}(\vec{z})}
\newcommand{\fp}{Appendix}
\newtheorem{prop}{Property}
\newcommand*{\vp}{\varphi}
\newcommand*{\opx}{\operatorname{\pmb{\mathtt{X}}}}
\newcommand*{\opu}{\operatorname{\pmb{\mathtt{U}}}}
\newcommand*{\opf}{\operatorname{\pmb{\mathtt{F}}}}
\newcommand*{\opg}{\operatorname{\pmb{\mathtt{G}}}}
\newcommand*{\nat}{\mathbb{N}}
\newcommand*{\m}{\mathbb{P}}
\renewcommand{\paragraph}[1]{\medskip

\noindent\textbf{#1.}\quad }
\title{The Satisfiability and Validity Problems for Probabilistic Computational Tree  Logic are Highly Undecidable}
\titlerunning{PCTL Satisfiability and Validity} 
\author{Miroslav Chodil}{Faculty of Informatics, Masaryk University, Brno, Czechia}{miroslav@chodil.com}{0000-0002-8406-0443}{}
\author{Anton\'{\i}n Ku\v{c}era}{Faculty of Informatics, Masaryk University, Brno, Czechia}{tony@fi.muni.cz}{0000-0002-6602-8028}{}
\authorrunning{Chodil and Ku\v{c}era}
\keywords{Satisfiability, temporal logics, probabilistic CTL} 
\begin{document}

\maketitle


\begin{abstract}
    The Probabilistic Computational Tree Logic (PCTL) is the main specification formalism for discrete probabilistic systems modeled by Markov chains. 
    Despite serious research attempts, the decidability of PCTL satisfiability and validity problems remained unresolved for 30~years. We show that both problems are \emph{highly undecidable}, i.e., beyond the arithmetical hierarchy. Consequently, there is no sound and complete deductive system for PCTL. 
\end{abstract}

\section{Introduction}
\label{sec-intro}

Algorithms for checking the satisfiability/validity of formulae in a given logic have received significant attention from the very beginning of computer science. The satisfiability/validity problems are closely related to each other because for most logics we have that $\varphi$ is valid iff $\neg\varphi$ is not satisfiable. The decidability of the validity problem for first-order logic, also known as the \emph{Entscheidungsproblem}, was posed by Hilbert and Ackermann in~1928 and aswered negatively by Church \cite{Church:Entscheidungsproblem} and Turing \cite{Turing:Entscheidungsproblem} in 1936. Later, Trakhtenbrot \cite{Trakhtenbrot-FOL-finitesat} demonstrated that the \emph{finite} satisfiability problem for first-order logic is also undecidable. Consequently, the \emph{finite validity} problem is not even semi-decidable, and hence there is no sound and complete deductive system proving all finite valid first-order formulae. For propositional logic, the satisfiability and validity are the canonical $\NP$ and $\coNP$ complete problems \cite{Cook:SAT-NP-complete}.

The satisfiability and validity problems have also been intensively studied for \emph{temporal logics} such as LTL, CTL, CTL$^*$, or the modal $\mu$-calculus, and their computational complexity has been fully classified (see, e.g., \cite{Emerson:temp-logic-handbook,Stirling:temp-logic-handbook}). Temporal logics are interpreted over \emph{transition systems}, i.e., directed graphs where the vertices correspond to the states of an abstract non-deterministic program, and the edges model the atomic computational steps. Even for the most expressive logic of the modal $\mu$-calculus \cite{Kozen:mu-calculus}, the satisfiability/validity problems are decidable and \EXPTIME-complete, and every satisfiable formula $\varphi$ has a model whose size is at most exponential in $|\varphi|$ (this is known as ``the small model property'' \cite{Kozen:mu-calculus-finite-model}). Furthermore, there is a sound and complete deductive system for the modal $\mu$-calculus \cite{Walukiewicz:mucalculus-complete-LICS}.

\emph{Probabilistic temporal logics}, such as PCTL or PCTL$^*$ \cite{HJ:logic-time-probability-FAC} are interpreted over discrete-time Markov chains 
modelling \emph{probabilistic programs}. 
Despite numerous research attempts resulting in partial positive results  (see Related work), the decidability of the satisfiability/validity problems for PCTL has remained unresolved for more than 30~years. In this paper, we prove that both problems are \emph{highly undecidable} (i.e., beyond the arithmetical hierarchy), even for a simple PCTL fragment where the path connectives are restricted to $\opf$ and $\opg$. Consequently, there is no sound and complete deductive system for PCTL.
\smallskip

\noindent
\textbf{Related work.}
Unlike non-probabilistic temporal logics, PCTL does not have the small model property. In fact, one can easily construct satisfiable PCTL formulae without \emph{any} finite model (see, e.g., \cite{BFKK:satisfiability}). Hence, the PCTL satisfiability/validity problems have also been studied in their finitary variants.

The first positive decidability results have been obtained for the \emph{qualitative fragment} of PCTL, where the range of admissible probability constraints is restricted to ${=}0$, ${>}0$, ${=}1$, and~${<}1$ (see Section~\ref{sec-prelim} for details). The satisfiability/validity for qualitative PCTL is \EXPTIME-complete, and the same holds for finite satisfiability/validity  \cite{LS:time-chance-IC,HS:Prob-temp-logic,KL:qPCTL-satisfiability}. Furhermore, a finite description of a model for a satisfiable qualitative PCTL formula is constructible in exponential time \cite{BFKK:satisfiability}. The underlying proof techniques are similar to the ones used for non-probabilistic temporal logics.

The finite satisfiability is decidable also for certain \emph{quantitative} PCTL fragments with general probability constrains \cite{KR:PCTL-unbounded,CHK:PCTL-simple,ChK:PCTL-quatitative-fragments-JCSS}. The underlying arguments are mostly based on establishing an upper bound on the size of a model for a satisfiable formula of the considered fragment (the existence of a model can then be expressed in the existential fragment of first-order theory of the reals, which is known to be in $\PSPACE$ \cite{Canny:Tarski-exist-PSPACE}).
More concretely, in~\cite{CHK:PCTL-simple} it is shown that every formula $\varphi$ of the \emph{bounded fragment} of PCTL, where the validity of~$\varphi$ in a state $s$ depends only on a bounded prefix of a run initiated in~$s$, has a bounded-size tree model. In \cite{KR:PCTL-unbounded}, several PCTL fragments based on $\opf$ and $\opg$ operators are studied. For each of these fragments, it is shown that every \emph{finite} satisfiable formula has a bounded-size model where every non-bottom SCC is a singleton. For some of these fragments, it is shown that every satisfiable formula has a finite model, which yields the decidability of the (general) satisfiability problem.  In \cite{ChK:PCTL-quatitative-fragments-JCSS}, more abstract decidability results about finite PCTL satisfiability based on isolating the progress achieved along a chain of visited SCCs are presented. A variant of the bounded satisfiability problem, where transition probabilities are restricted to $\{\frac{1}{2},1\}$, is proven \NP-complete in \cite{BFS:bounded-PCTL}. A recent result of \cite{ChK:PCTL-finite-sat-LICS} says that the \emph{finite} satisfiability problem for unrestricted PCTL is \emph{undecidable}. For a given Minsky machine, a PCTL formula~$\psi$ is constructed so that every reachable configuration of the machine corresponds to a special state of a model. Hence, $\psi$ is \emph{always} satisfiable, and $\psi$ has a \emph{finite} model iff the set of reachable configurations of the machine is finite. Hence, the construction does not imply the undecidability of general PCTL satisfiability, and some of the crucial tools used in \cite{ChK:PCTL-finite-sat-LICS} apply only to finite Markov chains (see Section~\ref{sec-geom} for more comments).
\smallskip 

\noindent
\textbf{Our contribution.} We show that the satisfiability problem for PCTL is hard for the $\Sigma_1^1$ level of the analytical hierarchy. Consequently, the PCTL validity problem is $\Pi_1^1$-hard. These results hold even for the PCTL fragment where only the path connectives $\opf,\opg$ are allowed. The proof is based on encoding the existence of a recurrent computation of a given non-deterministic two-counter machine $\M$ by an effectively constructible PCTL formula $\varphi_\M$. The overall structure of $\varphi_\M$ and the main underlying concepts are introduced in Section~\ref{sec-simulate}. In particular, in Section~\ref{sec-sim-step}, we identify a fundamental obstacle that needs to be overcome when constructing $\varphi_\M$. This is achieved by utilizing some concepts of convex geometry, and the solution is presented in Section~\ref{sec-geom}. The construction of $\varphi_\M$ is then described in more detail in Section~\ref{sec-formula}. Interestingly, our results imply that the \emph{finite} satisfiability of the $\opf,\opg$-fragment is also undecidable. This is a non-trivial refinement of the result achieved in~\cite{ChK:PCTL-finite-sat-LICS}, where the finite satisfiability is shown undecidable for PCTL formulae containing the $\opf$, $\opg$, $\opx$, and $\opf^k$ path connectives. 

The main \emph{technical} contributions are the following. Non-negative integer counter values are represented by \emph{pairs} of probabilities. More concretely, we fix a suitable $\vec{z} \in (0,1)^2$ representing zero, and then design a suitable function $\INC :(0,1)^2 \to (0,1)^2$ such that every $n \geq 0$ is represented by the pair $\INC^n(\vec{z})$, where $\INC^n$ denotes the $n$-fold composion of $\INC$. The function $\INC$ is a modification of the function $\sigma$ designed in \cite{ChK:PCTL-finite-sat-LICS}. Then, we prove that every (possibly \emph{infinite}) $T \subseteq (0,1)^2$ satisfying a certain closure property must be contained in the convex polytope determined by the set of points $\{\INC^n(\vec{z}) \mid n\geq 0\}$ (see Theorem~\ref{thm-area}). This is the crucial (novel) ingredient for overcoming the obstacles identified in Section~\ref{sec-sim-step}. The second major technical contribution is a (novel) technique for representing the points $\INC^n(\vec{z})$ by pairs of PCTL formulae of the $\opf,\opg$-fragment. 

The presented construction is non-trivial and technically demanding. We believe that, at least to some extent, this is caused by the inherent difficulty of the considered problem. We did our best to provide a readable sketch of the main steps in Section~\ref{sec-formula}, where we also explain the purpose and use of Theorem~\ref{thm-area}. Full technical details are in \fp.

\section{Preliminaries}
\label{sec-prelim}

The sets of non-negative integers, rational numbers, and real numbers are denoted by $\N$, $\Q$, and $\R$, respectively. 
We use $\vec{u},\vec{v},\vec{z},\ldots$ to denote the elements of $\R \times \R$. The first and the second components of $\vec{u}$ are denoted by $\vec{u}_1$ and $\vec{u}_2$. For a function $f : A \to A$, we use $f^n$ to denote the $n$-fold composition $f \circ \cdots \circ f$.


\subsection{The Logic PCTL}
The logic PCTL \cite{HJ:logic-time-probability-FAC} is obtained from the standard CTL (Computational Tree Logic \cite{Emerson:temp-logic-handbook}) by replacing the existential and universal path quantifiers with the probabilistic operator $P(\Phi) \bowtie r$. PCTL formulae are interpreted over Markov chains where every state $s$ is assigned a subset $v(s) \subseteq \AP$ of propositions valid in~$s$.
 
\begin{definition}[PCTL]
\label{def-pctl}
   Let $\AP$ be a set of atomic propositions. The syntax of PCTL state and path formulae is defined by the following abstract syntax equations:
   \[
   \begin{array}{lcl}
      \varphi & ~~::=~~ & a \mid \neg \vp \mid \vp_1 \wedge \vp_2 \mid P(\Phi) \bowtie r\\
      \Phi & ::= &\opx \vp \mid \vp_1 \opu \varphi_2 \mid \vp_1 \opu^{k} \varphi_2
   \end{array} 
   \]    
   Here, $a \in \AP$, ${\bowtie} \in \{{\geq}, {>}, {\leq},{<},{=},{\neq}\}$, $r \in [0,1]$ is a rational constant, and $k \in \N$.
\end{definition}

  
A \emph{Markov chain} is a triple $M = (S,\Prob,v)$, where $S$ is a finite or countably infinite set of \emph{states}, \mbox{$\Prob \colon S \times S \rightarrow [0,1]$} is a probability matrix, 
    and \mbox{$v \colon S \rightarrow 2^{\AP}$} is a \emph{valuation}. 
For $s,t \in S$, we say that $t$ is an \emph{immediate successor} of $s$ if \mbox{$\Prob(s,t) > 0$}.
A \emph{path} in $M$ is a finite sequence $w = s_0, \ldots ,s_n$ of states where $n \geq 0$ and 
\mbox{$\Prob(s_i,s_{i+1}) > 0$} for all $i <n$. We say that $t$ is \emph{reachable} from $s$ if there is a path from~$s$ to~$t$.
A \emph{run} in $M$ is an infinite sequence $\pi = s_0, s_1, \ldots$ of states such that every finite prefix of $\pi$ is a path in $M$. We also use $\pi(i)$ to denote the state $s_i$ of~$\pi$. 
 


For every path $w = s_0, \ldots ,s_n$, let $\run(w)$ be the set of all runs starting with~$w$, and let $\m(\run(w)) = \prod_{i=0}^{n-1} \Prob(s_i,s_{i+1})$. To every state $s$, we associate the probability space $(\run(s),\F_{s},\m_{s})$, where $\F_{s}$ is the \mbox{$\sigma$-field} generated by all $\run(w)$ where $w$ starts in $s$, and $\m_{s}$ is the unique probability measure obtained by extending $\m$ in the standard way (see, e.g., \cite{Billingsley:book}). 
The \emph{validity} of a PCTL state/path formula for a given state/run of $M$ is defined inductively as follows:
\[
\begin{array}{lcl}
  s \models a & \mbox{iff} & a \in v(s),\\
  s \models \neg\varphi & \mbox{iff} & s \not\models \varphi,\\
  s \models \vp_1 \wedge \vp_2 &  \mbox{iff} & s \models \vp_1 \mbox{ and } s \models \vp_2,\\
  s \models P(\Phi) \,{\bowtie}\, r  &  \mbox{iff} &  \m_{s}(\{ \pi \in \run(s) \mid \pi \models \Phi \}) \bowtie r,\\[1ex]
  \pi \models \opx \vp  &  \mbox{iff} & \pi(1) \models \vp \mbox{ for some } i \in \N,\\
\pi \models \vp_1 \opu \varphi_2 &  \mbox{iff} & \mbox{there is } j\geq 0 \mbox{ such that }
\pi(j) \models \vp_2
  \mbox{ and }
    \pi(i) \models \vp_1 \mbox{ for all } 0\leq i < j,\\
  \pi \models \vp_1 \opu^k \varphi_2 &  \mbox{iff} & \mbox{there is } 0 \leq j\leq k \mbox{ such that } \pi(j) \models \vp_2
     \mbox{ and }
   \pi(i) \models \vp_1 \mbox{ for all } 0\leq i < j.\\
\end{array} 
\]

We say that $M$ is a \emph{model} of $\varphi$ if $s \models \varphi$ for some state $s$ of $M$. The \emph{PCTL satisfiability problem} is the question of whether a given PCTL formula has a model.

The formulae $\textit{true},\textit{false}$ and the other Boolean connectives are defined using $\neg$ and $\wedge$ in the standard way. In the following, we abbreviate a formula of the form $P(\Phi) \bowtie r$ by omitting $P$ and adjoining the probability constraint directly to the topmost path operator of $\Phi$. For example, we write $\opx_{{=}1} \varphi$ instead of $P(\opx\varphi) = 1$. 
We also use $\opf_{\bowtie r} \vp$ and $\opf_{\bowtie r}^k \vp$ to abbreviate the formulae $\textit{true} \opu_{\bowtie r} \vp$ and $\textit{true} \opu^k_{\bowtie r} \vp$, respectively. Furthermore, $\opg_{\bowtie r} \vp$ abbreviates $\opf_{\not\bowtie r} \neg \vp$. Hence, $s \models \opg_{\bowtie r} \varphi$ iff 
$\m_{s}(\{ \pi \in \run(s) \mid \pi(i) \models \varphi \mbox{ for all } i \geq 0\}) \bowtie r$.
\smallskip

\subsection{Non-Deterministic $k$-Counter Machines}
%
Our results are obtained by constructing a PCTL formula encoding the existence of a recurrent computation in a given non-deterministic two-counter machine. However, the standard Minsky machines \cite{Minsky:book} are not particularly convenient for this purpose. More concretely, since every instruction modifies only \emph{one} of the counters, a faithful simulation requires a mechanism for ``copying'' the other counter value to the next configuration. In other words, we need to implement not only the test-for-zero, increment, and decrement instructions, but also the `no-change' operation. 

To avoid the need for implementing the `no-change' operation, we introduce a special variant of non-deterministic counter machines where every instruction performs an operation on \emph{all} counters simultaneously. 
Let $d \geq 1$. A non-deterministic $d$-counter machine $\M$ is a finite sequence $\Ins_1, \ldots, \Ins_m$ of instructions operating over non-negative counters $C_1,\ldots,C_d$, where every $\Ins_\ell$ takes the form
\begin{equation}
   \langle C_k{=}0\, {?}\ Z \mbox{ : } P\rangle:\ \update_1, \ldots, \update_d
   \label{M-ins}
\end{equation}
where $k \in \{1,\ldots,d\}$, $Z,P \subseteq \{1,\ldots,m\}$ are non-empty subsets of target instruction indexes, and $\update_1,\ldots,\update_d \in \{\Dec,\Inc\}$ are counter updates.  $\M$ is \emph{deterministic} if $Z$ and $P$ are singletons in all $\Ins_\ell$. We use $C_k^\ell$, $Z^\ell$, $P^\ell$, and $\update_1^\ell, \ldots,\update_d^\ell$ to denote the respective elements of $\Ins_\ell$.

The instruction~\eqref{M-ins} is executed as follows. First, the counter $C_k$ is tested for zero, and a successor label $\ell$ is chosen non-deterministically from either $Z$ or $P$, depending on whether the test succeeds or not, respectively. Then, $\update_1,\ldots,\update_d$ simultaneously update the current values of $C_1,\ldots,C_d$. Here, $\Inc$ increments the counter by one, and $\Dec$ either decrements the counter by one or leaves the counter unchanged, depending on whether its current value is positive or zero, respectively. After that, the computation continues with $\Ins_\ell$. Note that the zero test is used \emph{only} to determine the next instruction label.

More precisely, a \emph{configuration} of $\M$ is a tuple of the form $(\ell,c_1,\ldots,c_d)$ where $\ell \in \{1,\ldots,m\}$ is the current instruction index, and $c_1,\ldots,c_d \in \N$ are the current values of $C_1,\ldots,C_d$. A configuration $(\ell',c_1',\ldots, c_d')$ is a \emph{successor} of $(\ell,c_1,\ldots,c_d)$, written $(\ell,c_1,\ldots,c_d) \mapsto (\ell',c_1',\ldots,c_d')$, if the following conditions are satisfied, where
$\Ins_\ell ~\equiv~ \langle C_k{=}0\, {?}\ Z \mbox{ : } P \rangle:\ \update_1,\ldots \update_d$.
\begin{itemize}
  \item If $c_k = 0$ then $\ell' \in Z$; otherwise $\ell' \in P$.
  \item For all $j \in \{1,\ldots,d\}$ we have that $c_j'$ is equal to either $c_j {+} 1$ or $\max\{0,c_j{-}1\}$, depending on whether $\update_j$ is $\Inc$ or $\Dec$, respectively.  
\end{itemize}
A \emph{computation} of $\M$ is an infinite sequence 
$\conf_0,\conf_1,\ldots$ of configurations such that $\conf_0 = (1,0,\ldots,0)$ and $\conf_i \mapsto \conf_{i+1}$ for all $i \in \N$. A computation is \emph{bounded} if it contains only finitely many pairwise different configurations, and \emph{$\tau$-recurrent}, where $\tau \subseteq \{1,\ldots,m\}$, if it contains infinitely many configurations of the form $(i,c_1,\ldots,c_d)$ where $i \in \tau$.
 
In the next proposition, $\Sigma_1^0$ and $\Sigma_1^1$ denote the corresponding levels in the arithmetical and the analytical hierarchies, respectively. A proof is obtained by simulating the standard Minsky machines, see \fp.

\begin{restatable}{proposition}{Minsky}
\label{prop-twocounter}
   The problem of whether a given non-deterministic two-counter machine $\M$ has a 
   $\tau$-recurrent computation (for a given $\tau$) is $\Sigma_1^1$-hard. Furthermore, the problem of whether a given deterministic two-counter machine 
   $\M$ has a bounded computation is $\Sigma_1^0$-hard. 
\end{restatable}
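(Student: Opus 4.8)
The plan is to reduce from the corresponding problems for \emph{standard} (Minsky) counter machines. I would rely on two classical facts: (i) it is classical that deciding whether a nondeterministic two-counter Minsky machine has an infinite run visiting a designated control state infinitely often is $\Sigma_1^1$-hard (indeed $\Sigma_1^1$-complete, by a straightforward reduction from the same problem for nondeterministic Turing machines, which are simulated by two-counter machines \cite{Minsky:book}); and (ii) the halting problem for deterministic two-counter Minsky machines is $\Sigma_1^0$-hard \cite{Minsky:book}, where, by first adjoining a ``step counter'' incremented on every step and then re-encoding three counters into two, one may additionally assume the machine visits infinitely many pairwise distinct configurations whenever it does not halt. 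Given such a standard machine $\M_0$, the core task is to construct---effectively---a machine $\M$ in our model, still over \emph{two} counters, that faithfully simulates $\M_0$. The one genuine difficulty is that every instruction of our model is forced to act on \emph{every} counter, whereas a standard Minsky instruction leaves one counter untouched; so I must implement a ``no-change'' operation within the model.

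First I would fix the representation: a standard counter value $c$ is encoded by the doubled value $2c$, and each standard instruction of $\M_0$ is simulated by \emph{two} consecutive instructions of $\M$ (two ``half-steps''). The update applied to a counter over the two half-steps is $(\Inc,\Inc)$ if the counter is incremented, $(\Dec,\Dec)$ if it is the counter being (conditionally) decremented, and $(\Inc,\Dec)$ if it must be left unchanged. This is consistent because $\Dec$ acts as the identity on $0$: on a zero counter $(\Dec,\Dec)$ leaves the value at $0$---exactly the behaviour of a conditional decrement on a zero counter---while on doubled value $2c>0$ it gives $2(c-1)$, and $(\Inc,\Dec)$ returns $2c$ to $2c$ in all cases. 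The zero test of the simulated instruction is performed by the first half-step on the doubled counter (which is $0$ iff the standard counter is), and the nondeterministic choice of a successor label from $Z$ or $P$ is copied into the successor sets of that first half-step; the finitely many ``second-half'' instructions merely carry the chosen target and, after their update, hand control to the first half-step for that target. I would then verify the invariant that at the ``checkpoint'' states (the first half-steps) a configuration $(\ell,2c_1,2c_2)$ of $\M$ encodes exactly the configuration $(\ell,c_1,c_2)$ of $\M_0$, yielding a bijection between the infinite computations of $\M$ and of $\M_0$---the latter extended, whenever $\M_0$ halts, by a fixed self-loop of $\M$ that decrements both counters to $0$ and then idles.

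With the simulation in hand, the two parts are immediate. For the recurrence part I take $\M_0$ nondeterministic with designated state $q$, set $\tau$ to the singleton consisting of the checkpoint state of $\M$ for $q$, and note that $\M$ has a $\tau$-recurrent computation iff $\M_0$ has an infinite run through $q$ infinitely often; $\Sigma_1^1$-hardness then follows from fact (i). For the boundedness part I take $\M_0$ deterministic with the extra property of fact (ii), so that $\M_0$ halts iff it has finitely many reachable configurations; then $\M$ is deterministic and, by the invariant, has finitely many reachable configurations iff $\M_0$ halts (if $\M_0$ halts, $\M$ reaches its idle self-loop after finitely many steps having seen only finitely many configurations; otherwise $\M_0$, and hence $\M$, visits infinitely many distinct ones). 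Since a deterministic computation is bounded iff it visits finitely many configurations, $\M$ has a bounded computation iff $\M_0$ halts, and $\Sigma_1^0$-hardness follows from fact (ii). Both reductions are clearly effective.

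The main obstacle is exactly the ``no-change'' operation, and the point that makes the construction go through is the observation that the model's $\Dec$ is already the identity on $0$: it therefore doubles as the first half-step of a decrement \emph{and} as a no-op on a counter that is zero, so only a genuine no-op on a possibly-positive counter requires the two-step $(\Inc,\Dec)$ pattern---which is affordable precisely because each standard step is expanded into two.
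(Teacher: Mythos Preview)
Your proposal is correct, and it follows the same overall plan as the paper: reduce from the classical $\Sigma_1^1$-hardness of recurrent reachability and $\Sigma_1^0$-hardness of boundedness for standard Minsky machines, where the only real work is implementing the ``leave one counter unchanged'' operation in a model that forces every instruction to touch every counter. The mechanisms differ, however. The paper keeps the counters at their natural values and lets the \emph{inactive} counter drift by alternately applying $\Inc$ and $\Dec$ across consecutive simulated steps; when the active counter switches, an auxiliary instruction inserts a compensating $\Dec$, yielding $3m$ instructions indexed by the original label and the ``phase'' of the drift. Your construction instead doubles every counter value and expands each Minsky step into two half-steps, so that $(\Inc,\Dec)$ realises a genuine no-op, $(\Inc,\Inc)$ realises $+2$, and $(\Dec,\Dec)$ realises $-2$ (or no-op at zero). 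Your encoding is arguably cleaner locally---no bookkeeping about which counter was last active---at the cost of the doubling invariant. For boundedness, the paper invokes directly the $\Sigma_1^0$-completeness of boundedness for deterministic Minsky machines, whereas you go via halting plus a step counter and a $3\to 2$ re-encoding; both are standard, but the paper's route avoids the extra re-encoding layer. Either argument establishes the proposition.
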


\section{Simulating Non-Deterministic Two-Counter Machines by PCTL}
\label{sec-simulate}

In this section, we give a high-level description of our undecidability proof, together with some preliminary intuition about the underlying techniques. The presented ideas are elaborated in subsequent sections.

Let $\M$ be a non-deterministic two-counter machine with~$m$ instructions. Our aim is to construct a PCTL formula $\vp_\M$ such that 
\begin{itemize}
    \item every model of $\vp_\M$ contains a run representing a recurrent computation of~$\M$;
    \item for every recurrent computation of $\M$ there exists a model of $\vp_\M$ representing the computation.
\end{itemize}
This entails the $\Sigma_1^1$ hardness of PCTL satisfiability. 

In the following sections, we indicate how to construct $\varphi_\M$ for a given non-deterministic \emph{one-counter} machine~$\M$. The restriction to one counter  leads to simpler notation without omitting any crucial ideas. The extension to non-deterministic \emph{two-counter} machines is relatively simple (see Section~\ref{sec-twocounter}).
For the rest of this section, we fix a non-deterministic one-counter machine~$\M$ with $m$ instructions.

\subsection{Representing the configurations of $\M$}
\label{sec-conf-repre}
The first step towards constructing the formula $\vp_\M$ is finding a way of representing configurations of $\M$ by ``PCTL-friendly'' properties of states in Markov chains.

More concretely, a state $s$ representing a configuration $(\ell,c)$ must encode the index $\ell$ and the non-negative integer $c$. The  $\ell$ is encoded by a dedicated atomic proposition $\at_\ell$ valid in~$s$ (for technical reasons, $\at_\ell$ is actually a disjunction of several propositions). The way of encoding $c$ is more elaborate. Intuitively, the only unbounded information associated to $s$ that is ``visible'' to PCTL is the probability of satisfying certain path formulae in~$s$. The value of $c$ is encoded by a \emph{pair} of probabilities $\cv[s] \in (0,1)^2$ of satisfying suitable path formulae $\Phi_s$ and $\Psi_s$ in $s$. As we shall see, the formulae $\Phi_s$ and $\Psi_s$ depend on the state~$s$, but the only information about~$s$ used to determine $\Phi_s$ and $\Psi_s$ is the (in)validity of certain atomic propositions in~$s$. Consequently, there are only \emph{finitely many} PCTL formulae used to encode the counter values. The reason why $c$ cannot be represented just by a \emph{single} probability of satisfying some appropriate path formula is indicated in Section~\ref{sec-sim-step}.

It remains to explain which pair of probabilities encodes a given integer. 
Zero is encoded by a fixed (suitably chosen) pair $\vec{z}$, and if $n$ is encoded by $\vec{v}$, then $n+1$ is encoded by $\INC(\vec{v})$, where $\INC : (0,1)^2 \to (0,1)^2$ is a suitable function defined in Section~\ref{sec-geom}. 

To sum up, a state $s$ encodes a configuration $(\ell,c)$ iff $s \models \at_\ell$ and $\cv[s] =  \INC^{c}(\vec{z})$ (recall that $\INC^{c}$ denotes the $c$-fold composition of $\INC$). 

\subsection{Simulating a computational step of $\M$}
\label{sec-sim-step}

The crucial part of our construction is the simulation of one computational step of~$\M$. Let $\ell \in \{1,\ldots,m\}$ where $\Ins_\ell ~\equiv~ \langle C{=}0\, {?}\ Z \mbox{ : } P \rangle:\ \update$. We show that for every $\ell' \in Z \cup P$, there exists a formula $\Step_{\ell,\ell'}$ satisfying the following property: 

\begin{prop}
\label{prop-step}
For every computational step of the form $(\ell,c) \mapsto (\ell',c')$ and every state $s$ representing $(\ell,c)$ we have the following: If $s \models \Step_{\ell,\ell'}$ then 
\begin{enumerate}
    \item[(i)] there exist $\pi \in \run(s)$ and $j\geq 1$ such that $\pi(j)$ represents the configuration $(\ell',c')$;
    \item[(ii)] for every $\pi \in \run(s)$ and every $j \geq 1$ such that $\pi(j)$ represents a configuration of $\M$, there exists $1 \leq j' \leq j$ such that $\pi(j')$ represents the configuration $(\ell',c')$. 
\end{enumerate}
\end{prop}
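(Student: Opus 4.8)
The plan is to construct $\Step_{\ell,\ell'}$ as a conjunction of several components, each responsible for a specific aspect of the simulation, and then verify properties (i) and (ii) separately. The overall skeleton of $\Step_{\ell,\ell'}$ should assert the existence of a run segment from $s$ to a ``target'' state, while simultaneously constraining \emph{all} such target states to carry the correct information. Concretely, I would first introduce an auxiliary proposition (or a combination of propositions) marking the successor configuration, together with the proposition $\at_{\ell'}$, and require via an $\opf$-type subformula of the form $\opf_{{>}0}(\at_{\ell'} \wedge \textit{successor-marker})$ that some run reaches such a state (this yields the witness for (i)), and via an $\opf_{{=}1}$ or $\opg$-type subformula that every run encountering \emph{any} configuration-state must first pass through a correctly-marked $\ell'$-state (this yields (ii)). The membership of $\ell'$ in $Z$ versus $P$ is handled by observing that $s$ represents $(\ell,c)$, so the pair $\cv[s]=\INC^c(\vec z)$ already determines whether $c=0$; I would include a disjunct that, depending on whether $\cv[s]=\vec z$ or not, forces the marker to be placed only on states satisfying the appropriate $\at_{\ell'}$.

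The heart of the argument is the \emph{value-update gadget}: given that $s$ encodes $c$ via $\cv[s]=\INC^c(\vec z)$, I must ensure that the target state $t$ encodes $c'$, i.e. $\cv[t]=\INC^{c'}(\vec z)$, where $c'$ is obtained from $c$ by applying $\update \in \{\Inc,\Dec\}$. For the $\Inc$ case this amounts to forcing $\cv[t]=\INC(\cv[s])$, which by definition of $\INC$ can be expressed by PCTL (in)equalities relating the probabilities $\gamma[s]_1,\gamma[s]_2$ at $s$ to $\gamma[t]_1,\gamma[t]_2$ at $t$ — the function $\INC$ is designed precisely so these relations are PCTL-expressible over the $\opf,\opg$-fragment. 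For the $\Dec$ case I must handle the two sub-cases $c>0$ (so $c'=c-1$, requiring $\cv[t]=\INC^{-1}(\cv[s])$, i.e. $\cv[s]=\INC(\cv[t])$) and $c=0$ (so $c'=0$, requiring $\cv[t]=\vec z=\cv[s]$); again these are expressed by a disjunction keyed on whether $\cv[s]=\vec z$. The subtlety flagged in Section~\ref{sec-sim-step} — the reason a single probability does not suffice — is exactly that the constraint ``$\cv[t]$ is the unique $\INC$-predecessor of $\cv[s]$'' cannot be pinned down with one coordinate: the relevant PCTL-expressible conditions only cut out the correct point when one intersects two such conditions, which is where Theorem~\ref{thm-area} enters, guaranteeing that any set of candidate target-pairs closed under the relevant operation lies in the convex polytope spanned by $\{\INC^n(\vec z)\}$, so that the extremal/boundary constraints force the candidate to be the intended point.

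For property (i), I would simply exhibit the model fragment: starting from $s$, append the gadget states realizing the chosen branch ($Z$ or $P$ as dictated by $c$), ending in a state $t$ with $\at_{\ell'}$ and $\cv[t]=\INC^{c'}(\vec z)$; since $s\models\Step_{\ell,\ell'}$ only constrains the probabilities but does not forbid such a branch, the witnessing run and index $j$ exist. For property (ii), I would argue contrapositively along an arbitrary run $\pi$: suppose $\pi(j)$ represents some configuration of $\M$ (so $\pi(j)\models\at_{\ell''}$ for some $\ell''$ and $\cv[\pi(j)]$ is some $\INC^n(\vec z)$). The $\opf_{{=}1}$/$\opg$-component of $\Step_{\ell,\ell'}$, satisfied at $s$, forces that almost every run — and by a topological/support argument, \emph{every} run reaching a configuration-state — must have visited a state carrying the successor-marker and $\at_{\ell'}$ before reaching $\pi(j)$; combined with the value-update gadget constraints, that earlier state $\pi(j')$ must satisfy $\cv[\pi(j')]=\INC^{c'}(\vec z)$, hence represents $(\ell',c')$. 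The main obstacle I anticipate is making the value-update constraints \emph{tight}: the PCTL-expressible inequalities naturally describe half-spaces or polytope faces, not single points, so the argument must combine the two-coordinate encoding with Theorem~\ref{thm-area} to show that the only pair simultaneously satisfying all imposed constraints and lying in the reachable/closed set is the intended $\INC^{c'}(\vec z)$; getting the closure hypothesis of Theorem~\ref{thm-area} to hold for the set of target-pairs across all runs — rather than for a single run — is the delicate point, and it is presumably why the marker propositions and the $\opf_{{=}1}$ quantification over \emph{all} configuration-reaching runs are set up the way they are.
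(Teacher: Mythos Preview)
Your high-level plan tracks the paper's outline, but the core technical step has a genuine gap. You write that forcing $\cv[t]=\INC(\cv[s])$ ``can be expressed by PCTL (in)equalities relating the probabilities \ldots\ at $s$ to \ldots\ at $t$''. This is exactly what \emph{cannot} be done, and is the obstacle the paper isolates in Section~\ref{sec-sim-step}: any probability constraint imposed at $s$ sees only an aggregate over all successors, so the most $\Step_{\ell,\ell'}$ can enforce is that $\INC(\cv[s])$ (resp.\ $\DEC(\cv[s])$) equals a \emph{convex combination} $\sum_{u\in U}(p_u/P)\,\cv[u]$ over the whole set $U$ of candidate successor states, never that an individual $\cv[t]$ has the right value. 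Your description of the resolution---``two such conditions'' whose intersection ``cuts out the correct point''---misreads the mechanism: the two coordinates are not two half-space constraints meeting in a point, but are needed so that $\INC^{c'}(\vec z)$ is a \emph{vertex} of the polytope $\Area(\vec z)$; then a convex combination of points of $\Area(\vec z)$ equalling a vertex forces every summand to be that vertex.

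A second, related gap: the closure hypothesis of Theorem~\ref{thm-area} (that $\DEC(\vec v)$ is a convex combination of elements of $T$ for every $\vec v\in T\smallsetminus\Area(\vec z)$) is \emph{not} secured by $\Step_{\ell,\ell'}$ alone. In the paper it is enforced by a separate, global clause $\opg_{=1}\,\Decrement$ inside $\Init$, applied to every relevant state of the model---including the auxiliary $R_i$-states that $\Step_{\ell,\ell'}$ spawns but does not itself constrain. Without this global closure, $T$ need not lie in $\Area(\vec z)$ and the vertex argument collapses; this is why Property~\ref{prop-step} is only claimed for states reachable from one satisfying the full $\varphi_\M$. (A smaller issue: your argument for~(i), ``append the gadget states'', belongs to the model-construction direction of Proposition~\ref{prop-model}, not here; for~(i) one must argue that in \emph{any} model the set $U$ is nonempty---which the paper gets from the $\opu_{=1}$ subformula---and then apply the vertex argument to each $u\in U$.)
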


The construction of $\Step_{\ell,\ell'}$ is perhaps the most tricky part of our proof. At first glance, the above features~(i) and~(ii) seem \emph{unachievable} because of the following fundamental obstacle. 

Suppose that $s$ represents $(\ell,c)$ and $s \models \Step_{\ell,\ell'}$. For simplicity, let us assume that the underlying Markov chain of~$s$ has been unfolded into an infinite tree. According to~(i), there exist $\pi \in \run(s)$ and $j\geq 1$ such that $\pi(j)$ represents the configuration $(\ell',c')$. For brevity, let us denote the states $\pi(j{-}1)$ and $\pi(j)$ by $u$ and $t$, respectively.  Since $t$ represents $(\ell',c')$, we have that $\cv[t] = \INC^{c'}(\vec{z})$ (see Section~\ref{sec-conf-repre}). Due to Property~\ref{prop-step},~(i) and~(ii) must hold in an arbitrary modification $\M'$ of $\M$ such that $s$ still represents $(\ell,c)$ and satisfies $\Step_{\ell,\ell'}$ in $\M'$.  Now consider a modification where the immediate successor $t$ of $u$ is replaced with $t_1,\ldots,t_n$ so that
\begin{enumerate}
    \item[(a)] every $t_i$ satisfies the same set of atomic propositions as $t$,
    \item[(b)] $\Prob(u,t) = \sum_{i=1}^n \Prob(u,t_i)$,
    \item[(c)] $\Prob(u,t) \cdot \m(\Phi,t) = \sum_{i=1}^{n} \Prob(u,t_i) \cdot \m(\Phi,t_i)$~~~ for every path subformula $\Phi$ of $\Step_{\ell,\ell'}$. Here, $\m(\Phi,q)$ is the probability of all runs initiated in $q$ satisfying $\Phi$.
\end{enumerate}
Since this modification does not change the value of $\m(\Phi,s)$ for any path subformula $\Phi$ of $\Step_{\ell,\ell'}$, we have that  $s \models \Step_{\ell,\ell'}$ after the modification. Note that~(c) implies $\Prob(u,t) \cdot \cv[t] = \sum_{i=1}^{n} \Prob(u,t_i) \cdot \cv[t_i]$. However, if we choose $t_1,\ldots,t_n$ so that every $\cv[t_i]$ is \emph{different} from $\cv[t]$, we may spoil both~(i) and~(ii).

Since spoiling the envisioned functionality of $\Step_{\ell,\ell'}$ is so simple, one is even tempted to conclude the \emph{non-existence} of $\Step_{\ell,\ell'}$. Indeed, no suitable $\Step_{\ell,\ell'}$ can exist without some extra restriction preventing the use of \emph{arbitrary} $t_1,\ldots,t_n$ satisfying (a)--(c). We implement such a restriction with the help of basic convex geometry. First, observe that the condition~(c) implies that $\cv[t]$ is a \emph{convex combination} of $\cv[t_1],\ldots,\cv[t_n]$. We design the function~$\INC$
so that the probability pairs representing non-negative integers are \emph{vertices} (i.e., faces of dimension~$0$) of a certain convex polytope. Furthermore, we identify a certain closure property preventing the use of $t_i$ such that $\cv[t_i]$ is \emph{outside} the polytope. Thus, whenever $\cv[t]$ encodes a non-negative integer and $\cv[t]$ is a positive convex combination of $\cv[t_1],\ldots,\cv[t_n]$, we can conclude that every $\cv[t_i]$ is inside the polytope, and consequently every $\cv[t_i]$ is \emph{equal} to $\cv[t]$ because $\cv[t]$ is a vertex of the polytope. Note that this would not be achievable if $\cv[t]$ was a one-dimensional vector, i.e., if the counter values were encoded by the probability of satisfying a single path formula.

The details are given in Section~\ref{sec-geom}. The design of $\INC$ is also influenced by the need of ``implementing'' $\INC$ and its inverse $\DEC$ in PCTL. The implementation requires additional non-trivial tricks described in Section~\ref{sec-formula}. 

\subsection{Constructing the formula $\varphi_\M$}

The formula $\varphi_{\M}$ expressing the existence of a recurrent computation of $\M$ looks as follows:
\begin{equation*}\textstyle
    \varphi_{\M} ~~\equiv~~ \Struct ~\wedge~ \Init ~\wedge~ 
       \opg_{=1} (\bigwedge_{\ell =1}^m (\at_\ell \Rightarrow \Simulate_\ell)) 
       ~\wedge~ \Rec
\end{equation*}
where
\begin{eqnarray*}
    \Rec & \equiv & \textstyle\opg_{=1}\big(\bigwedge_{\ell=1}^m (\at_\ell \Rightarrow \opf_{>0} (\bigvee_{\ell' \in \tau}\at_{\ell'}))\big)\\
    \Simulate_\ell & \equiv & \textstyle(\Zero \Rightarrow \bigvee_{\ell' \in Z^\ell} \Step_{\ell,\ell'}) ~~\wedge~~
    (\neg\Zero \Rightarrow \bigvee_{\ell' \in P^\ell} \Step_{\ell,\ell'})
\end{eqnarray*}

The subformula $\Struct$ enforces certain ``structural properties'' of the model. $\Init$ says that the state satisfying $\varphi_{\M}$ represents the initial configuration $(1,0)$. The subformula $\Simulate_\ell$
says that if the counter $C$ currently holds zero/positive value, then $\Step_{\ell,\ell'}$ holds for some index $\ell'$ of $Z^{\ell}$ or $P^{\ell}$, respectively. $\Rec$ enforces the existence of a run representing a $\tau$-recurrent computation of~$\M$. See Section~\ref{sec-formula} for more details.

\section{Representing Non-Negative Integers by Points in $(0,1)^2$}
\label{sec-geom}

In this section, we show how to represent a non-negative counter value by a pair of quantities, and we design functions modeling the increment and decrement operations on the counter. Our starting point are the results invented for \emph{finite} Markov chains in \cite{ChK:PCTL-finite-sat-LICS}. Our functions $\INC$ and $\DEC$ are modifications of the functions $\sigma$ and $\tau$ of \cite{ChK:PCTL-finite-sat-LICS}, and the set of points determined by $\INC$ 
has a similar structure as the set of points determined by~$\sigma$ (see Fig.~\ref{fig-area}~left). The crucial novel ingredient is Theorem~\ref{thm-area}, which is applicable to \emph{arbitrary} Markov chains, including infinite-state ones. 

As we already indicated in Section~\ref{sec-simulate}, we aim to define suitable $\vec{z} \in (0,1)^2$ and  $\INC : (0,1)^2 \to (0,1)^2$ such that every $n \in \N$  is encoded by $\INC^n(\vec{z})$, where $\INC^n$ is the $n$-fold composition of $\INC$. The points $\vec{z}$, $\INC(\vec{z})$, $\INC^2(\vec{z}),\ldots$ should be vertices of a certain convex polytope, and both $\INC$ and its inverse $\DEC$ should be ``expressible in PCTL''. Rougly speaking, the second condition means that if we have a state $s$ representing a configuration $(\ell,c_1,c_2)$, then there exists a PCTL formula enforcing the existence of a successor of $s$ encoding the values obtained from $c_1,c_2$ by performing the instructions $\update_1^\ell$ and $\update_2^\ell$. After many unsuccessful trials, we have identified a function satisfying these requirements. We put
\[
   \INC(\vec{v}) \ = \ \left(\frac{\lambda}{1-\vec{v}_1}, \frac{\vec{v}_2 \cdot \lambda}{1-\vec{v}_1}\right)
\]
where $\lambda \in (0,\frac{1}{4})$ is a fixed constant. In Section~\ref{sec-formula}, we set $\lambda$ to a specific value. All observations presented in this section are valid for an \emph{arbitrary} $\lambda \in (0,\frac{1}{4})$.

At first glance, $\INC$ is \emph{not} a good choice because it is not even a function $(0,1)^2 \to (0,1)^2$. However, there exists a \emph{subinterval} $I_\lambda \subseteq (0,1)$ such that $\INC : I_\lambda \times (0,1) \to I_\lambda \times (0,1)$. Furthermore, let
\[
   \DEC(\vec{v}) \ = \ \left(\frac{\vec{v}_1 - \lambda}{\vec{v}_1}, \frac{\vec{v}_2}{\vec{v}_1}\right)\,.
\]
The properties of $\INC$ and $\DEC$ are summarized in Lemma~\ref{lem-tausigma}. 

The \emph{slope} of a line $\Line$ in $\R^2$, denoted by $\slope(\Line)$, is defined in the standard way. For all $\vec{v},\vec{u} \in \R^2$ where $\vec{v}_1 \neq \vec{u}_1$, we use $\slope(\vec{v},\vec{u})$ to denote the slope of the line containing $\vec{v},\vec{u}$.

\begin{restatable}{lemma}{incfunction}
    \label{lem-tausigma}
    Let $\lambda \in (0,\frac{1}{4})$ and
    \(
   I_\lambda \ = \ \left(\frac{1-\sqrt{1-4\lambda}}{2},  \frac{1+\sqrt{1-4\lambda}}{2}\right)
    \). For every $\vec{v} \in I_\lambda \times [0,1]$ we have the following: 
    \begin{enumerate}
        \item[(a)] $\INC(\vec{v}) \in I_\lambda \times [0,1]$;
        \item[(b)] $\DEC(\INC(\vec{v})) = \vec{v}$;
        \item[(c)] $\INC(\vec{v})_1 < \vec{v}_1$ and $\INC(\vec{v})_2 \leq \vec{v}_2$; if $\vec{v}_2 > 0$, then $\INC(\vec{v})_2 < \vec{v}_2$; 
        \item[(d)] let $\vec{u} = (\INC^2(\vec{v})_1,0)$; then $\slope(\vec{u},\INC(\vec{v})) = \slope(\INC(\vec{v}),\vec{v})$;
        \item[(e)] let $\vec{u} = (\INC(\vec{v})_1,y)$ where $0 \leq y <  \INC(\vec{v})_2$. Then $\slope(\vec{u},\DEC(\vec{u})) < \slope(\INC(\vec{v}),\vec{v})$;
        \item[(f)] if $\vec{u}$ belongs to the line segment between $\INC^2(\vec{v})$ and $\INC(\vec{v})$, then $\DEC(\vec{u})$ belongs to the line segment between $\INC(\vec{v})$ and $\vec{v}$;
        \item[(g)] $\lim_{n \to \infty} \INC^n(\vec{v})_1 =  \frac{1-\sqrt{1-4\lambda}}{2}$.
    \end{enumerate}
\end{restatable}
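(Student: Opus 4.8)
The plan is to reduce every item to a handful of elementary facts about $\alpha := \frac{1-\sqrt{1-4\lambda}}{2}$ and $\beta := \frac{1+\sqrt{1-4\lambda}}{2}$, for which $I_\lambda = (\alpha,\beta)$. Since $0<\lambda<\frac14$, these are the two distinct real roots of $x^2-x+\lambda$, and both lie in $(0,1)$; hence $\alpha+\beta=1$ and $\alpha\beta=\lambda$. Writing $g(x):=\frac{\lambda}{1-x}$ for the map that $\INC$ induces on the first coordinate, so $\INC(\vec{v})=\big(g(\vec{v}_1),\,\vec{v}_2\,g(\vec{v}_1)\big)$, the facts I would use are: $g$ is increasing on $(-\infty,1)$; $g(\alpha)=\alpha$ and $g(\beta)=\beta$ (equivalently $\alpha(1-\alpha)=\beta(1-\beta)=\lambda$); and
\[
   g(x)-x \;=\; \frac{\lambda-x+x^2}{1-x} \;=\; \frac{(x-\alpha)(x-\beta)}{1-x}\,,
\]
which is strictly negative on $(\alpha,\beta)$. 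An increasing map with fixed points $\alpha$ and $\beta$ sends $(\alpha,\beta)$ onto itself, and with $g(\vec{v}_1)\in(0,1)$ and $\vec{v}_2\in[0,1]$ this gives~(a). The displayed identity is the first inequality of~(c), and $\INC(\vec{v})_2=\vec{v}_2\,g(\vec{v}_1)$ with $g(\vec{v}_1)<1$ gives the rest of~(c). Part~(b) is a one-line substitution: $\DEC(\INC(\vec{v}))_1 = 1-\lambda/g(\vec{v}_1) = 1-(1-\vec{v}_1) = \vec{v}_1$ and $\DEC(\INC(\vec{v}))_2 = \vec{v}_2\,g(\vec{v}_1)/g(\vec{v}_1) = \vec{v}_2$.

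For~(d) and~(e) I would fix $a:=\vec{v}_1$, $b:=\vec{v}_2$ and abbreviate $a':=g(a)$, $a'':=g(a')$, $b':=b\,g(a)$; each claim then becomes an identity or inequality in $a,b,\lambda$ once the nonzero denominators $1-a$, $1-a'$, $a'$ are cleared. For~(d), the equality $\slope\big((a'',0),(a',b')\big)=\slope\big((a',b'),(a,b)\big)$ reduces, after substituting $a'=\lambda/(1-a)$, to the single identity $(1-a)(1-a')=1-\lambda-a$, which holds since $1-a'=(1-a-\lambda)/(1-a)$. For~(e), one observes that $\DEC(a',y)_1 = 1-\lambda/a' = a$ is independent of $y$ while $\DEC(a',y)_2 = y(1-a)/\lambda$ is linear in $y$ and equals $b$ at $y=b'$; hence $\slope\big((a',y),\DEC(a',y)\big) = (y/b')\cdot\slope\big((a',b'),(a,b)\big)$, and since by~(c) we have $a'<a$ and, when $b>0$, also $b'<b$ — so the latter slope is strictly positive — the hypothesis $0\le y<b'$ forces the strict inequality (the case $b=0$ being vacuous, as then $b'=0$).

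The step I expect to be the main obstacle is~(f). The cleanest route is to observe that $\DEC$ is the restriction to the affine plane of the projective transformation $[X:Y:W]\mapsto[X-\lambda W:Y:X]$, whose matrix has determinant $\lambda\neq 0$; hence $\DEC$ maps the line through $\INC^2(\vec{v})$ and $\INC(\vec{v})$ to a line, which by~(b) is the line through $\DEC(\INC^2(\vec{v}))=\INC(\vec{v})$ and $\DEC(\INC(\vec{v}))=\vec{v}$. It then remains only to exclude that the \emph{closed} segment is carried to the complementary arc through infinity; this follows because the point at infinity of the source line maps to an affine point with first coordinate $1$, which, as $1\notin I_\lambda$, lies outside the first-coordinate range $[a',a]$ of the target segment, so that image point goes to the arc and the segment must therefore go to the segment. (Without projective language one verifies line-preservation by a direct computation and then uses~(b) at the endpoints together with the monotonicity of $x\mapsto 1-\lambda/x$.) Item~(g) is then immediate: $\INC^n(\vec{v})_1 = g^n(\vec{v}_1)$ is strictly decreasing in $n$ by~(c) and bounded below by $\alpha$ (as $g$ is increasing and fixes $\alpha$), so it converges to a fixed point of $g$; being $\le\vec{v}_1<\beta$, the limit is $\alpha$. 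Besides~(f), the identity in~(d) is the one other place where the precise algebraic form of $\INC$ is essential — it is exactly what forces the points $\INC^n(\vec{z})$ to be vertices of a convex polygon — but its verification is routine once the substitution above is in place.
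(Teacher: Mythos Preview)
Your argument is correct. For items (a)--(e) and (g) you follow essentially the same computational route as the paper, though your organisation via the one-variable map $g(x)=\lambda/(1-x)$ and its fixed points $\alpha,\beta$ is tidier; in particular, your treatment of (g) as convergence of a monotone bounded sequence to a fixed point of a continuous map is cleaner than the paper's version, which passes through the Cauchy property and an auxiliary function $f(x)=(\lambda-x(1-x))/(1-x)$.

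The one place where you take a genuinely different route is (f). The paper proves the elementary identity
\[
   \DEC\big(\kappa\vec{x}+(1{-}\kappa)\vec{y}\big)=\kappa'\,\DEC(\vec{x})+(1{-}\kappa')\,\DEC(\vec{y}),
   \qquad \kappa'=\frac{\kappa\vec{x}_1}{\kappa\vec{x}_1+(1{-}\kappa)\vec{y}_1}\in[0,1],
\]
by direct substitution, which immediately gives that $\DEC$ carries the segment $[\INC^2(\vec{v}),\INC(\vec{v})]$ into $[\INC(\vec{v}),\vec{v}]$. Your projective argument (recognising $\DEC$ as the affine restriction of $[X{:}Y{:}W]\mapsto[X{-}\lambda W{:}Y{:}X]$ and tracking where the point at infinity lands) is correct and explains \emph{why} $\DEC$ is line-preserving, but it imports more machinery than is needed here; the paper's identity is a two-line check and also makes the convex-combination structure explicit, which is morally what Theorem~\ref{thm-area} later exploits. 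Either approach is fine.
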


Observe that by choosing a sufficiently small $\lambda>0$, the extreme points of $I_\lambda$ can be pushed to values arbitrarily close to $0$ and $1$. Let us fix some $\vec{z} \in I_\lambda \times (0,1)$. The structure of $\vec{z}, \INC(\vec{z}), \INC^2(\vec{z}),\ldots$ is shown in Fig.~\ref{fig-area}~(left), where the dotted lines illustrate the property of Lemma~\ref{lem-tausigma}~(d). Furthermore, Fig.~\ref{fig-area}~(left) also shows the convex polytope $\Area(\vec{z})$ whose boundaries are the vertical lines going through the points $(\frac{1-\sqrt{1-4\lambda}}{2},0)$ and~$\vec{z}$, the horizontal line going through the point $(0,1)$, and all line segments between the consecutive points $\INC^i(\vec{z})$ and $\INC^{i+1}(\vec{z})$. Note that $\Area(\vec{z})$ is indeed convex by Lemma~\ref{lem-tausigma}. Furthermore, each $\INC^i(\vec{z})$ is a \emph{vertex} (i.e., a \emph{face} of dimension~$0$) of $\Area(\vec{z})$.
In our proof of Theorem~\ref{thm-area}, we also need the following~lemma:
\begin{restatable}{lemma}{outlineseg}
    \label{lem-outlineseg}
    Let $\lambda \in (0,\frac{1}{4})$ and $\vec{z} \in I_\lambda \times (0,1)$.
    For all $\vec{v} \in I_\lambda \times [0,1] \smallsetminus \Area(\vec{z})$ and $n \in \N$ such that $\INC^{n+1}(\vec{z})_1 \leq \vec{v}_1 \leq \INC^{n}(\vec{z})_1$, there exists $\vec{u}$ such that  
    $\vec{u}_1 = \INC^{n+1}(\vec{z})_1$, $0 \leq \vec{u}_2 < \INC^{n+1}(\vec{z})_2$, and $\vec{v}$ belongs to the line segment between $\vec{u}$ and $\DEC(\vec{u})$.
\end{restatable}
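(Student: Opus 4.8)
The plan is to reduce the statement to a one‑parameter intermediate‑value argument along the vertical fibre over $\INC^{n+1}(\vec z)_1$. Abbreviate $p := \INC^{n+1}(\vec z)_1$ and $q := \INC^{n}(\vec z)_1$; by Lemma~\ref{lem-tausigma}(a),(c),(g) the sequence $(\INC^{i}(\vec z)_1)_{i\ge 0}$ is strictly decreasing, starts at $\vec z_1$, and converges to $\frac{1-\sqrt{1-4\lambda}}{2}$, so $\frac{1-\sqrt{1-4\lambda}}{2} < p < q \le \vec z_1$. Let $L$ denote the line segment between $\INC^{n+1}(\vec z)$ and $\INC^{n}(\vec z)$, whose $x$-range is exactly $[p,q]$, and let $\eta$ be the second coordinate of the unique point of $L$ with first coordinate $\vec v_1$. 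My first step is to observe that $\vec v$ must lie \emph{strictly below} $L$ at $x=\vec v_1$, i.e.\ $0\le \vec v_2 < \eta$. Indeed, $p \le \vec v_1 \le q$ places $\vec v_1$ inside the horizontal extent $[\frac{1-\sqrt{1-4\lambda}}{2},\vec z_1]$ of $\Area(\vec z)$, while $\vec v_2\le 1$ places $\vec v$ below its top edge $y=1$; hence the only reason for $\vec v\notin \Area(\vec z)$ is that $\vec v$ lies below the lower boundary of $\Area(\vec z)$, which over the subinterval $[p,q]$ coincides with $L$.

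Next I would analyse how $\DEC$ acts on the fibre $\{p\}\times\R$. Since the first coordinate of $\DEC(\vec w)$ depends only on $\vec w_1$, all points of this fibre are mapped by $\DEC$ to points with one and the same first coordinate, and evaluating at $\INC^{n+1}(\vec z)$ together with $\DEC(\INC^{n+1}(\vec z))=\INC^{n}(\vec z)$ (Lemma~\ref{lem-tausigma}(b)) identifies that coordinate as $q$; a one‑line computation from the definition of $\DEC$ then gives $\DEC\big((p,y)\big)=\big(q,\,y/p\big)$ for every $y$. Consequently, for each $y\in[0,\INC^{n+1}(\vec z)_2]$ the segment $S_y$ joining $(p,y)$ and $\DEC\big((p,y)\big)$ has $x$-range exactly $[p,q]\ni\vec v_1$, so it contains a unique point with first coordinate $\vec v_1$; call its second coordinate $g(y)$. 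Writing $\vec v_1=(1-t)p+tq$ with $t\in[0,1]$ and interpolating linearly along $S_y$ yields $g(y)=\big(1+t(1-p)/p\big)\,y$, so $g$ is linear and strictly increasing in $y$ (the coefficient is $\ge 1$ because $p<1$), with $g(0)=0$ and, since $S_{\INC^{n+1}(\vec z)_2}=L$, also $g\big(\INC^{n+1}(\vec z)_2\big)=\eta$. Hence $g$ restricts to a bijection of $[0,\INC^{n+1}(\vec z)_2)$ onto $[0,\eta)$.

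Finally, since $0\le\vec v_2<\eta$ by the first step, there is a (unique) $y^*\in[0,\INC^{n+1}(\vec z)_2)$ with $g(y^*)=\vec v_2$. Put $\vec u:=(p,y^*)=\big(\INC^{n+1}(\vec z)_1,\,y^*\big)$. Then $\vec u_1=\INC^{n+1}(\vec z)_1$ and $0\le\vec u_2<\INC^{n+1}(\vec z)_2$ by the choice of $y^*$, and by construction $\vec v=(\vec v_1,g(y^*))$ is the point of $S_{y^*}$ over $\vec v_1$, i.e.\ $\vec v$ lies on the line segment between $\vec u$ and $\DEC(\vec u)$. This is exactly the assertion of the lemma.

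The only point that needs genuine care is the geometric localisation in the first step — reading off from $\vec v_1\in[p,q]$ and $\vec v\notin\Area(\vec z)$ that $\vec v$ sits strictly below the \emph{particular} boundary segment $L$ (this relies on the explicit description of $\Area(\vec z)$ and on the monotonicity/limit facts in Lemma~\ref{lem-tausigma}). Everything else — the action of $\DEC$ on a vertical fibre and the linear interpolation defining $g$ — is routine calculation. A cosmetic variant is to deduce $g(y)<\eta$ for $y<\INC^{n+1}(\vec z)_2$ from the slope inequality of Lemma~\ref{lem-tausigma}(e) rather than from the monotonicity of $g$; this is not needed here.
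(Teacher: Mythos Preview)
Your argument is correct and, in fact, more complete than the paper's. Both proofs pick $\vec{u}_1=\INC^{n+1}(\vec z)_1$ and then determine $\vec{u}_2$ so that $\vec{v}$ lies on the line through $\vec{u}$ and $\DEC(\vec{u})$, but the executions differ. The paper proceeds purely algebraically: it equates $\slope(\vec{u},\DEC(\vec{u}))$ with $\slope(\vec{u},\vec{v})$ and solves, obtaining the closed form
\[
\vec{u}_2 \;=\; \frac{\vec{v}_2\bigl(\vec{u}_1(1-\vec{u}_1)-\lambda\bigr)}{\vec{v}_1-\vec{u}_1+\vec{u}_1(1-\vec{u}_1)-\lambda}\,,
\]
and then stops. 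You instead parametrise the family of segments $S_y$ over the vertical fibre, identify the linear map $y\mapsto g(y)$ giving the height of $S_y$ at $x=\vec{v}_1$, and invoke that $g$ is a bijection of $[0,\INC^{n+1}(\vec z)_2)$ onto $[0,\eta)$. What your route buys is that the two remaining obligations---that $0\le\vec{u}_2<\INC^{n+1}(\vec z)_2$ and that $\vec{v}$ lies on the \emph{segment} rather than merely the line---fall out automatically from the bijectivity of $g$ and the localisation $0\le\vec{v}_2<\eta$; the paper's proof leaves both of these implicit (and the explicit formula is even undefined at the edge case $\vec{v}_1=\vec{u}_1$, which your parametrisation handles without fuss). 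Conversely, the paper's version is shorter and yields an explicit expression for $\vec{u}_2$, though that expression is not used elsewhere.
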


\begin{figure}[t]\centering
    \begin{tikzpicture}[x=1.35cm, y=1.35cm,font=\small,scale=0.85]   
    \path[thick,-stealth, at start] 
        (-0.2,0) edge  (5.5,0)
        (0,-0.2) edge node[below] {$0$} (0,5.5);
        \path[thick,-, at start]
        (-.1,5)  edge node[left] {$1$} (.1,5);
        \path[thick,-, at start]
            (5,-.1)  edge node[below] {$1$} (5,0.1);
        \path[thick, at start] 
            (.8,-0.1) edge node[below] {$\frac{1-\sqrt{1-4\lambda}}{2}$} (0.8,.2)
            (4.2,-0.1) edge node[below] {$\frac{1+\sqrt{1-4\lambda}}{2}$} (4.2,.1);  
    \tikzset{ver/.style={draw, circle, inner sep=0pt, minimum size=.15cm, fill=black}}
    \coordinate (a) at (1.4,.1);
    \coordinate (b) at (1.9,.25);
    \coordinate (b0) at (1.9,0);
    \coordinate (c)  at (2.5,.7);
    \coordinate (c0) at (2.5,0);
    \coordinate (d) at (3.1,1.4);
    \coordinate (d0) at (3.1,0);
    \coordinate (e) at (3.65,2.8);
    \coordinate (e0) at (3.65,0);
    \coordinate (f) at (3.95,4.3);
    \coordinate (s) at (.8,0);
    \coordinate (t) at (3.95,5);
    \coordinate (tt) at (0.8,5);
    \draw [fill=gray!30,draw opacity=0] plot coordinates {(s) (a) (b) (c) (d) (e) (f) (t) (tt)};
    \draw[thick, dotted] (b) -- (b0) -- (c)-- (c0) -- (d)-- (d0) -- (e)-- (e0); 
    \node[ver] (A) at (a) {};
    \node[ver, label={above, xshift=-5mm, yshift=-1.5mm: $\INC^4(\vec{z})$}] (B) at (b) {};
    \node[ver, label={above, xshift=-5mm, yshift=-1mm: $\INC^3(\vec{z})$}] (C) at (c) {};
    \node[ver, label={above, xshift=-5mm, yshift=-1mm: $\INC^2(\vec{z})$}] (D) at (d) {};
    \node[ver, label={above, xshift=-5mm, yshift=-1mm: $\INC(\vec{z})$}] (E) at (e) {};    
    \node[ver, label={above, xshift=-4mm, yshift=-1mm: $\vec{z}$}] (F) at (f) {};
    \draw[thick] (a) -- (b) -- (c) -- (d) -- (e) -- (f) -- (t) -- (tt) -- (s);
    \draw[thick,dashed] (s) -- (a);
    \node (T) at (2,4) {$\Area(\vec{z})$};
\end{tikzpicture}\hspace*{.2cm}
\begin{tikzpicture}[x=1.35cm, y=1.35cm,font=\small,scale=0.85]
    \tikzset{ver/.style={draw, circle, inner sep=0pt, minimum size=.15cm, fill=black}}    
    \tikzset{vr/.style={draw, circle, inner sep=0pt, minimum size=.08cm, fill=black}}    
    \coordinate (a) at (1.4,.1);
    \coordinate (b) at (1.8,.2);
    \coordinate (b0) at (1.8,0);
    \coordinate (c)  at (2.3,.55);
    \coordinate (c0) at (2.3,0);
    \coordinate (d) at (2.8,1.1);
    \coordinate (d0) at (2.8,0);
    \coordinate (e) at (3.15,1.9);
    \coordinate (e0) at (3.15,0);
    \coordinate (ee)  at (3.31,2.8);
    \coordinate (ee0) at (3.31,0);
    \coordinate (f) at (3.5,4.5);
    \coordinate (f0) at (3.5,0);
    \coordinate (s) at (.8,0);
    \coordinate (t) at (3.5,5);
    \coordinate (tt) at (0.8,5);
    \draw [fill=gray!30,draw opacity=0] plot coordinates {(s) ([shift={(0.24,0)}]c0) ([shift={(0.22,0)}]d) (e) (ee) (f) (t) (tt)};
    \path[thick,-stealth, at start] 
    (-0.2,0) edge  (5.5,0)
    (0,-0.2) edge node[below] {$0$} (0,5.5);
    \path[thick,-, at start]
        (5,-.1)  edge node[below] {$1$} (5,0.1);
    \path[thick,-, at start]
        (-.1,5)  edge node[left] {$1$} (.1,5);
    \path[thick, at start] 
        (.8,-0.1) edge node[below] {$\frac{1-\sqrt{1-4\lambda}}{2}$} (0.8,.2)
        (4.2,-0.1) edge node[below] {$\frac{1+\sqrt{1-4\lambda}}{2}$} (4.2,.1);  
    \draw[thick, dotted] (b) -- (b0) -- (c) -- (c0) -- (d) -- (d0) -- (e) -- (e0);
    \draw[thick, dotted] (ee) -- (ee0);
    \draw[thick] (f) -- (t) -- (tt) -- (s);
    \node[ver] (A) at (a) {};
    \node[ver] (B) at (b) {};
    \node[ver, label={above, xshift=-7mm, yshift=-1mm: $\INC^{\nu+2}(\vec{z})$}] (C) at (c) {};
    \node[ver, label={above, xshift=-7mm, yshift=-1mm: $\INC^{\nu+1}(\vec{z})$}] (D) at (d) {};
    \node[ver, label={above, xshift=-6mm, yshift=-1mm: $\INC^{\nu}(\vec{z})$}] (E) at (e) {};
    \node[ver, label={above, xshift=-8mm, yshift=-1mm: $\INC^{\nu-1}(\vec{z})$}] (E) at (ee) {};
    \node[ver, label={above, xshift=-4mm, yshift=-1mm: $\vec{z}$}] (F) at (f) {};
    \draw[thick] (a) -- (b) -- (c) -- (d) -- (e) -- (ee);
    \draw[thick,dashed] (s) -- (a);
    \draw[thick,dashed] (ee) -- (f);
    \draw[thick,color=red] ([shift={(0.15,-0.2)}]c0) -- ([shift={(0.7,1.0)}] e);
    \node[thick,color=red] at ([shift={(.9,.9)}] e) {$L_\mu$};
    \node[vr] (r1) at ([shift={(0,-.4)}]c) {};
    \node[vr] (r2) at ([shift={(0,-.8)}]d) {};
    \node[vr] (r3) at ([shift={(0,-1.4)}]e) {};
    \draw[thick,color=blue] (r1) -- (r2) -- (r3);   
    \node[vr,color=blue,label={below, color=blue, xshift=.5mm, yshift=.6mm: $\vec{\alpha}$}] (rho1) at ([shift={(.36,-.29)}]c) {};
    \node[vr,color=blue,label={below, color=blue, xshift=5mm, yshift=1mm: $\DEC(\vec{\alpha})$}] (trho1) at ([shift={(.22,-.67)}]d) {};
    \node[vr] (rr1) at ([shift={(0,-.3)}]d) {};
    \node[vr] (rr2) at ([shift={(0,-.8)}]e) {};
    \node[vr] (rr3) at ([shift={(0,-1.4)}]ee) {};
    \draw[thick,color=blue] (rr1) -- (rr2) -- (rr3);   
    \node[vr,color=blue,label={above, color=blue, xshift=.7mm, yshift=-4.5mm: $\vec{\alpha}$}] (rho2) at 
         ([shift={(.17,-.16)}]d) {};
    \node[vr,color=blue,label={below, color=blue, xshift=6mm, yshift=2mm: $\DEC(\vec{\alpha})$}] (trho2) at 
         ([shift={(.08,-.65)}]e) {}; 
    \node (T) at (1.5,4) {$\X$};
\end{tikzpicture}
\caption{The structure of $\vec{z}, \INC(\vec{z}), \INC^2(\vec{z}),\ldots$ and $\Area(\vec{z})$ (left); the construction proving $\Out =\emptyset$ (right). }
\label{fig-area}
\end{figure}

The next theorem is a crucial tool enabling the construction of the formula $\Step_{\ell,\ell'}$ (see Section~\ref{sec-sim-step}). 

\begin{theorem}
    \label{thm-area}
    Let $\lambda \in (0,\frac{1}{4})$ and $\vec{z} \in I_\lambda \times (0,1)$.
    Furthermore, let $T$ be a finite or countably infinite subset of $(\frac{1-\sqrt{1-4\lambda}}{2},\vec{z}_1) \times (0,1)$ such that for every $\vec{v} \in T \smallsetminus \Area(\vec{z})$, the pair $\DEC(\vec{v})$ is a convex combination of the elements of~$T$. Then $T \subseteq \Area(\vec{z})$.
\end{theorem}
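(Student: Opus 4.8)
The plan is to argue by contradiction: assuming $\Out := T \smallsetminus \Area(\vec z) \neq \emptyset$, I will exhibit a point of $\mathrm{conv}(T)$ lying strictly below a line that in fact supports all of $T$ from below. Write $a_k := \INC^k(\vec z)_1$; by Lemma~\ref{lem-tausigma}(c),(g) the sequence $a_0 = \vec z_1 > a_1 > a_2 > \cdots$ strictly decreases to $\frac{1-\sqrt{1-4\lambda}}{2}$, so each point of $T$ lies at a unique \emph{step} $k\ge 0$, meaning its first coordinate lies in $(a_{k+1},a_k]$. For $k\ge 0$ let $L_k$ be the line through the vertices $\INC^{k+1}(\vec z)$ and $\INC^{k}(\vec z)$ of $\Area(\vec z)$; it carries an edge of $\Area(\vec z)$, hence is a supporting line with $\Area(\vec z)$ on or above it; it has positive slope by Lemma~\ref{lem-tausigma}(c), and by Lemma~\ref{lem-tausigma}(d) it passes through $(a_{k+2},0)$, so $L_k(x)\le 0$ for every $x\le a_{k+2}$.

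Two elementary facts about $\DEC(\vec v)=\big(1-\frac{\lambda}{\vec v_1},\ \frac{\vec v_2}{\vec v_1}\big)$ will be used. (i) $\DEC$ sends non-vertical lines to non-vertical lines (substitute $x=\lambda/(1-x')$ into a linear equation in $x,y$); since $\DEC(\INC^{j}(\vec z))=\INC^{j-1}(\vec z)$ by Lemma~\ref{lem-tausigma}(b), this yields $\DEC(L_k)=L_{k-1}$ for $k\ge 1$. (ii) $\DEC$ preserves the relation ``lies strictly below a non-vertical line'', since two points with the same first coordinate $x>0$ keep the order of their second coordinates after division by $x$. Additionally, set $L_{-1}:=\DEC(L_0)$; it passes through $\vec z=\DEC(\INC(\vec z))$ and through $(a_1,0)=\DEC\big((a_2,0)\big)$, has positive slope, and $\Area(\vec z)\subseteq\{\,\text{on or above }L_{-1}\,\}$ (on $[a_1,\vec z_1]$ it lies below the staircase edge $L_0$, and for $x\le a_1$ it is $\le 0$, hence below the staircase).

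Now let $n^\star$ be the least step realised by a point of $\Out$, fix $\vec v^\star\in\Out$ at step $n^\star$, and note that every point of $\Out$ then has first coordinate $\le a_{n^\star}$ and that, by the closure hypothesis, $\DEC(\vec v^\star)\in\mathrm{conv}(T)$. If $n^\star=0$, then $\vec v^\star$ has first coordinate $>a_1$, so $\DEC(\vec v^\star)$ has first coordinate $>1-\lambda/a_1=a_0=\vec z_1$, which is absurd since every point of $\mathrm{conv}(T)$ has first coordinate $<\vec z_1$. Hence $n^\star\ge 1$. As $\vec v^\star$ lies in the strip but outside $\Area(\vec z)$, it lies strictly below the staircase, hence strictly below $L_{n^\star}$; by~(ii), $\DEC(\vec v^\star)$ lies strictly below $\DEC(L_{n^\star})=L_{n^\star-1}$. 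Since $\Area(\vec z)$ lies on or above $L_{n^\star-1}$, writing $\DEC(\vec v^\star)$ as a convex combination of members of $T$ forces some member $\vec t\in T$ to lie strictly below $L_{n^\star-1}$; then $\vec t\notin\Area(\vec z)$, i.e.\ $\vec t\in\Out$. By the closure hypothesis $\DEC(\vec t)\in\mathrm{conv}(T)$, and by~(ii) it lies strictly below $\DEC(L_{n^\star-1})=L_{n^\star-2}$. But every member of $T$ lies on or above $L_{n^\star-2}$: points of $\Area(\vec z)$ do (supporting line, using the case $L_{-1}$ when $n^\star=1$), and points of $\Out$ do because their first coordinate is $\le a_{n^\star}=a_{(n^\star-2)+2}$, where $L_{n^\star-2}\le 0$, whereas $\Out$-points have positive second coordinate. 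Hence $\mathrm{conv}(T)$ lies on or above $L_{n^\star-2}$, contradicting the position of $\DEC(\vec t)$. Therefore $\Out=\emptyset$, that is, $T\subseteq\Area(\vec z)$.

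The decisive point --- and the reason a single decrement is not enough --- is that a point of $\Out$ at step $n^\star$ may lie below $L_{n^\star-1}$, so $T$ need not be confined above $L_{n^\star-1}$; the second decrement replaces the governing supporting line $L_{n^\star-1}$ by $L_{n^\star-2}$, whose $x$-intercept has moved exactly one step to the right, to $a_{n^\star}$, and therefore finally excludes all of $\Out$. The remaining work is the (routine but careful) bookkeeping of the positions of the lines $L_k$, their $\DEC$-images, $\Area(\vec z)$, and the staircase at the boundary steps $n^\star\in\{0,1\}$ --- exactly the kind of facts packaged in Lemmas~\ref{lem-tausigma} and~\ref{lem-outlineseg}.
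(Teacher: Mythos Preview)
Your proof is correct and takes a genuinely different, more elementary route than the paper's. Both arguments assume $\Out := T \smallsetminus \Area(\vec z) \neq \emptyset$ and fix a minimal index (the paper's $\nu$ and your $n^\star$ differ by at most~$1$ and play the same role), but the contradictions are obtained differently. The paper introduces a one-parameter family of lines $L_y$ parallel to $\Line_\nu(\vec z)$, takes the supremum $\mu$ of those $y$ with $\Out \subseteq H(L_y)$, extracts a convergent subsequence of $\Out$-points approaching a limit $\vec\alpha \in L_\mu$, and then uses continuity of $\DEC$ together with Lemma~\ref{lem-outlineseg} and Lemma~\ref{lem-tausigma}(e),(f) to show $\DEC(\vec\alpha)$ escapes the convex region~$\X \supseteq T$. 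Your argument instead applies $\DEC$ \emph{twice}: from $\vec v^\star$ you obtain a point of $\mathrm{conv}(T)$ strictly below $L_{n^\star-1}$, hence some $\vec t \in \Out$ strictly below $L_{n^\star-1}$; one more application yields $\DEC(\vec t) \in \mathrm{conv}(T)$ strictly below $L_{n^\star-2}$. The clean observation that makes this work is that the $x$-intercept of $L_{n^\star-2}$ is exactly $a_{n^\star}$ (Lemma~\ref{lem-tausigma}(d)), so every $\Out$-point---having first coordinate $\le a_{n^\star}$ and positive second coordinate---automatically lies above $L_{n^\star-2}$, and so does $\Area(\vec z)$ (as a supporting line, with the auxiliary $L_{-1}$ handling $n^\star=1$). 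This two-step trick eliminates the supremum, the subsequence extraction, and the continuity argument; in fact your proof uses only Lemma~\ref{lem-tausigma}(b),(c),(d),(g) and does not need Lemma~\ref{lem-outlineseg} at all, despite your closing remark. The paper's approach gets by with a single application of $\DEC$, but pays for it with a more delicate analytic argument.
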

\begin{proof}
   We start by introducing some notation. For every line $\Line$ such that $\slope(\Line) \in (0,\infty)$, we use $H(\Line)$ to denote the closed half-space in $\R^2$ above $\Line$. Furthermore, for every $n\in \N$, we use $\Line_n(\vec{z})$ to denote the line containing the points $\INC^n(\vec{z})$ and $\INC^{n+1}(\vec{z})$. Note that $\Area(\vec{z}) \subseteq H(\Line_n(\vec{z}))$ for every $n \in \N$.
   
   Now let $T$ be a set satisfying the assumptions of our lemma, and  let $\Out = T \smallsetminus \Area(\vec{z})$. Note that if $\vec{v} \in \Out$, then there exists $n \in \N$ such that $\INC^{n+1}(\vec{z})_1 \leq \vec{v}_1 \leq \INC^{n}(\vec{z})_1$ by Lemma~\ref{lem-tausigma}~(g). Observe that if $\vec{v} \in H(\Line_n(\vec{z}))$ then $\vec{v} \in \Area(\vec{z})$, which is a contradiction. Hence, $\vec{v} \not\in H(\Line_n(\vec{z}))$.
   
   Our aim is to show that $\Out = \emptyset$. Suppose the converse. Due to the above observation, there exists $n \in \N$ such that $\Out \smallsetminus H(\Line_n(\vec{z})) \neq \emptyset$.
   Let $\nu$ be the \emph{least} \mbox{$n \in \N$} such that $\Out \smallsetminus H(\Line_n(\vec{z})) \neq \emptyset$. For every \mbox{$0 \leq y \leq \INC^{\nu+1}(\vec{z})_2$}, let $L_y$ be the line with the same slope as $\Line_{\nu}(\vec{z})$ containing the point $(\INC^{\nu+1}(\vec{z})_1, y)$. 

   Suppose that there exists $\vec{v} \in \Out$ such that $\vec{v} \not\in H(L_0)$. Then
   \begin{itemize}
    \item if $\nu = 0$, we have that $\DEC(\vec{v})_1 > \vec{z}_1$, hence $\DEC(\vec{v})$ is not a convex combination of the elements of $T$, and we have a contradiction; 
    \item if $\nu > 0$, we obtain $\vec{v} \not\in H(\Line_{\nu-1}(\vec{z}))$, which contradicts the minimality of $\nu$.
   \end{itemize}
    Now suppose $\Out \subseteq H(L_0)$. Let
   $\mu$ be the supremum of all \mbox{$y \leq \INC^{\nu+1}(\vec{z})_2$} such that
    $\Out \subseteq H(L_y)$. Clearly, $0 \leq \mu < \INC^{\nu+1}(\vec{z})_2$. 
    Let 
    \[
          \X \quad = \quad \left[\textstyle\frac{1-\sqrt{1-4\lambda}}{2},\vec{z}_1\right] \times [0,1] 
           \ \ \cap \ \ H(L_\mu) \ \ \cap \ \ 
           \bigcap_{n=0}^{\nu-1} H(\Line_n(\vec{z}))
    \]
    Note $\X$ is a convex set and $T \subseteq \X$.  We show that there exists $\vec{v} \in \Out$ such that $\DEC(\vec{v}) \not\in \X$. By the 
    assumptions of our lemma, $\DEC(\vec{v})$ is a convex combination of the elements of~$T$. Since $T \subseteq \X$ and $\X$ is a convex set, this is impossible, and we have the desired contradiction.
    
    The existence of $\vec{v} \in \Out$ such that $\DEC(\vec{v}) \not\in \X$ is proven as follows. By the definition of $\mu$, there is an infinite sequence $\vec{u}^1,\vec{u}^2,\ldots$ such that $\vec{u}^i \in \Out$ for all $i \geq 1$ and the distance of $\vec{u}^i$ from $L_\mu$ approaches $0$ as $i \to \infty$.
    This sequence must contain an infinite subsequence converging to some point $\vec{\alpha} \in L_\mu \cap \X$. 
        
    Since $\DEC$ is continuous, it suffices to show that $\DEC(\vec{\alpha}) \not\in \X$ for all $\vec{\alpha} \in L_\mu \cap \X$. Let us fix some $\vec{\alpha} \in L_\mu \cap \X$. It follows directly from the definition of $L_\mu$ that $\INC^{\nu+2}(\vec{z})_1 < \vec{\alpha}_1 < \INC^{\nu}(\vec{z})_1$ (see Fig. \ref{fig-area}~right). Hence, we have that either $\INC^{\nu+2}(\vec{z})_1 < \vec{\alpha}_1 \leq \INC^{\nu+1}(\vec{z})_1$ or $\INC^{\nu+1}(\vec{z})_1 < \vec{\alpha}_1 < \INC^{\nu}(\vec{z})_1$. Let us first consider the case when $\INC^{\nu+2}(\vec{z})_1 < \vec{\alpha}_1 \leq \INC^{\nu+1}(\vec{z})_1$. By Lemma~\ref{lem-outlineseg}, there exists $\vec{u}$ such that $\vec{u}_1 = \INC^{\nu+2}(\vec{z})_1$, $0 \leq \vec{u}_2 < \INC^{\nu+2}(\vec{z})_2$, and $\vec{\alpha}$ belongs to the line segment between $\vec{u}$ and $\DEC(\vec{u})$. By Lemma~\ref{lem-tausigma}~(f), the point $\DEC(\vec{\alpha})$ then belongs to the line segment between $\DEC(\vec{u})$ and $\DEC^2(\vec{u})$. However, this line segment is disjoint with $\X$ because its slope is strictly less than the slope of $L_\mu$ (see Lemma~\ref{lem-tausigma}). A similar argument applies in the case when $\INC^{\nu+1}(\vec{z})_1 < \vec{\alpha}_1 < \INC^{\nu}(\vec{z})_1$ (see Fig. \ref{fig-area}~right).
\end{proof}

Observe that if $T$ is a set satisfying the assumptions of Theorem~\ref{thm-area}, then for every $n \in \N$ we have that if $\INC^n(\vec{z}) = \sum_{\vec{v} \in T} \kappa(\vec{v}) \cdot \vec{v}$ for some probability distribution $\kappa$ over $T$, then $\vec{v} = \INC^n(\vec{z})$ for all $\vec{v} \in T$ such that $\kappa(\vec{v}) > 0$ (because $T \subseteq \Area(\vec{z})$ and $\INC^n(\vec{z})$ is a vertex of~$\Area(\vec{z})$). When constructing the formula~$\varphi_\M$ (see Section~\ref{sec-formula}), we ensure that for all $k \in \{1,2\}$, the set of all $\cv^k[t]$, where $t$ ranges over a set of ``relevant'' states, satisfies the conditions of Theorem~\ref{thm-area}. Thus, we overcome the obstacle mentioned in Section~\ref{sec-conf-repre}.

\section{Constructing the Formula $\varphi_\M$}
\label{sec-formula}

In this section, we show how to construct the formula $\varphi_\M$ simulating a given non-deterministic one-counter machine~$\M$ with $m$~instructions. The extension to two-counter machines is relatively simple and it is presented in Section~\ref{sec-twocounter}. 

When explaining the meaning of the constructed subformulae of $\varphi_\M$, we often refer to some unspecified model of $\varphi_\M$. Without restrictions, we assume that all states of the considered model are reachable from a state satisfying $\varphi_\M$.
The structure of $\varphi_\M$ has already been presented in Section~\ref{sec-simulate}. Recall that
\begin{equation*}\textstyle
    \varphi_{\M} ~~\equiv~~ \Struct ~\wedge~ \Init ~\wedge~ 
       \opg_{=1} (\bigwedge_{\ell =1}^m (\at_\ell \Rightarrow \Simulate_\ell)) 
       ~\wedge~ \Rec
\end{equation*}
where
\begin{eqnarray*}
    \Rec & \equiv & \textstyle\opg_{=1}\big(\bigwedge_{\ell=1}^m (\at_\ell \Rightarrow \opf_{>0} (\bigvee_{\ell' \in \tau}\at_{\ell'}))\big)\\
    \Simulate_\ell & \equiv & \textstyle(\Zero \Rightarrow \bigvee_{\ell' \in Z^\ell} \Step_{\ell,\ell'}) ~~\wedge~~
    (\neg\Zero \Rightarrow \bigvee_{\ell' \in P^\ell} \Step_{\ell,\ell'})
\end{eqnarray*}
We show how to implement the subformulae of $\varphi_\M$, where $\Step_{\ell,\ell'}$ is the main challenge.

Recall that the constant~$\lambda$ of Section~\ref{sec-geom} can be arbitrarily small. For our purposes, we need to choose $\lambda$, $\vec{z}_1$, $\vec{z}_2$, and $\delta$ (which exists for technical reasons) so that the extreme points of $I_\lambda$ are rational, $\vec{z}_2 < \vec{z}_1 < \delta$, and
$2\lambda + 2\delta + 2\vec{z}_1 + 2\vec{z}_2 < 1$.
For example, we can put $\lambda = \frac{14}{255}$, $I_\lambda =(\frac{1}{15},\frac{14}{15})$, $\vec{z} = (\frac{1}{12},\frac{1}{15})$, and $\delta = \frac{1}{11}$.  Furthermore, let
%
%
\begin{eqnarray*}
    \B & = & \{K\} \cup \{A_i,B_i,C_i,D_i,E_i,R_i \mid 0 \leq i \leq 2\}\\
    \A & = & \B \cup  \{a_{i,\ell},\bar{a}_{i,\ell},b_{i,\ell},c_{i,\ell},d_{i,\ell},r_{i,\ell} \mid 0 \leq i \leq 2, 1 \leq \ell \leq m\}
\end{eqnarray*}
be sets of atomic propositions.
For every $i \in \{0,1,2\}$, we use $S(i)$ to denote $i{+}1 \mbox{ mod } 3$, and
for all $O \subseteq \A$ and $L \subseteq O$, let
\(
    \ex{L}_O \ \equiv \  \bigwedge_{p \in L} p  ~\wedge~ \bigwedge_{p \in O \smallsetminus L} \neg p\,.
\)

\subsection{Defining the basic structure of a model}
\label{sec-structure}
Intuitively, states satisfying $r_{i,\ell}$ are the states where $\varphi_\M$ simulates the instruction $\Ins_\ell$ of $\M$. The role of the index $i$ in $r_{i,\ell}$ is auxiliary. For every $1 \leq \ell \leq m$, we put $\at_\ell \equiv \ex{r_{0,\ell}}_\A \vee \ex{r_{1,\ell}}_\A \vee \ex{r_{2,\ell}}_\A$. As we shall see, a state satisfying $\varphi_\M$ satisfies $\ex{r_{0,1}}_{\A}$.

The formula $\Struct$ defines the basic structure of a model of $\varphi_\M$.  We put 
\begin{eqnarray*}
  \Struct & \equiv & \Succ \ \wedge \ \Mark \ \wedge\ \FLambda
\end{eqnarray*}
where
\begin{eqnarray*}
    \Succ & \equiv & \bigwedge_{\ell=1}^m \bigwedge_{i=0}^2 \opg_{=1} \bigg(  \ex{r_{i,\ell}}_{\A} \ \Rightarrow \ \bigvee_{\ell'=1}^m \ex{r_{i,\ell}}_{\A} \opu_{=1} \rSuc_{i,\ell,\ell'} \bigg)\\ 
       & \wedge & \bigwedge_{i=0}^2 \opg_{=1} \bigg(
       \big( \ex{R_i}_{\B} \ \Rightarrow \ \ex{R_i}_{\B} \opu_{=1} \RSuc_i \big) 
       \wedge
       \big( \ex{R_i,K}_{\B} \ \Rightarrow \ \ex{R_i,K}_{\B} \opu_{=1} \RKSuc_i \big) 
       \bigg)
\end{eqnarray*}
where
{\small
\begin{eqnarray*}
    \rSuc_{i,\ell,\ell'} & \equiv & \ex{r_{S(i),\ell'}}_{\A} \vee \bigvee_{j=0}^2 \left(\ex{a_{i,\ell},R_j}_{\A} \ \vee \ex{\bar{a}_{i,\ell},R_j}_{\A}\ \vee \ \ex{b_{i,\ell}, R_j}_{\A} \ \vee \ \ex{c_{i,\ell},R_j}_{\A} \ \vee \ \ex{d_{i,\ell},R_j}_{\A}\right)  \\
    \RSuc_i & \equiv & \ex{A_i}_{\B} \ \vee \ \ex{B_i}_{\B} \ \vee \ \ex{C_i}_{\B} \ \vee \ \ex{D_i}_{\B} \ \vee \ \ex{E_i}_{\B} \ \vee \ \ex{R_{S(i)},K}_{\B}\\
    \RKSuc_i & \equiv & \ex{A_i,K}_{\B} \ \vee \ \ex{B_i,K}_{\B} \ \vee \ \ex{C_i,K}_{\B} \ \vee \ \ex{D_i,K}_{\B}  \ \vee \ \ex{R_{S(i)},K}_{\B}
\end{eqnarray*}}%
Let $O \subseteq \A$ and $L \subseteq O$. We say that $L$ is an \emph{$O$ marker} if whenever $s \models \ex{L}_O$, then $s' \models \ex{L}_O$ for every~$s'$ reachable from~$s$. Note that this property can be encoded by the PCTL formula $\ex{L}_O \Rightarrow \opg_{=1} \ex{L}_O$. The formula $\Mark$ says that
\begin{itemize}
    \item the sets $\{a_{i,\ell}\}$, $\{\bar{a}_{i,\ell}\}$, $\{b_{i,\ell}\}$, $\{c_{i,\ell}\}$, and $\{d_{i,\ell}\}$ are $\A \smallsetminus \B$ markers,
    \item the sets $\{A_i\}$, $\{A_i,K\}$, $\{B_i\}$, $\{B_i,K\}$, $\{C_i\}$, $\{C_i,K\}$, $\{D_i\}$, $\{D_i,K\}$, and $\{E_i\}$ are $\B$ markers.
\end{itemize} 
Finally, $\FLambda \equiv \bigwedge_{i=0}^2 (\ex{R_i}_{\B} \Rightarrow \opg_{=\lambda}(R_i {\vee} E_i)$.
The purpose of $\FLambda$ is clarified when constructing the formula $\Step_{\ell,\ell'}$. 

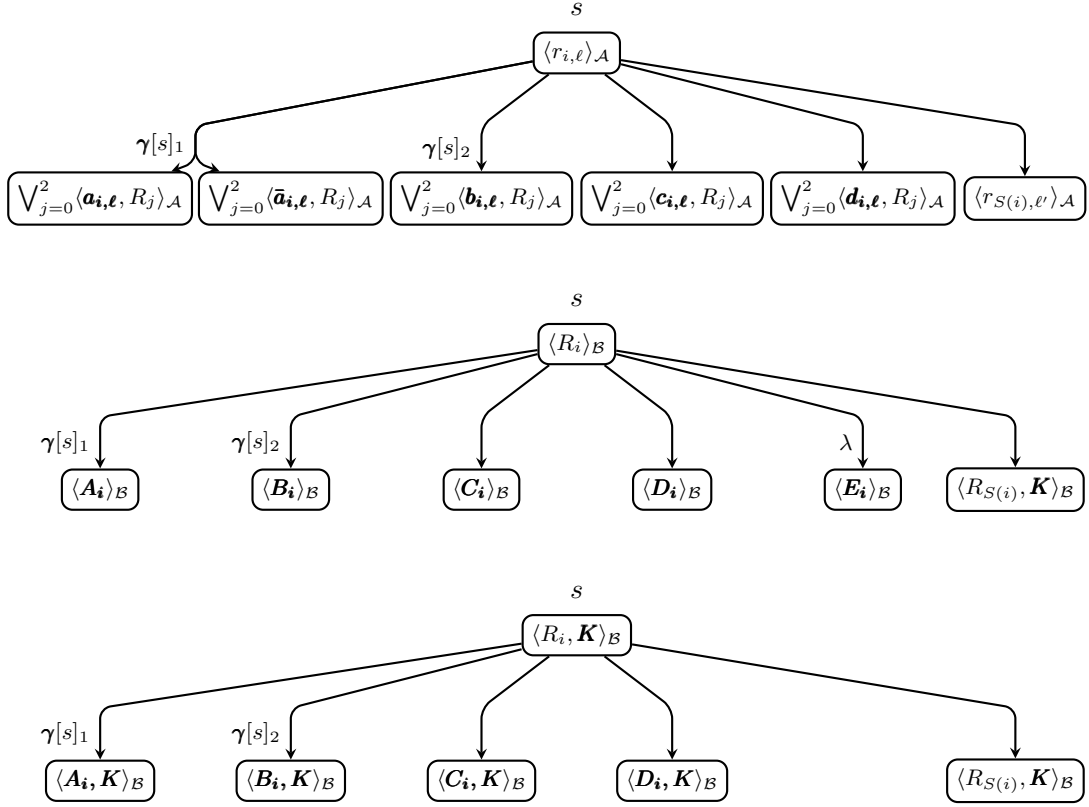
\begin{figure}[t]\centering
\begin{tikzpicture}[x=2.6cm, y=2cm,font=\footnotesize,scale=.965]
    \node[state,align=center] (s) at (0,0) 
            {$\ex{r_{i,\ell}}_\A$};
    \node at (0,.3) {{\large$s$}};
    \node[state] (a)  at (-2.5,-1) {$\bigvee_{j=0}^2\ex{\pmb{a_{i,\ell}},R_j}_\A$};
    \node[state] (ab) at (-1.5,-1) {$\bigvee_{j=0}^2\ex{\pmb{\bar{a}_{i,\ell}},R_j}_\A$};
    \node[state] (b)  at (-0.5,-1) {$\bigvee_{j=0}^2\ex{\pmb{b_{i,\ell}},R_j}_\A$};
    \node[state] (c)  at (0.5,-1)  {$\bigvee_{j=0}^2\ex{\pmb{c_{i,\ell}},R_j}_\A$};
    \node[state] (d)  at (1.5,-1)  {$\bigvee_{j=0}^2\ex{\pmb{d_{i,\ell}},R_j}_\A$};
    \node[state] (r)  at (2.35,-1)  {$\ex{r_{S(i),\ell'}}_\A$};
    \draw[tran] (s)  --  (-2,-.5) -- node[left]{$\cv[s]_1$} (-2,-.75) -- (a);
    \draw[tran] (s)  --  (-2,-.5) -- (-2,-.75) -- (ab);
    \draw[tran] (s)  --  ($(b) +(0,0.5)$) -- node[left]{$\cv[s]_2$} (b);
    \draw[tran] (s)  --  ($(c) +(0,0.5)$) -- (c);
    \draw[tran] (s)  --  ($(d) +(0,0.5)$) -- (d);
    \draw[tran] (s)  --  ($(r) +(0,0.5)$) -- (r);
\tikzset{shift={(0,-2)}};
   \node[state,align=center] (s) at (0,0) {$\ex{R_{i}}_\B$};
    \node at (0,.3) {{\large$s$}};
    \node[state] (a)  at (-2.5,-1) {$\ex{\pmb{A_{i}}}_\B$};
    \node[state] (b)  at (-1.5,-1) {$\ex{\pmb{B_{i}}}_\B$};
    \node[state] (c)  at (-0.5,-1) {$\ex{\pmb{C_{i}}}_\B$};
    \node[state] (d)  at (0.5,-1)  {$\ex{\pmb{D_{i}}}_\B$};
    \node[state] (e)  at (1.5,-1)  {$\ex{\pmb{E_{i}}}_\B$};
    \node[state] (r)  at (2.3,-1)  {$\ex{R_{S(i)},\pmb{K}}_\B$};
    \draw[tran] (s)  --  ($(a) +(0,0.5)$) -- node[left]{$\cv[s]_1$} (a);
    \draw[tran] (s)  --  ($(b) +(0,0.5)$) -- node[left]{$\cv[s]_2$} (b);
    \draw[tran] (s)  --  ($(c) +(0,0.5)$) -- (c);
    \draw[tran] (s)  --  ($(d) +(0,0.5)$) -- (d);
    \draw[tran] (s)  --  ($(e) +(0,0.5)$) -- node[left]{$\lambda$} (e);
    \draw[tran] (s)  --  ($(r) +(0,0.5)$) -- (r);
\tikzset{shift={(0,-2)}};
    \node[state,align=center] (s) at (0,0) {$\ex{R_{i},\pmb{K}}_\B$};
    \node at (0,.3) {{\large$s$}};
    \node[state] (a)  at (-2.5,-1) {$\ex{\pmb{A_{i},K}}_\B$};
    \node[state] (b)  at (-1.5,-1) {$\ex{\pmb{B_{i},K}}_\B$};
    \node[state] (c)  at (-0.5,-1) {$\ex{\pmb{C_{i},K}}_\B$};
    \node[state] (d)  at (0.5,-1)  {$\ex{\pmb{D_{i},K}}_\B$};
    \node[state] (r)  at (2.3,-1)  {$\ex{R_{S(i)},\pmb{K}}_\B$};
    \draw[tran] (s)  --  ($(a) +(0,0.5)$) -- node[left]{$\cv[s]_1$} (a);
    \draw[tran] (s)  --  ($(b) +(0,0.5)$) -- node[left]{$\cv[s]_2$} (b);
    \draw[tran] (s)  --  ($(c) +(0,0.5)$) -- (c);
    \draw[tran] (s)  --  ($(d) +(0,0.5)$) -- (d);
    \draw[tran] (s)  --  ($(r) +(0,0.5)$) -- (r);   
\end{tikzpicture}
\caption{Illustrating the meaning of $\Struct$.}
\label{fig-struct-mb}
\end{figure}
    
The meaning of $\Struct$ is illustrated in Fig.~\ref{fig-struct-mb}. The first subformula of $\Succ$ guarantees that if $s \models \ex{r_{i,\ell}}_{\A}$, then for almost all $\pi \in \run(s)$ there is $n \in \N$ such that
\begin{itemize}
    \item $\pi(n') \models \ex{r_{i,\ell}}_\A$ for every $0 \leq n' <n$;
    \item $\pi(n)$ satisfies either $\bigvee_{j=0}^2\ex{a_{i,\ell},R_j}_{\A}$, or 
$\bigvee_{j=0}^2\ex{\bar{a}_{i,\ell},R_j}_{\A}$, or
$\bigvee_{j=0}^2\ex{b_{i,\ell}, R_j}_{\A}$, or 
$\bigvee_{j=0}^2\ex{c_{i,\ell},R_j}_{\A}$, or 
$\bigvee_{j=0}^2\ex{d_{i,\ell},R_j}_{\A}$, or $\ex{r_{S(i),\ell'}}_{\A}$.
\end{itemize}
This is shown in Fig.~\ref{fig-struct-mb}, together with the properties of states satisfying $\ex{R_i}_\B$ and $\ex{R_i,K}_\B$ enforced by the other subformulae of $\Succ$. The markers are typeset in boldface, and the probabilities constituting $\cv[s]_1$ and $\cv[s]_2$ are also shown. The meaning of $\FLambda$ is apparent in Fig.~\ref{fig-struct-mb}~(middle).

The three graphs of Fig.~\ref{fig-struct-mb} describe the structure of every model of $\varphi_\M$. As we shall see, $s \models \varphi_\M$ implies $s \models \ex{r_{0,1}}_\A$. Almost every run initiated in $s$ eventually visits a state $s'$ satisfying one of the six formulae shown in Fig.~\ref{fig-struct-mb}~(top). For example, suppose that $s' \models \ex{b_{i,\ell}, R_j}_\B$. Since $b_{i,\ell}$ is a marker, this proposition is valid in all successors of~$s'$. Furthermore, $s' \models \ex{R_j}_\B$, and the structure of states reachable from $s'$ is described in Fig.~\ref{fig-struct-mb}~(middle), i.e., almost every run initiated in $s'$ eventually visits a state $s''$ satisfying one of the six formulae shown in Fig.~\ref{fig-struct-mb}~(middle). If $s''$ satisfies one the first five formulae, the states reachable from $s''$ are no longer interesting (all of them satisfy the associated $\B$ marker). If $s'' \models \ex{R_{S(j)},K}_\B$, the structure of states reachable from $s''$ is shown in Fig.~\ref{fig-struct-mb}~(bottom).

The most interesting runs initiated in a state $s$ such that $s \models\varphi_\M$ are the runs visiting infinitely many states satisfying a formula of the form $\ex{r_{i},\ell}$ (in Fig.~\ref{fig-struct-mb}~(top), keep following the rightmost branch). These runs are responsible for modelling the computations of~$\M$.

\subsection{Encoding the counter value}
Now we define the path formulae $\Phi_s$ and $\Psi_s$ encoding the current value of the counter $C$ (see Section~\ref{sec-conf-repre}). Let $\calR = \{r_{i,\ell},R_i \mid 0 \leq i \leq 2, 1 \leq \ell \leq m\}$. A state is \emph{relevant} if it satisfies some proposition of $\calR$ (note that according to $\Struct$, every relevant state satisfies \emph{exactly one} proposition of $\calR$). 
If $s \models r_{i,\ell}$, we put $\Phi_s \equiv \opg(r_{i,\ell} {\vee} a_{i,\ell} \vee \bar{a}_{i,\ell})$ and $\Psi_s \equiv \opg(r_{i,\ell} {\vee} b_{i,\ell})$. Similarly, if $s \models R_i$, we put $\Phi_s \equiv \opg(R_i {\vee} A_i)$ and $\Psi_s \equiv \opg(R_i {\vee} B_i)$. 
Furthermore, $\cv[s] \in [0,1]^2$ denotes the vector such that $\cv[s]_1$ and $\cv[s]_2$ are the probabilities of satisfying $\Phi_s$ and $\Psi_s$ in $s$, respectively.
We define $T$ as the set of all $\cv[s]$ where $s$ is a relevant state. We need to ensure that $T$ satisfies the conditions of Theorem~\ref{thm-area}.
Let

\begin{eqnarray*}
    \Zero & \equiv & 
    \bigwedge_{i=0}^2  \bigwedge_{\ell=1}^m   
    \bigg(r_{i,\ell} \Rightarrow \big(\opg_{=\vec{z}_1}(r_{i,\ell} {\vee} a_{i,\ell} {\vee} \bar{a}_{i,\ell}) \wedge \opg_{=\vec{z}_2}(r_{i,\ell} {\vee} b_{i,\ell}) \big) \bigg)\\
     & \wedge &   \bigwedge_{i=0}^2 
    \bigg(R_i \Rightarrow \big(\opg_{=\vec{z}_1}(R_i {\vee} A_i) \wedge \opg_{=\vec{z}_2}(R_i {\vee} B_i) \big) \bigg)\\[2ex]
    \Eligible & \equiv & \bigwedge_{i=0}^2  \bigwedge_{\ell=1}^m 
    \opg_{=1} \bigg( 
    r_{i,\ell} \Rightarrow  \big(\opg_{\leq \vec{z}_1}(r_{i,\ell} {\vee} a_{i,\ell} {\vee} \bar{a}_{i,\ell}) 
    \wedge \opg_{\geq \beta}(r_{i,\ell} {\vee} a_{i,\ell} {\vee} \bar{a}_{i,\ell})\big) \bigg)\\
    & \wedge & \bigwedge_{i=0}^2 
    \opg_{=1} \bigg(
    R_i \Rightarrow  \big(\opg_{\leq \vec{z}_1}(R_i {\vee} A_i) 
    \wedge \opg_{\geq \beta}(R_i {\vee} A_i)\big)
    \bigg)
\end{eqnarray*}
where $\beta$ is the lower extreme point of $I_\lambda$ (recall that $\lambda$ is chosen so that $\beta$ is rational). Hence, for every relevant state $s$, we have that $s \models \Zero$ iff $\cv[s] = \vec{z}$. The formula $\Eligible$ says that $\cv[s]_1 \in ((1{-}\sqrt{1{-}4\lambda}/2),\vec{z}_1)$ for every relevant~$s$, i.e., $T \subseteq ((1{-}\sqrt{1{-}4\lambda})/2,\vec{z}_1) \times (0,1)$. The closure property of Theorem~\ref{thm-area} is enforced by the formulae constructed in the next paragraph.

\subsection{The subformula $\Init$}
\label{app-init}

We put 
\begin{eqnarray*}
    \Init & \equiv & \ex{r_{0,1}}_{\A} \ \wedge\  \Zero \ \wedge\ \Eligible \ \wedge \ \opg_{=1} \Decrement
\end{eqnarray*}
The subformula $\ex{r_{0,1}}_{\A} \wedge \Zero$ says that a state satisfying $\varphi_\M$ represents the initial configuration $(1,0)$ of~$\M$. The subformulae $\Eligible$ and $\opg_{=1} \Decrement$, together with $\Step_{\ell,\ell'}$ constructed below, ensure that the set $T$ introduced in the previous section satisfies the assumptions of Theorem~\ref{thm-area} (as we shall see in Section~\ref{app-sim-subformula}, $\Decrement$ is similar to the formula implementing the $\Dec$ operation on a counter).

We put
\begin{eqnarray*}
    \Decrement & \equiv & \bigwedge_{i=0}^2 (R_i {\wedge} \neg\Zero) 
        \Rightarrow \bigg( \Copy_i \wedge \Succ_i \bigg)               
\end{eqnarray*}
where
\begin{eqnarray*}
    \Copy_i & \equiv &
       \opg_{=\delta}(R_i {\vee} A_i {\vee} C_i) 
        \ \wedge \ \opf_{=\delta}(R_{S(i)} {\vee} C_i)\\
    \Succ_i & \equiv & \opf_{=\lambda}(B_{S(i)} {\vee} C_{S(i)} {\vee} D_{S(i)} {\vee}
          R_{S^2(i)}) \ \wedge \ \opf_{=\lambda}(B_i {\vee} C_{S(i)} {\vee} D_{S(i)} {\vee} R_{S^2(i)})             
\end{eqnarray*}%
To explain the meaning of $\Decrement$, suppose that $s \models R_i \wedge \Decrement$, where $s$ is a state reachable from a state satisfying $\varphi_\M$ (hence, $s \models \Struct$). Let $U$ be the set of all states $u$ such that $u \models R_{S(i)} \wedge K$ and $p_u > 0$, where $p_u$ is the probability of all runs $w$ initiated in $s$ such that $w$ visits $u$ and all states preceding this visit satisfy $R_i$ (see Fig.~\ref{fig-dec-inc-app}). The formula $\Decrement$ ensures that if $s \models \neg\Zero$, then $\DEC(\cv[s]) = \sum_{u \in U} (p_u/P) \cdot \cv[u]$, where $P = \sum_{u \in U} p_u$. Hence, the condition of Theorem~\ref{thm-area} is satisfied for $\cv[s]$.

More specifically, the subformula $\Copy_i$ ``copies'' $\cv[s]_1$ into $P$, i.e., ensures that $P = \cv[s]_1$. We claim that 
\begin{itemize}
    \item $P$ is the probability of satisfying the path formula $\opf R_{S(i)}$ in~$s$;
    \item $\cv[s]_1$ is the probability of satisfying the path formula $\opg(R_i {\vee} A_i)$ in~$s$.
\end{itemize}
Both claims follow directly from the formula $\Struct$. However, the equality $P = \cv[s]_1$ is not directly expressible in PCTL. Instead, the formula $\Copy_i$ requires that $\cv[s]_1 + \kappa = \delta$ and $P + \kappa = \delta$, where $\kappa$~is the probability of satisfying the formula $\opg(R_i {\vee} C_i)$ in~$s$. 
Note that
\begin{itemize}
    \item $P + \kappa$ is the probability of satisfying the formula $\opf(R_{S(i)} {\vee} C_i)$ in $s$.
    \item $\cv[s]_1 + \kappa$ is the probability of satisfying the formula $\opg(R_i {\vee} A_i {\vee} C_i)$ in $s$. 
\end{itemize}

The first conjunct of $\Succ_i$ says that the probability of satisfying $\opf(B_{S(i)} {\vee} C_{S(i)} {\vee} D_{S(i)} {\vee} R_{S^2(i)})$ in $s$ is equal to~$\lambda$. By inspecting the structure of states reachable from~$s$, we obtain that the above formula is satisfied with the probability $\sum_{u \in U} p_u \cdot (1{-}\cv[u]_1)$. Hence, $\sum_{u \in U} p_u \cdot (1{-}\cv[u]_1) = \lambda$ which implies $(P{-}\lambda)/P = \sum_{u \in U}(p_u/P) \cdot \cv[u]_1$. Since $P = \cv[s]_1$, we have that 
\[
  \DEC(\cv[s])_1 \ = \ (\cv[s]_1 {-} \lambda)/\cv[s]_1 = \sum_{u \in U} (p_u/P) \cdot \cv[u]_1\,.
\]   

The second conjunct of $\Succ_i$ says that the probability of satisfying 
$\opf(B_i {\vee} C_{S(i)} {\vee} D_{S(i)} {\vee} R_{S^2(i)})$ in $s$ is also equal to $\lambda$, i.e., it is the \emph{same} as the probability of satisfying $\opf(B_{S(i)} {\vee} C_{S(i)} {\vee} D_{S(i)} {\vee} R_{S^2(i)})$. From this we obtain that
the following two probabilities are also equal:
\begin{itemize}
    \item The probability of all runs $\pi$ initiated in $s$ visiting a state satisfying $B_i$ so that all states in $\pi$ preceding this visit do not satisfy $R_{S^2(i)}$. 
    \item The probability of all runs $\pi$ initiated in $s$ visiting a state satisfying 
    $B_{S(i)}$  so that all states in $\pi$ preceding this visit do not satisfy $R_{S^2(i)}$. 
\end{itemize}
The first probability is equal to $\cv[s]_2$, and the second probability is equal to $\sum_{u \in U}p_u \cdot \cv[u]_2$. Hence, $\cv[s]_2 =  \sum_{u \in U}p_u \cdot \cv[u]_2$. Since $P = \cv[s]_1$ (see above), we obtain
\[
   \DEC(\cv[s])_2 \ = \ \frac{\cv[s]_2}{\cv[s]_1} \ = \ \frac{\sum_{u \in U}p_u \cdot \cv[u]_2}{P}
   \ = \ \sum_{u \in U} (p_u/P) \cdot \cv[u]_2\,.
\]
To sum up, $\DEC(\cv[s]) =  \sum_{u \in U} (p_u/P) \cdot \cv[u]$ is a convex combination of $\cv[u]$ where $u \in U$. 

\begin{figure}[t]\centering
    \begin{tikzpicture}[x=1.5cm, y=1cm,font=\scriptsize]
        \foreach \x/\y in {0/A,1/B,2/C,3/D,4/E}{
            \node (a\x) at (\x,0) [state] {$\pmb{\y_i}$}; 
        }
        \node[inner sep=0pt,outer sep=0pt,minimum size=1mm] (t1) at (5,0) {$\bullet$};     
        \node[label={[label distance=-2mm]below left:{$u$}},inner sep=0pt,outer sep=0pt,minimum size=1mm] (ti) at (5.5,0) {$\bullet$};   
        \node[inner sep=0pt,outer sep=0pt,minimum size=1mm] (tn) at (6.2,0) {};  
        \node at (6.8,0) {$R_{S(i)},\pmb{K}$};  
        \node[state] at (0,-1.5)   (A1) {$\pmb{A_{S(i)},K}$};
        \node[state] at (1.5,-1.5) (B1) {$\pmb{B_{S(i)},K}$};    
        \node[state] at (3,-1.5)   (C1) {$\pmb{C_{S(i)},K}$};    
        \node[state] at (4.5,-1.5) (D1) {$\pmb{D_{S(i)},K}$};    
        \node[state] at (6,-1.5)   (R1) {$R_{S^2(i)},\pmb{K}$};      
        \node[above right = .3 of tn] {$U$};
        \draw[dotted,thick] ($(t1) +(.15,0)$) -- ($(ti) -(.1,0)$);
        \draw[dotted,thick] ($(ti) +(.15,0)$) -- ($(tn) -(.1,0)$);
        \draw[thick,rounded corners] (4.6,-.27) rectangle (7.4,.25);
        \node[label={[label distance=-1mm]above:{$s \models R_i$}},inner sep=0pt,outer sep=0pt,minimum size=1mm] (s) at (3.5,1.5) {$\bullet$};
        \draw[-stealth,rounded corners,thick] (s) -| node[left=-1mm, near end] {$\cv[s]_1$} (a0);   
        \draw[-stealth,rounded corners,thick] (s) -| node[left=-1mm, near end] {$\cv[s]_2$} (a1);
        \draw[-stealth,rounded corners,thick] (s) -|  (a2);    
        \draw[-stealth,rounded corners,thick] (s) -|  (a3);
        \draw[-stealth,rounded corners,thick] (s) -| node[left, near end] {$\lambda$} (a4);
        \node[inner sep=0pt,outer sep=0pt,minimum size=0pt] (p) at (5.5,1) {};
        \draw[-stealth,rounded corners,thick] (s) -| node[right, near end] {$P = \cv[s]_1$} (p);
        \draw[-stealth,rounded corners,thick] (p) -| (t1);    
        \draw[-stealth,rounded corners,thick] (p) -| node[left=-1mm, near end] {$p_u$} (ti);  
        \draw[-stealth,rounded corners,thick] (p) -| (tn);
    
        \draw[-stealth,rounded corners,thick] (ti) -- ($(ti) -(0,.5)$) -| node[left=0mm, near end]  {$\cv[u]_1$} (A1);       
        \draw[-stealth,rounded corners,thick] (ti) -- ($(ti) -(0,.5)$) -| node[left=0mm, near end]  {$\cv[u]_2$} (B1);    
        \draw[-stealth,rounded corners,thick] (ti) -- ($(ti) -(0,.5)$) -|  (C1);    
        \draw[-stealth,rounded corners,thick] (ti) -- ($(ti) -(0,.5)$) -|  (D1);    
        \draw[-stealth,rounded corners,thick] (ti) -- ($(ti) -(0,.5)$) -|  (R1);    
    \end{tikzpicture}
    \caption{Illustrating the meaning of $\Decrement$.}
    \label{fig-dec-inc-app}
    \end{figure}
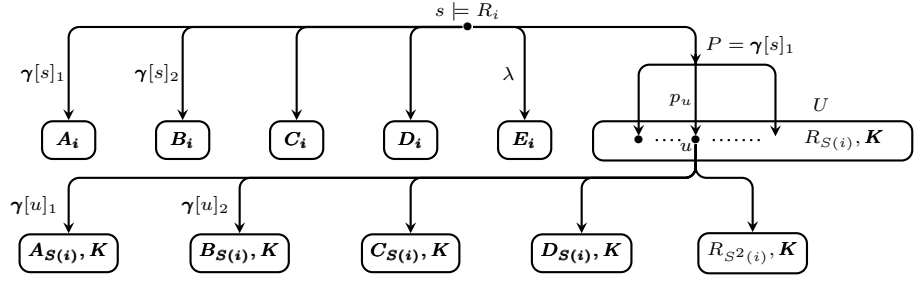

\subsection{The subformula $\Simulate_\ell$}
\label{app-sim-subformula}
Recall that
\begin{equation*}\textstyle
    \Simulate_\ell ~~\equiv~~ (\Zero \Rightarrow \bigvee_{\ell' \in Z^\ell} \Step_{\ell,\ell'}) ~~\wedge~~
    (\neg\Zero \Rightarrow \bigvee_{\ell' \in P^\ell} \Step_{\ell,\ell'}) 
\end{equation*}
where
\begin{eqnarray*}
    \Step_{\ell,\ell'} & \equiv & \bigwedge_{i=0}^2 \ex{r_{i,\ell}}_{\A} \Rightarrow  \left( \ex{r_{i,\ell}}_{\A} \opu_{=1} \rSuc_{i,\ell,\ell'} 
    \ \wedge\ \Update_{i,\ell,\ell'} \right)
\end{eqnarray*}
The purpose of the $\ex{r_{i,\ell}}_{\A} \opu_{=1} \rSuc_{i,\ell,\ell'}$ subformula of $\Step_{\ell,\ell'}$ is to ensure that the $\ell'$ index is ``propagated'' to the states reachable from a state satisfying $\Step_{\ell,\ell'}$ (recall that $\Struct$ ensures the propagation of \emph{some} counter index which may be different from $\ell'$).

To explain the meaning of $\Step_{\ell,\ell'}$ and the formula $\Update_{i,\ell}$ constructed below, let us assume that $s \models \Step_{\ell,\ell'}$ where $s$ is a state reachable form a state satisfying $\varphi_\M$ (in particular, $s \models \Struct$). We distinguish two subcases.


\paragraph{$\pmb{\Ins_\ell}$ updates the counter by $\Dec$}
In this case,  $\Update_{i,\ell,\ell'}$ is similar to $\Decrement$. 
We define the set $U$ of all states $u$ such that $u \models r_{S(i),\ell'}$ and $p_u > 0$, where $p_u$ is the probability of all runs $\pi$ initiated in $s$ such that $\pi$ visits $u$ and all states preceding this visit satisfy $r_{i,\ell}$. 

The formula $\Update_{i,\ell,\ell'}$ ensures that if $s \models \neg\Zero$, then $\DEC(\cv[s])$ is a positive convex combination of $\cv[u]$ where $u \in U$. In addition, it
ensures that if $s \models \Zero$, then $\vec{z}$ is a positive convex combination of $\cv[u]$ where $u \in U$. We put $\Update_{i,\ell,\ell'} \equiv \UDec_{i,\ell,\ell'}$ where

\begin{eqnarray}
    \UDec_{i,\ell,\ell'} & \equiv &  \Zero 
    \Rightarrow \left(\opg_{=\vec{z}_1}(r_{i,\ell} \vee r_{S(i),\ell'} \vee a_{S(i),\ell'})  \wedge  
        \opg_{=\vec{z}_2}(r_{i,\ell} \vee r_{S(i),\ell'} \vee b_{S(i),\ell'})\right)\\
     & \wedge &  \neg\Zero 
        \Rightarrow \left( \UCopy_{i,\ell,\ell'} \wedge \USucc_{i,\ell,\ell'} \right)               
\end{eqnarray}
where
\begin{eqnarray*}
\UCopy_{i,\ell,\ell'} & \equiv &
   \opg_{=\delta}(r_{i,\ell} \vee a_{i,\ell} \vee \bar{a}_{i,\ell} \vee c_{i,\ell}) 
    \ \wedge \ \opf_{=\delta}(r_{S(i),\ell'} \vee c_{i,\ell})\\[1ex]
\USucc_{i,\ell,\ell'} & \equiv & \textstyle\opf_{=\lambda}(b_{S(i),\ell'} \vee c_{S(i),\ell'} \vee d_{S(i),\ell'} \vee \bigvee_{\ell'' = 1}^m r_{S^2(i),\ell''})\\
 & \wedge & \textstyle\opf_{=\lambda}(b_{i,\ell} \vee c_{S(i),\ell'} \vee d_{S(i),\ell'} \vee \bigvee_{\ell'' = 1}^m r_{S^2(i),\ell''})             
\end{eqnarray*}%
By using similar arguments as in Section~\ref{app-init}, it is easy to verify that 
\begin{itemize}
    \item if $s \models \UDec_{i,\ell,\ell'} \wedge \neg\Zero$, then $\DEC(\cv[s]) =  \sum_{u \in U} (p_u/P) \cdot \cv[u]$;
    \item if $s \models \UDec_{i,\ell,\ell'} \wedge \Zero$, then $\vec{z} =  \sum_{u \in U} (p_u/P) \cdot \cv[u]$.
\end{itemize}

\paragraph{$\pmb{\Ins_\ell}$ updates the counter by $\Inc$}

Let $V$ be the set consisting of all $v$ such that $v \models a_{i,\ell} \vee \bar{a}_{i,\ell}$ and $q_v > 0$, where $q_v$ is the probability of all runs $\pi$ initiated in $s$ such that $\pi$ visits $v$ and all states preceding this visit satisfy~$r_{i,\ell}$. Furthermore, let $W$ be the set consisting of all $w$ such that \mbox{$w \models b_{i,\ell} \vee c_{i,\ell} \vee d_{i,\ell} \vee r_{S(i),\ell'}$} and $q_w > 0$, where $q_w$ is the probability of all runs $\pi$ initiated in $s$ such that $\pi$ visits $w$ and all states preceding this visit satisfy $r_{i,\ell}$.
The formula $\Update_{i,\ell,\ell'}$ ensures the following:
\begin{itemize}
    \item $\INC(\cv[s])$ is a positive convex combination of $\cv[w]$ where $w \in W$.
    \item If $s \models \neg\Zero$, then $\DEC(\cv[s])$ is a positive convex combination of $\cv[v]$ where $v \in V$.
\end{itemize}
We put $\Update_{i,\ell,\ell'} \equiv \UInc_{i,\ell,\ell'}$ where $\UInc_{i,\ell,\ell'}$ is the following formula:
{\small%
\begin{eqnarray}
     &  & \textstyle 
       \opg_{=\lambda} \big( (r_{i,\ell} \vee \bigvee_{j=0}^2 R_j \vee r_{S(i),\ell'} \vee
              a_{S(i),\ell'} \vee \bar{a}_{S(i),\ell'} \vee \bigvee_{j=0}^2 A_j)  \wedge \neg K \wedge \neg a_{i,\ell} \wedge \neg\bar{a}_{i,\ell} \big)\label{incsub-one}\\
    & \wedge & \textstyle 
       \opg_{=\varrho}  \big( (r_{i,\ell} \vee \bigvee_{j=0}^2 R_j \vee r_{S(i),\ell'} \vee
       \bar{a}_{i,\ell} \vee \bigvee_{j=0}^2 B_j \vee b_{S(i),\ell'})  \wedge \neg a_{i,\ell} \wedge (K \Rightarrow \bar{a}_{i,\ell})\big)\label{incsub-two}\\
    & \wedge & \textstyle 
       \opg_{=\varrho} \big( (r_{i,\ell} \vee \bigvee_{j=0}^2 R_j \vee  \bar{a}_{i,\ell} \vee
       \bigvee_{j=0}^2 (E_j {\wedge} b_{i,\ell}))   \wedge (K \Rightarrow \bar{a}_{i,\ell}) \big)\label{incsub-three}\\
    & \wedge & \textstyle
       \neg\Zero \Rightarrow 
           \opg_{=\lambda} \big( r_{i,\ell} \vee a_{i,\ell}
              \vee \bar{a}_{i,\ell}) \wedge \bigwedge_{j=0}^2 (A_j \Rightarrow K) \big)\label{incsub-four}\\
    & \wedge &\textstyle
      \neg\Zero \Rightarrow 
       \opg_{=\delta}\big( (r_{i,\ell} \vee c_{i,\ell} \vee \bigvee_{j=0}^2 (R_j {\wedge} a_{i,\ell})
       \vee \bigvee_{j=0}^2 (R_j {\wedge} \bar{a}_{i,\ell}) \vee \bigvee_{j=0}^2 B_j) \wedge (K \Rightarrow c_{i,\ell)} \big)\label{incsub-five}\\
    & \wedge &\textstyle
       \neg\Zero \Rightarrow   
       \opg_{=\delta}(r_{i,\ell} \vee b_{i,\ell} \vee c_{i,\ell})\label{incsub-six}     
\end{eqnarray}}%
The meaning of $\UInc_{i,\ell,\ell'}$ is illustrated in Fig.~\ref{fig-update-inc}.  Suppose $s \models \UInc_{i,\ell,\ell'}$. Every subformula of $\UInc_{i,\ell,\ell'}$ specifies the probability of visiting a state of a certain family by a run initiated in $s$.  In Fig.~\ref{fig-update-inc}, these families are indicated by colored shapes (some states belong to multiple families). 
\begin{itemize}
       \item The subformula~\eqref{incsub-one} says that the probability of reaching a state of the \emph{yellow circle} family is equal to $\lambda$, i.e., 
          $\sum_{w \in W} p_w \cdot \cv[w]_1 = \lambda$. Since $P_W = \sum_{w \in W} p_w = (1-\cv[s]_1)$, we obtain that
          \[
             \INC(\cv[s])_1 \ = \ 
             \frac{\lambda}{1-\cv[s]_1} \ = \ 
             \frac{\sum_{w \in W} p_w \cdot \cv[w]_1}{P_W} \ = \
             \sum_{w \in W} (p_w/P_W) \cdot \cv[w]_1
          \]
       \item The subformulae~\eqref{incsub-two}~and~\eqref{incsub-three} say that the probabilities of reaching a state of the \emph{blue square} and the \emph{red star} families are equal to $\varrho$. Consequently, $\sum_{w \in W} p_w \cdot \cv[w]_2 = \cv[s]_2 \cdot \lambda$ (this also explains the purpose of the subformula $\FLambda$, see Section~\ref{sec-structure}).
       This implies
          \[
             \INC(\cv[s])_2 \ = \ 
             \frac{\cv[s]_2 \cdot \lambda}{1-\cv[s]_1} \ = \ 
             \frac{\sum_{w \in W} p_w \cdot \cv[w]_2}{P_W} \ = \
             \sum_{w \in W} (p_w/P_W) \cdot \cv[w]_2
          \]
       \item If $s \models \neg\Zero$, then the subformula~\eqref{incsub-four} says that the probability of reaching a state of the \emph{green diamond} family is equal to $\lambda$. Hence, $\sum_{v \in V} p_v \cdot (1-\cv[v]_1) = \lambda$. Since $P_V = \sum_{v \in V} p_v = \cv[s]_1$, we obtain
          \[
             \DEC(\cv[s])_1 \ = \ 
             \frac{\cv[s]_1 - \lambda}{\cv[s]_1} \ = \ 
             \frac{\cv[s]_1 - \sum_{v \in V} p_v \cdot (1-\cv[v]_1)}{P_V} \ = \ 
             \sum_{v \in V} (p_v/P_V) \cdot \cv[v]_1
          \]
       \item If $s \models \neg\Zero$, then the subformulae~\eqref{incsub-five}
       and~\eqref{incsub-six} say that the probabilities of reaching a state of the \emph{brown triangle} and the \emph{black semicircle} family are equal to $\delta$. This implies
       $\sum_{v \in V} p_v \cdot \cv[v]_2 = \cv[s]_2$. Hence,
       \[
            \DEC(\cv[s])_2 \ = \ 
            \frac{\cv[s]_2}{\cv[s]_1} \ = \ 
            \frac{\sum_{v \in V} p_v \cdot \cv[v]_2}{P_V} \ = \ 
              \sum_{v \in V} (p_v/P_V) \cdot \cv[v]_2 \,.
       \]
\end{itemize}

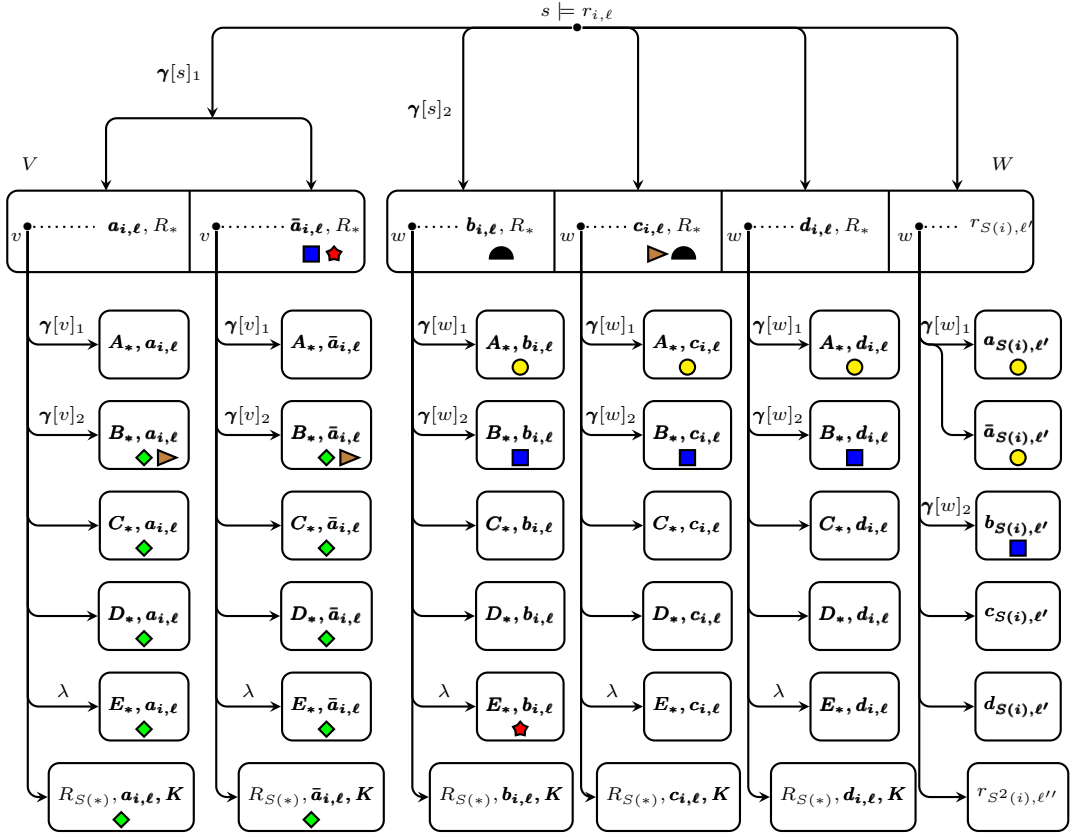
\begin{figure}[t]\centering
    \begin{tikzpicture}[x=1cm, y=1.2cm, font=\scriptsize]
       \tikzstyle{state}=[draw,thick,minimum width=7mm,minimum height=9mm,rounded corners,text centered]
       \tikzstyle{marker}=[circle,draw,thick,fill=yellow,inner sep=0pt,outer sep=0pt,minimum size=2mm]
        \node[label={[label distance=-1mm]above:{$s \models r_{i,\ell}$}},inner sep=0pt,outer sep=0pt,minimum size=1mm] (s) at (6.5,2.5) {$\bullet$}; 
        \draw[thick,rounded corners] (-1,-.2) rectangle (3.7,.7);
        \node[label={[label distance=-1.5mm]below left:{$v$}},inner sep=0pt,outer sep=0pt,minimum size=1mm] (v1) at (-.73,.3) {$\bullet$};
        \node[draw=none,inner sep=0pt,outer sep=0pt,minimum size=1mm] (v2) at (.2,.3) {};
        \node at (0.8,.3) {$\pmb{a_{i,\ell}},R_*$};
        \node at (-.7,1) {$V$};
        \draw[thick,dotted] (v1) -- (v2);
        \node[label={[label distance=-1.5mm]below left:{$v$}},inner sep=0pt,outer sep=0pt,minimum size=1mm] (V1) at (1.75,.3) {$\bullet$};
        \node[draw=none,inner sep=0pt,outer sep=0pt,minimum size=1mm] (V2) at (2.7,.3) {};
        \draw[thick,dotted] (V1) -- (V2);
        \node at (3.2,.3) {$\pmb{\bar{a}_{i,\ell}},R_*$};
        \draw[thick] (1.4,-.2) -- (1.4,.7);
        \draw[thick,rounded corners] (4,-.2) rectangle (12.5,0.7);
        \node[label={[label distance=-1.5mm]below left:{$w$}},inner sep=0pt,outer sep=0pt,minimum size=1mm] (b1) at (4.33,.3) {$\bullet$}; 
        \node[label={[label distance=-1.5mm]below left:{$w$}},inner sep=0pt,outer sep=0pt,minimum size=1mm] (c1) at (6.55,.3) {$\bullet$};
        \node[label={[label distance=-1.5mm]below left:{$w$}},inner sep=0pt,outer sep=0pt,minimum size=1mm] (d1) at (8.75,.3) {$\bullet$};
        \node[label={[label distance=-1.5mm]below left:{$w$}},inner sep=0pt,outer sep=0pt,minimum size=1mm] (r1) at (11,.3) {$\bullet$};
        \node[draw=none,inner sep=0pt,outer sep=0pt,minimum size=1mm] (b2) at (5,.3) {};
        \node[draw=none,inner sep=0pt,outer sep=0pt,minimum size=1mm] (c2) at (7.2,.3) {};
        \node[draw=none,inner sep=0pt,outer sep=0pt,minimum size=1mm] (d2) at (9.4,.3) {};
        \node[draw=none,inner sep=0pt,outer sep=0pt,minimum size=1mm] (r2) at (11.6,.3) {};
        \node at (5.5,.3) {$\pmb{b_{i,\ell}},R_*$};
        \node at (7.7,.3) {$\pmb{c_{i,\ell}},R_*$};
        \node at (9.9,.3) {$\pmb{d_{i,\ell}},R_*$};
        \node at (12.1,.3) {$r_{S(i),\ell'}$};
        \draw[thick,dotted] (b1) -- (b2);
        \draw[thick,dotted] (c1) -- (c2);
        \draw[thick,dotted] (d1) -- (d2);
        \draw[thick,dotted] (r1) -- (r2);
        \draw[thick] (6.2,-.2) -- (6.2,.7);
        \draw[thick] (8.4,-.2) -- (8.4,.7);
        \draw[thick] (10.6,-.2) -- (10.6,.7);
        \node at (12.1,1) {$W$};
        \node[inner sep=0pt,outer sep=0pt,minimum size=0pt] (aa) at (1.7,1.5) {};
        \draw[tran] (s) -| node[left,near end] {$\cv[s]_1$} (aa);
        \draw[tran] (aa) -| (.3,0.7);
        \draw[tran] (aa) -| (3,0.7);
        \draw[tran] (s)  -| node[left,near end] {$\cv[s]_2$} (5,.7);
        \draw[tran] (s)  -| (7.3,.7);
        \draw[tran] (s)  -| (9.5,.7);
        \draw[tran] (s)  -| (11.5,.7);
        \foreach \m/\n/\x in {a/a/0.8,t/\bar{a}/3.2,b/b/5.75,c/c/7.95,d/d/10.15}{
            \foreach \l/\y in {A/-1,B/-2,C/-3,D/-4,E/-5}{
                \node[state] at (\x,\y) (\m\l) {$\pmb{\l_{*},\n_{i,\ell}}$};     
            }
        }
        \foreach \m/\n/\x in {a/a/0.5,t/\bar{a}/3,b/b/5.5,c/c/7.7,d/d/10}{
              \node[state] at (\x,-6) (\m R) {$R_{S(*)},\pmb{\n_{i,\ell},K}$};  
        }
        \node[state] at (12.3,-1) (ra)  {$\pmb{a_{S(i),\ell'}}$};
        \node[state] at (12.3,-2) (raa) {$\pmb{\bar{a}_{S(i),\ell'}}$};
        \node[state] at (12.3,-3) (rb)  {$\pmb{b_{S(i),\ell'}}$};
        \node[state] at (12.3,-4) (rc)  {$\pmb{c_{S(i),\ell'}}$};
        \node[state] at (12.3,-5) (rd)  {$\pmb{d_{S(i),\ell'}}$};
        \node[state] at (12.3,-6) (rr)  {$r_{S^2(i),\ell''}$};
        \foreach \x in {b,c,d}{
            \foreach \y/\n in {A/{$\cv[w]_1$},B/{$\cv[w]_2$},C/{},D/{},R/{}}{
                \draw[tran] (\x1) |- node[above, near end] {\n}(\x\y);
            }
        }
        \draw[tran] (v1)  |- node[above,near end] {$\cv[v]_1$} (aA);
        \draw[tran] (v1)  |- node[above,near end] {$\cv[v]_2$} (aB);
        \draw[tran] (v1)  |- (aC);
        \draw[tran] (v1)  |- (aD);
        \draw[tran] (v1)  |- node[above,near end] {$\lambda$} (aE);
        \draw[tran] (v1)  |- (aR);
        \draw[tran] (V1)  |- node[above,near end] {$\cv[v]_1$} (tA);
        \draw[tran] (V1)  |- node[above,near end] {$\cv[v]_2$} (tB);
        \draw[tran] (V1)  |- (tC);
        \draw[tran] (V1)  |- (tD);
        \draw[tran] (V1)  |- node[above,near end] {$\lambda$} (tE);
        \draw[tran] (V1)  |- (tR);
        \draw[tran] (r1)  |- node[above,near end] {$\cv[w]_1$} (ra);
        \draw[tran] (r1)  |- ($(ra) +(-1,0)$) |- (raa);,c
        \draw[tran] (r1)  |- node[above,near end] {$\cv[w]_2$} (rb);
        \draw[tran] (r1)  |- (rc);
        \draw[tran] (r1)  |- (rd);
        \draw[tran] (r1)  |- (rr);
        \foreach \x in {b,c,d}{
              \draw[tran] (\x1) |- node[above,near end] {$\lambda$} (\x E);  
        }
        \tikzstyle{marker}=[draw,thick,inner sep=0pt,outer sep=0pt,minimum size=2mm]
        \foreach \n in {bA,cA,dA,ra,raa}{
           \node[marker,circle,fill=yellow] at ($(\n) +(0,-.25)$) {};
        }
        \foreach \n in {bB,cB,dB,rb}{
              \node[marker,fill=blue] at ($(\n) +(0,-.25)$) {};
        }
        \node[marker,fill=blue] at (3,0) {};
        \foreach \n in {bE}{
              \node[marker,star,fill=red] at ($(\n) +(0,-.25)$) {};
        }
        \node[marker,star,fill=red] at (3.3,0) {};
        \foreach \n in {aB,aC,aD,aE,aR,tB,tC,tD,tE,tR}{
              \node[marker,diamond,fill=green] at ($(\n) +(0,-.25)$) {};
        }        
        \foreach \n in {aB,tB}{
              \node[marker,isosceles triangle, align=left, fill=brown] at ($(\n) +(0.25,-.25)$) {};
        }
        \node[marker,isosceles triangle, align=center, fill=brown] at (7.5,0) {};
        \node[marker,semicircle,minimum size=1.5mm,fill=black] at (7.9,0) {};
        \node[marker,semicircle,minimum size=1.5mm,fill=black] at (5.5,0) {};
    \end{tikzpicture}
    \caption{Illustrating the meaning of $\Update_{i,\ell,\ell'}$ when $\Ins_\ell$ updates the counter by $\Inc$. The markers are in boldface, and `*' indicates an index $j \in \{0,1,2\}$. The colored shapes indicate the families of states used in the subformulae of $\UInc_{i,\ell,\ell'}$. Note that the probability of certain transitions is equal to $\lambda$ due to the subformula $\FLambda$.}
\label{fig-update-inc}
\end{figure}

\subsection{Justifying the correctness of $\varphi_\M$}
We claim that $\varphi_\M$ is satisfiable iff $\M$ has a recurrent computation, i.e., 
\begin{itemize}
    \item if $t \models \varphi_\M$, then there is a run $\pi$ initiated in $t$ representing a recurrent computation of~$\M$ (here we use the results of Section~\ref{sec-geom});
    \item for every recurrent computation of $\M$, there is a model of $\varphi_\M$ containing a run representing the computation. 
\end{itemize}
We do not give explicit proofs for these claims since they are simplified versions of the proofs of Propositions~\ref{prop-correct} and~\ref{prop-model} formulated in Section~\ref{sec-twocounter}. 


\section{Extension to Non-Deterministic Two-Counter Machines}
\label{sec-twocounter}

Now we show how to adapt the construction of $\varphi_\M$ presented in Section~\ref{sec-formula} when $\M$ is a non-deterministic \emph{two-counter} machine with $m$~instructions.

Let $\M^1,\M^2$ be non-deterministic one-counter machines obtained from $\M$ by changing every instruction of form
\[
    \langle C_k{=}0\, {?}\ Z \mbox{ : } P \rangle:\ \update_1, \update_2
\]
into 
\[
    \langle C{=}0\, {?}\ Z \mbox{ : } P \rangle:\ \update
\]
where $\update \equiv \update_1$ for $\M^1$ and $\update \equiv \update_2$ for $\M^2$. Observe that $\M^1$ and $\M^2$ do \emph{not} simulate $\M$ in any reasonable sense.

Consider the formulae $\varphi_{\M^1}$ and $\varphi_{\M^2}$ constructed for $\M^1$ and $\M^2$ in the way described in the previous sections, where the underlying sets $\A^1$ and $\A^2$ of atomic propositions are disjoint. For every subformula $\Form$ constructed in the previous sections, we use $\Form^1$ and $\Form^2$ to denote the corresponding formulae of  $\varphi_{\M^1}$ and $\varphi_{\M^2}$, respectively.

Intuitively, the formula $\varphi_\M$ is ``basically'' the conjunction of $\varphi_{\M^1}$ and $\varphi_{\M^2}$ with some additional synchronization. First, for every $\ell \in \{1,\ldots,m\}$, let $\NewZero_\ell$ be the formula defined as follows:
\begin{itemize}
    \item If the counter tested for zero in $\Ins_\ell$ is $C_1$, then $\NewZero_\ell \equiv \Zero^1$.
    \item If the counter tested for zero in $\Ins_\ell$ is $C_2$, then $\NewZero_\ell \equiv \Zero^2$.
\end{itemize}  
Recall that $\varphi_{\M^1}$ and $\varphi_{\M^2}$ contain the subformulae 
$\Simulate_\ell^1$ and $\Simulate_\ell^2$ corresponding to the subformula
\begin{equation*}\textstyle
    \Simulate_\ell ~~\equiv~~ (\Zero \Rightarrow \bigvee_{\ell' \in Z^\ell} \Step_{\ell,\ell'}) ~~\wedge~~
    (\neg\Zero \Rightarrow \bigvee_{\ell' \in P^\ell} \Step_{\ell,\ell'}) 
\end{equation*} 
defined in Section~\ref{app-sim-subformula}. Let $\psi_{\M^1}$ and $\psi_{\M^2}$ be the formulae obtained from $\varphi_{\M^1}$ and $\varphi_{\M^2}$ by replacing the 
two $\Simulate_\ell^1$ and $\Simulate_\ell^2$ with $\NewSimulate_\ell^1$ and $\NewSimulate_\ell^2$, where
{\small%
\begin{eqnarray*}
    \NewSimulate_\ell^1 & \equiv & \textstyle\big((\NewZero_\ell \wedge \at_{\ell}^2) \Rightarrow
       \bigvee_{\ell' \in Z^\ell} \Step^1_{\ell,\ell'}\big) ~\wedge~
    \big((\neg\NewZero_\ell \wedge \at_{\ell}^2) \Rightarrow \bigvee_{\ell' \in P^\ell} \Step^1_{\ell,\ell'}\big)\\
     & \wedge & 
    \bigwedge_{i=0}^2 \left(\ex{r_{i,\ell}^1}_{\A^1} \wedge \neg\at_\ell^2\right) \Rightarrow  \left( \ex{r_{i,\ell}^1}_{\A^1} \opu_{=1} \rSuc^1_{i,\ell,\ell} 
    \ \wedge\ \UDec^1_{i,\ell,\ell} \right)\\  
    \NewSimulate_\ell^2 & \equiv & \textstyle\big((\NewZero_\ell \wedge \at_{\ell}^1) \Rightarrow \bigvee_{\ell' \in Z^\ell} \Step^2_{\ell,\ell'}\big) 
    ~\wedge~
    \big((\neg\NewZero_\ell \wedge \at_{\ell}^1) \Rightarrow \bigvee_{\ell' \in P^\ell} \Step^2_{\ell,\ell'}\big)\\
  & \wedge & \bigwedge_{i=0}^2 \left(\ex{r_{i,\ell}^2}_{\A^2} \wedge \neg\at_\ell^1\right) \Rightarrow  \left( \ex{r_{i,\ell}^2}_{\A^2} \opu_{=1} \rSuc^2_{i,\ell,\ell} \ \wedge\ \UDec^2_{i,\ell,\ell} \right) 
\end{eqnarray*}}%
Intuitively, replacing $\Zero$ with $\NewZero_\ell$ ensures that the counter tested for zero in $\NewSimulate_\ell$ is indeed the counter tested for zero in $\Ins_\ell$.
The reason for adding the third conjunct in $\NewSimulate_\ell^1$ and $\NewSimulate_\ell^2$ is more subtle. Roughly speaking, we cannot prevent the situation when a state satisfies $\at_\ell^1$ but not $\at_\ell^2$ (or vice versa). In this case, it does not make sense to continue the simulation of $\M$. However, we still need to ensure that the assumptions of Theorem~\ref{thm-area} are satisfied. Therefore, we start to decrement the counter $C_1$ (or $C_2$).

Observe that $\psi_{\M^1}$ and $\psi_{\M^2}$ may still ``choose'' a different target label when simulating $\Ins_\ell$. Furthermore, even if they choose the same target label, the simulation of the counter's updates is performed completely independently. Hence, there is no guarantee that a state $t$ encoding a configuration $(\ell,c_1,c_2)$ of $\M$ can reach a state encoding a successor configuration $(\ell',c'_1,c'_2)$. This is enforced by the formula 
\begin{eqnarray*}
    \Sync & \equiv & \bigwedge_{\ell=1}^m \bigwedge_{i=0}^2
        \opg_{=1} \left((r^1_{i,\ell} {\wedge} r^2_{i,\ell}) \Rightarrow\textstyle
         \bigvee_{\ell' =1}^m \opg_{>0} \big( (r^1_{i,\ell} {\wedge} r^2_{i,\ell}) 
         \vee (r^1_{S(i),\ell'} {\wedge} r^2_{S(i),\ell'}) \vee a^1_{S(i),\ell'}
         \big)\right)
\end{eqnarray*}
Technically, $\Sync$ says that whenever a state $t$ satisfying $r^1_{i,\ell} \wedge r^2_{i,\ell}$ is visited, then $t$ has a successor $t'$ satisfying $r^1_{S(i),\ell'} \wedge r^2_{S(i),\ell'}$ for some $\ell'$. Due to the subformulae $\Struct^1$ and $\Struct^2$, the state $t'$ must be visited form $t$ via a path such that all states except for $t'$ satisfy $r^1_{i,\ell} \wedge r^2_{i,\ell}$. Note that $t'$ has a successor satisfying the marker $a^1_{S(i),\ell'}$. Hence, the probability of all runs initiated in $t$ satisfying the formula
\[
  \opg \big( (r^1_{i,\ell} {\wedge} r^2_{i,\ell}) 
\vee (r^1_{S(i),\ell'} {\wedge} r^2_{S(i),\ell'}) \vee a^1_{S(i),\ell'}\big)
\]
is positive (we could use other markers instead of $a^1_{S(i),\ell'}$). 

We put 
\begin{equation*}
    \varphi_{\M} ~\equiv~ \psi_{\M^1} \wedge \psi_{\M^2} \wedge \Sync \wedge \Recurrent
\end{equation*}
where 
\begin{eqnarray*}
    \Recurrent & \equiv & \textstyle\opg_{=1}\big(\bigwedge_{\ell=1}^m (\at^1_\ell {\wedge} \at^2_\ell) \Rightarrow \opf_{>0} (\bigvee_{\ell' \in \tau}(\at^1_{\ell'} {\wedge} \at^2_{\ell'}))\big)
\end{eqnarray*}
The formula $\Recurrent$ enforces the existence of a run representing a $\tau$-recurrent computation of~$\M$.

We say that a run $\pi = s_0,s_1,\ldots$ of a Markov chain represents a computation $\conf_0,\conf_1,\ldots$ of $\M$ if there is an infinite increasing sequence of indexes $j_0,j_1,\ldots$ such that for every $n \in \N$, the state $s_{j_n}$ represents the configuration $\conf_n$, i.e., if $\conf_n = (\ell,c_1,c_2)$, then $s_{j_n} \models (\at^1_\ell \wedge \at^2_\ell)$, $\cv^1[s_{j_n}] = \INC^{c_1}(\vec{z})$, and $\cv^2[s_{j_n}] = \INC^{c_2}(\vec{z})$.

\begin{proposition}
\label{prop-correct}
    If $s \models \varphi_\M$, then there exists $\pi \in \run(s)$ representing a recurrent computation of $\M$.
\end{proposition}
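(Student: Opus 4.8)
\emph{Overall approach.} The plan is to read off, from any model of $\varphi_\M$, the computation it encodes and to exhibit a run of the model that traces it. Fix $s$ with $s\models\varphi_\M$ and assume, as in Section~\ref{sec-formula}, that every state of the model is reachable from $s$, so that every $\opg_{=1}$-guarded conjunct of $\varphi_\M$ holds at every state. First I would establish the global fact used everywhere: for $k\in\{1,2\}$ let $T^k$ be the set of all $\cv^k[q]$ with $q$ a relevant state. From $\Eligible^k$ one gets $T^k\subseteq(\frac{1-\sqrt{1-4\lambda}}{2},\vec{z}_1)\times(0,1)$, and from $\opg_{=1}\Decrement^k$ together with the $\UDec^k$ and $\UInc^k$ parts of the formulae $\Step^k_{\ell,\ell'}$ occurring inside $\NewSimulate^k_\ell$ (whose probability identities are exactly those computed in Sections~\ref{app-init} and~\ref{app-sim-subformula}) one checks that $\DEC(\vec{v})$ is a convex combination of elements of $T^k$ for every $\vec{v}\in T^k\smallsetminus\Area(\vec{z})$. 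Theorem~\ref{thm-area} then yields $T^k\subseteq\Area(\vec{z})$, so whenever a vertex $\INC^{c}(\vec{z})$ of $\Area(\vec{z})$ is written as a convex combination of elements of $T^k$, every point occurring with positive weight equals $\INC^{c}(\vec{z})$.

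\emph{One simulation step.} Call $t$ a \emph{configuration state} for $(\ell,c_1,c_2)$ if $t$ represents $(\ell,c_1,c_2)$ and, moreover, $t\models r^1_{i,\ell}\wedge r^2_{i,\ell}$ for a common index $i$. I would prove: if $t$ is a configuration state for $(\ell,c_1,c_2)$ reachable from $s$, then there are a label $\ell'$, counter values $c_1',c_2'$, a configuration state $t'$ for $(\ell',c_1',c_2')$, and a path from $t$ to $t'$ whose internal states all satisfy $r^1_{i,\ell}\wedge r^2_{i,\ell}$, with $(\ell,c_1,c_2)\mapsto(\ell',c_1',c_2')$. Apply $\Sync$ at $t$: it gives such a path to a state $t'$ with $t'\models r^1_{S(i),\ell'}\wedge r^2_{S(i),\ell'}$, reached with positive probability. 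Since $t\models\at^2_\ell$ and $\NewZero_\ell$ holds at $t$ exactly when the counter tested in $\Ins_\ell$ is $0$, $\NewSimulate^k_\ell$ forces $\Step^k_{\ell,\ell''_k}$ for some $\ell''_k$ in $Z^\ell$ (resp.\ in $P^\ell$) according to the zero test; the $\opu_{=1}$-conjunct of $\Step^k_{\ell,\ell''_k}$ together with $\Struct^k$ forces every configuration-type exit of the $r^k_{i,\ell}$-region to satisfy $r^k_{S(i),\ell''_k}$, and since $t'$ is such an exit we get $\ell'=\ell''_1=\ell''_2$, a valid successor label matching the zero test. For the counters, $\Step^k_{\ell,\ell'}$ contains $\Update^k_{i,\ell,\ell'}\equiv\UDec^k_{i,\ell,\ell'}$ or $\UInc^k_{i,\ell,\ell'}$ according to $\update^\ell_k$, and the analysis of Section~\ref{app-sim-subformula} shows that $\DEC(\cv^k[t])$ (if $\update^\ell_k=\Dec$ and $t\models\neg\Zero^k$), or $\INC(\cv^k[t])$ (if $\update^\ell_k=\Inc$), or $\vec{z}$ (in the zero cases), is a convex combination of the $\cv^k[\cdot]$ of a family of states that contains $t'$ with positive weight. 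Since $\cv^k[t]=\INC^{c_k}(\vec{z})$ is a vertex of $\Area(\vec{z})$, and so are its $\DEC$- and $\INC$-images, the global fact from Step~1 forces $\cv^k[t']=\INC^{c_k'}(\vec{z})$ with $c_k'=c_k+1$ or $c_k'=\max\{0,c_k-1\}$ as required; hence $t'$ is a configuration state for $(\ell',c_1',c_2')$ and $(\ell,c_1,c_2)\mapsto(\ell',c_1',c_2')$.

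\emph{Recurrence and assembly.} By $\Init^1\wedge\Init^2$, $s$ is a configuration state for $(1,0,0)$. Iterating the previous step from $s$ gives an infinite $\mapsto$-chain $\conf_0\mapsto\conf_1\mapsto\cdots$ with $\conf_0=(1,0,0)$, together with configuration states $s=t_0,t_1,\ldots$ where $t_n$ represents $\conf_n$ and is reached from $t_{n-1}$ (hence from $s$) by a path as above; concatenating these paths produces $\pi\in\run(s)$ with $\pi(j_n)=t_n$ for a strictly increasing sequence $j_0<j_1<\cdots$, i.e.\ $\pi$ represents the computation $\conf_0,\conf_1,\ldots$ of $\M$. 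To see that this computation is $\tau$-recurrent, fix $n$ and apply $\Recurrent$ at $t_n$: with positive probability a run from $t_n$ reaches a state $q$ with $q\models\at^1_{\ell'}\wedge\at^2_{\ell'}$ for some $\ell'\in\tau$; arguing as in Step~2 (both copies are then in a configuration block at $q$, the labels are propagated by the $\Step$-subformulae, and the counters are pinned by the global fact of Step~1), such a $q$ is a configuration state lying on the chain $\conf_n\mapsto\conf_{n+1}\mapsto\cdots$, so $\ell'=\ell_m$ for some $m\ge n$. Hence $\ell_m\in\tau$ for infinitely many $m$, so $\pi$ represents a recurrent computation of $\M$.

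\emph{Main obstacle.} The heart of the argument is the simulation step: from the probability identities imposed by $\UDec^k$ and $\UInc^k$ one must correctly identify the finite family of states over which $\INC(\cv^k[t])$ (resp.\ $\DEC(\cv^k[t])$, resp.\ $\vec{z}$) appears as a convex combination, verify that the $\Sync$-provided successor $t'$ really lies in that family with positive weight, and only then invoke Theorem~\ref{thm-area} and the vertex property of $\Area(\vec{z})$ to upgrade the convex-combination identity to the exact equalities $\cv^k[t']=\INC^{c_k'}(\vec{z})$. A second delicate point is the recurrence argument: one must rule out that a state reached via $\Recurrent$ and satisfying $\at^1_{\ell'}\wedge\at^2_{\ell'}$ is a ``spurious'' configuration state produced by the desynchronising third conjunct of $\NewSimulate^k_\ell$ (which merely keeps decrementing one counter when $\at^1_\ell$ and $\at^2_\ell$ disagree), rather than a genuine node of the simulated chain.
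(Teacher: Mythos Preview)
Your overall strategy coincides with the paper's: apply Theorem~\ref{thm-area} to the set of $\cv^k[\cdot]$-values to pin every convex combination to a vertex, and use $\Sync$ to produce a synchronised successor at each step. Your analysis of the simulation step is essentially the paper's (the paper actually proves the slightly stronger statement that \emph{every} $u\in U^k$ has $\cv^k[u]$ equal to the target value, not just the particular $t'$ supplied by $\Sync$; but your version suffices).

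The genuine gap is in your recurrence argument. You first fix one specific chain $t_0,t_1,\ldots$ (equivalently one specific computation $\conf_0\mapsto\conf_1\mapsto\cdots$) and then try to show that \emph{this} chain is $\tau$-recurrent by applying $\Recurrent$ at each $t_n$. But $\Recurrent$ only guarantees that \emph{some} state $q$ with $q\models\at^1_{\ell'}\wedge\at^2_{\ell'}$, $\ell'\in\tau$, is reachable from $t_n$; since $\M$ is non-deterministic, even if $q$ is a genuine configuration state in the simulation tree, it may sit on a \emph{different} branch than the one you already committed to, i.e., $q$ need not represent any $\conf_m$ of your fixed computation. Hence the inference ``$\ell'=\ell_m$ for some $m\ge n$'' is unjustified. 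The paper avoids this by \emph{not} fixing the chain in advance: it observes that the choices $s_{j+1}\in U^1\cap U^2$ are not unique and argues that at least one sequence of choices yields a $\tau$-recurrent computation (otherwise $\Recurrent$ fails). Equivalently, one should let $\Recurrent$ \emph{guide} the construction: from $t_0$ pick a reachable $\tau$-labelled configuration state $q_1$ and take the simulation-tree path to it as the first segment $t_0,\ldots,t_{n_1}=q_1$; then repeat from $q_1$. Your ``second delicate point'' about spurious $\at^1\wedge\at^2$-states is a real issue for this step (and the paper is terse about it), but it is orthogonal to the gap above.
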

\begin{proof}
    Let $\calR^1 = \{r^1_{i,\ell},R^1_i \mid 0 \leq i \leq 2, 1 \leq \ell \leq m\}$ and $\calR^2 = \{r^2_{i,\ell},R^2_i \mid 0 \leq i \leq 2, 1 \leq \ell \leq m\}$. For a given state $t$ and $k \in \{1,2\}$, we say that $t$ is \emph{$k$-relevant} if $t$ satisfies some proposition of $\calR^k$. For a $k$-relevant state $t$, we define $\cv^k[t]$ as the pair of probabilities of satisfying the path formulae $\Phi_t^k$ and $\Psi_t^k$ in $t$, respectively (see Section~\ref{sec-formula}).
    
    Let us fix a state $s$ such that $s \models \varphi_\M$.
    We put 
    \[
       T = \left\{\cv^k[t] \mid k \in \{1,2\}, t \mbox{ is $k$-relevant and reachable from $s$}\right\}
    \]
    It follows directly form the construction of $\varphi_\M$ that $T$ satisfies the assumptions of Theorem~\ref{thm-area}. 

    Consider the sequence of states $s_0,s_1,\ldots$ definite inductively as follows:
    \begin{itemize}
        \item $s_0 = s$. Observe that $s$ encodes the initial configuration $(1,0,0)$ of $\M$.
        \item Consider the state $s_j$ encoding a configuration $(\ell,c_1,c_2)$
        such that $s_j \models r^1_{i,\ell} \wedge r^2_{i,\ell}$. For all $k \in \{1,2\}$, let $U^k$ be the set of all states $u$ such that $u \models r^k_{S(i),\ell'}$ and $p_u > 0$, where $p_u$ is the probability of all runs initiated in $s_j$ such that $w$ 
        visits $u$ and all states preceding this visit satisfy $r_{i,\ell}^k$. It follows from the construction of $\varphi_\M$ that the vector $\vec{v}^k$ representing the counter value obtained from $c_k$ by performing $\update_\ell^k$ is a positive convex combination of a set of vectors 
        $Y \subseteq T$, where $\cv^k[u] \in Y$ for every $u \in U^k$. By applying the results of Section~\ref{sec-geom}, we obtain that $\cv^k[u] = \vec{v}^k$ for every $u \in U^k$. Hence, we can choose $s_{j+1}$ as an (arbitrary) element of $U^1 \cap U^2$ which must be non-empty due to $\Sync$. Note that $s_{j+1}$ encodes a successor configuration of $(\ell,c_1,c_2)$.
    \end{itemize}
    The above sequence $s_0,s_1,\ldots$ is \emph{not} unique, and the runs associated with different sequences may represent different computations of $\M$. At least one of these computations must be $\tau$-recurrent; otherwise, we obtain a contradiction with $s \models \Recurrent$. 
\end{proof}

\begin{proposition}
\label{prop-model}
    For every recurrent computation of $\M$ there exist a Markov chain $M$, a state $s$ of $M$, and $\pi \in \run(s)$ such that $s \models \varphi_\M$ and $\pi$ represents the computation.
\end{proposition}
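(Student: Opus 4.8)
\medskip\noindent\textbf{Proof plan.}\quad The plan is to build the Markov chain $M$ directly from the given $\tau$-recurrent computation $\conf_0,\conf_1,\ldots$ of $\M$, by concatenating ``gadgets'', one per computational step, along a \emph{spine} of states $s_0,s_1,s_2,\ldots$, and to take $s = s_0$ and $\pi$ the run walking straight down this spine. The invariant to maintain is that $s_n$ encodes $\conf_n$: if $\conf_n = (\ell,c_1,c_2)$, then $s_n$ satisfies $\ex{r^1_{i_n,\ell}}_{\A^1} \wedge \ex{r^2_{i_n,\ell}}_{\A^2}$ with $i_n = n \bmod 3$, and $\cv^1[s_n] = \INC^{c_1}(\vec{z})$, $\cv^2[s_n] = \INC^{c_2}(\vec{z})$. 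In particular $s_0$ gets valuation exactly $\{r^1_{0,1},r^2_{0,1}\}$, so $s_0 \models \ex{r^1_{0,1}}_{\A^1} \wedge \ex{r^2_{0,1}}_{\A^2}$, and since $c_1 = c_2 = 0$ it satisfies $\Zero^1 \wedge \Zero^2$. Every computational step is modelled by a single spine transition $s_n \to s_{n+1}$ (with $s_n$ additionally carrying a ``bushy'' gadget hanging off it, off the spine), so $\pi = s_0 s_1 s_2 \cdots$ with $j_n = n$ represents the computation by definition.

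First I would fix the three elementary building blocks. (i) To realise a relevant state $t$ with a prescribed value $\cv^k[t] = \INC^{c}(\vec{z})$, I attach the tree of Fig.~\ref{fig-struct-mb} in the appropriate ($r$- or $R$-)variant, choosing the branch probabilities out of $t$ so that the probability of $\Phi^k_t$ is $\INC^{c}(\vec{z})_1$ and that of $\Psi^k_t$ is $\INC^{c}(\vec{z})_2$; by Lemma~\ref{lem-tausigma}(a) these two numbers lie in $I_\lambda \times [0,1]$, and the remaining mass (plus the mass $\lambda$ that $\FLambda$ prescribes on the $R$-variant) can be distributed over the other -- marker, hence self-looping -- branches precisely because $\lambda,\vec{z}_1,\vec{z}_2,\delta$ were chosen with $2\lambda + 2\delta + 2\vec{z}_1 + 2\vec{z}_2 < 1$. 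The self-looping markers make $\Mark$ hold and turn every $\opg(\cdot)$ occurring in the $\Phi,\Psi,\Zero,\Eligible,\Update,\Decrement$ formulae into ``the probability of eventually entering the matching marker branch'', i.e.\ a finite sum of branch probabilities. (ii) Below $s_n$ I place the product of the gadgets demanded by $\Update^1_{i_n,\ell,\ell_{n+1}}$ and $\Update^2_{i_n,\ell,\ell_{n+1}}$, where $\ell_{n+1}$ is the target label used in $\conf_n \mapsto \conf_{n+1}$ (the two gadgets are independent as $\A^1 \cap \A^2 = \emptyset$, and are glued by a product-of-marginals assignment of transition probabilities out of $s_n$). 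Since I am free to choose the model, each of the sets $U^k,V^k,W^k$ of Section~\ref{app-sim-subformula} is realised by a single relevant successor whose $\cv^k$-value equals exactly $\DEC(\cv^k[s_n])$, resp.\ $\INC(\cv^k[s_n])$ (legal by Lemma~\ref{lem-tausigma}; note $s_n \models \neg\Zero^k$ iff $c_k > 0$), so all the ``convex combinations'' referred to in Section~\ref{app-sim-subformula} are trivial one-point ones, and the probability equalities inside $\UInc$/$\UDec$ then reduce to the defining identities of $\INC$ and $\DEC$; the spine continues to the $r$-successor $s_{n+1}$, which inherits the invariant since $\cv^k[s_{n+1}]$ equals $\INC(\cv^k[s_n])$ or $\DEC(\cv^k[s_n])$ according to $\update^k_\ell$. (iii) The obligation $\opg_{=1}\Decrement^k$, and the third conjunct of $\NewSimulate^k_\ell$, force every $R_i$-state with a positive value to have a successor encoding the decremented value, so I hang off each such state a chain $R_i \to R_{S(i)} \to \cdots$ of relevant states whose $\cv^k$-values descend $\INC^{c}(\vec{z}),\INC^{c-1}(\vec{z}),\ldots,\vec{z}$ (cf.\ Fig.~\ref{fig-dec-inc-app}), after which it loops -- the obligation disappears at $\vec{z}$ since $\neg\Zero^k$ fails there.

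Next I would verify $s_0 \models \varphi_\M = \psi_{\M^1} \wedge \psi_{\M^2} \wedge \Sync \wedge \Recurrent$ conjunct by conjunct. $\Struct^k$ holds by the shape of the gadgets and the self-looping markers. At every relevant state $t$ we have $\cv^k[t] = \INC^{n}(\vec{z})$ for some $n \in \N$, hence $\cv^k[t]_1 \in [\beta,\vec{z}_1]$ by Lemma~\ref{lem-tausigma}(c),(g) -- which gives $\Eligible^k$ -- and $t \models \Zero^k$ iff $n = 0$ iff $\cv^k[t] = \vec{z}$; thus the first three conjuncts of $\Init^k$ hold at $s_0$, and $\opg_{=1}\Decrement^k$ holds because the decrement gadget makes $\DEC(\cv^k[t])$ equal its one-point convex combination at every positive-value $R_i$-state $t$. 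The conjunct $\opg_{=1}\big(\bigwedge_\ell(\at^k_\ell \Rightarrow \NewSimulate^k_\ell)\big)$ holds because the only states satisfying $\at^k_\ell$ are the spine states $s_n$, and at $s_n$ I installed exactly the step gadget for the actual transition $\conf_n \mapsto \conf_{n+1}$: the guard $\NewZero_\ell$ (resp.\ $\Zero^k$) holds iff the counter tested by $\Ins_\ell$ is zero, iff the used label $\ell_{n+1}$ lies in the matching set $Z^\ell$ or $P^\ell$; the $\opu_{=1}$-conjunct of $\Step^k_{\ell,\ell_{n+1}}$ propagates the label $\ell_{n+1}$ to $s_{n+1}$; and $\Update^k$ is realised by step (ii). (The third conjunct of $\NewSimulate^k_\ell$ is vacuous in this model since $s_n$ satisfies $\at^1_\ell$ iff it satisfies $\at^2_\ell$.) $\Sync$ holds because $s_{n+1}$ is a common $S(i_n)$-successor of $s_n$ in both step gadgets and has a successor satisfying the marker $a^1_{S(i_n),\ell_{n+1}}$, so the run $s_n, s_{n+1}$ followed by that successor and its descendants witnesses the $\opg_{>0}$-disjunct for $\ell_{n+1}$ with positive probability. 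Finally $\Rec^k$ (retained in $\psi_{\M^k}$) and $\Recurrent$ hold because the only states satisfying $\at^k_\ell$ (resp.\ $\at^1_\ell \wedge \at^2_\ell$) are spine states, and from every spine state $\pi$ revisits with positive probability some $s_{n'}$ with $\ell_{n'} \in \tau$, since $\conf_0,\conf_1,\ldots$ is $\tau$-recurrent; the same observation shows $\pi$ represents a recurrent computation of $\M$, as required.

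I expect the main obstacle to be step (ii) for the $\Inc$ case: exhibiting a single assignment of transition probabilities out of $s_n$ that simultaneously realises all six equalities defining $\UInc_{i_n,\ell,\ell_{n+1}}$ -- lines (\ref{incsub-one})--(\ref{incsub-six}) -- while keeping the outgoing mass within $[0,1]$, across the product of the $\M^1$- and $\M^2$-views, is where the numerical choice of $\lambda,\vec{z}_1,\vec{z}_2,\delta$ (hence of $I_\lambda$, $\vec{z}$, and the auxiliary constant $\varrho$) with $2\lambda + 2\delta + 2\vec{z}_1 + 2\vec{z}_2 < 1$ is really used; the corresponding checks for $\UDec$ and $\Decrement$ are analogous but lighter. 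It is worth noting that, unlike in Proposition~\ref{prop-correct}, Theorem~\ref{thm-area} is \emph{not} invoked here: in the model we build, the set $T$ of all $\cv^k[t]$ over relevant reachable $t$ satisfies $T \subseteq \{\INC^n(\vec{z}) \mid n \in \N\} \subseteq \Area(\vec{z})$, so every convex-combination obligation of the $\Update$/$\Decrement$ formulae is discharged by the trivial one-point combination.
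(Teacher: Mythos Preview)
Your overall plan---a spine of states $s_0,s_1,\ldots$ encoding $\conf_0,\conf_1,\ldots$, with off-spine gadgets realizing the $\Update^k$ constraints and decrement chains below the $R$-states---is exactly the paper's approach, and most of the verification you sketch goes through as you describe.

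There is, however, one genuine gap: your claim that ``the third conjunct of $\NewSimulate^k_\ell$ is vacuous in this model since $s_n$ satisfies $\at^1_\ell$ iff it satisfies $\at^2_\ell$'' is not sustainable, and this is precisely the place where the two-counter extension bites. In the case $\update^1_\ell = \update^2_\ell = \Dec$, the subformula $\UCopy^k$ forces the total probability (from $s_n$) of reaching a state satisfying $r^k_{S(i),\ell'}$ to equal $\cv^k[s_n]_1 = \INC^{c_k}(\vec z)_1$. When $c_1 \neq c_2$ these two numbers differ, so you cannot route all of this mass into the single spine successor $s_{n+1}$: some positive probability must go to a state satisfying $r^1_{S(i),\ell'}$ but \emph{not} $r^2_{S(i),\ell'}$ (or vice versa). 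Such a ``half-spine'' state satisfies $\at^1_{\ell'} \wedge \neg \at^2_{\ell'}$, which triggers the third conjunct of $\NewSimulate^1_{\ell'}$ and hence requires its own $\UDec^1$ gadget (and, recursively, a decrement chain for counter~$1$ combined with an $R^2$-structure for counter~$2$). Your ``product-of-marginals'' glue would in fact produce exactly such states, so the issue is not the gluing idea but the overlooked obligation. The paper handles this explicitly by a case split on whether $\INC^{c_1}(\vec z)_1 \lessgtr \INC^{c_2}(\vec z)_1$: the spine edge gets the smaller of the two probabilities, the surplus goes to a dedicated half-spine state of the form $[j,r^1_{S(i),\ell'},d^2_{i,\ell},R^2_{S(i)},c_1{-}1,0]$ (or symmetrically), and separate closure rules then equip these states with the required $\UDec^1$/$\Decrement^2$ structure. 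Once you add this case (your building block~(iii) already contains the needed pieces), the rest of your plan goes through; note that the mixed $\Dec/\Inc$ and $\Inc/\Inc$ cases do \emph{not} need half-spine states, because on the $\Inc$ side the spine probability is unconstrained by $\UInc$.
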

A proof of Proposition~\ref{prop-model} is relatively simple but technical. 
It can be found in \fp.

\section{The Undecidability Results}

The high undecidability of PCTL satisfiability is an immediate consequence of Propositions~\ref{prop-correct} and~\ref{prop-model}. Note that the 
formula $\varphi_\M$ constructed in in Section~\ref{sec-twocounter} uses only the path connective~$\opu$ and the connectives $\opf$, $\opg$ definable from~$\opu$. Hence, the high undecidability of PCTL satisfiability holds even for the $\opu$-fragment of PCTL. Observe that $\opu$ is actually used only in the subformulae $\NewSimulate_\ell^1$ and $\NewSimulate_\ell^2$. In \fp, it is shown that these formulae can be rewritten so that they use only the connectives $\opf$ and $\opg$. Hence, the result holds even for the $\opf,\opg$-fragment of PCTL.
Furthermore, when we omit the subformula enforcing the existence of a run representing a recurrent computation and assume that $\M$ is deterministic, then the resulting formula has a \emph{finite} model iff $\M$ has a bounded computation. This means that \emph{finite} PCTL satisfiability is undecidable even for the $\opf,\opg$-fragment of PCTL. Thus, we obtain our main result.

\begin{theorem}   
The satisfiability problem for the $\opf,\opg$-fragment of PCTL is $\Sigma_1^1$-hard, and the finite satisfiability problem for the $\opf,\opg$-fragment of PCTL is $\Sigma_1^0$-hard. 
\end{theorem}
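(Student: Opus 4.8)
The plan is to treat the theorem as a corollary of the apparatus already in place: Proposition~\ref{prop-twocounter} supplies the two hard source problems, Propositions~\ref{prop-correct} and~\ref{prop-model} certify that $\varphi_\M$ behaves as advertised, and only two light refinements remain—eliminating $\opu$ in favour of $\opf,\opg$, and passing from recurrent to bounded computations via finite models.

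For the $\Sigma_1^1$ part I would first observe that the map $\M \mapsto \varphi_\M$ defined in Sections~\ref{sec-formula} and~\ref{sec-twocounter} is plainly computable: it is a fixed syntactic transformation of the instruction list, with $\lambda, \vec z_1, \vec z_2, \delta$ instantiated to fixed rationals. By Propositions~\ref{prop-correct} and~\ref{prop-model}, $\varphi_\M$ is satisfiable iff $\M$ has a $\tau$-recurrent computation, so composing with the $\Sigma_1^1$-hardness of the latter (Proposition~\ref{prop-twocounter}) gives $\Sigma_1^1$-hardness of PCTL satisfiability. To land inside the $\opf,\opg$-fragment, I would note that every path operator occurring in $\varphi_\M$ is $\opu$ or a derived $\opf,\opg$, and that the genuine $\opu$-occurrences—which, as noted in the discussion, may be taken to lie only within $\NewSimulate_\ell^{1,2}$—are $\opu_{=1}$-constraints of the shape ``stay inside the current $r$-block almost surely until a prescribed exit is taken''. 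Since $\Struct^k$ already forces that, from every relevant state, almost every run leaves its block through one of the admissible successors, such a constraint is equivalent to an $\opg_{=1}$/$\opf$-combination forbidding any ``wrong'' exit; I would carry out this rewriting (the details deferred to \fp) and check it leaves the arguments of Propositions~\ref{prop-correct} and~\ref{prop-model} intact.

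For the $\Sigma_1^0$ part I would take $\M$ deterministic and let $\varphi_\M^{\mathrm{fin}}$ be $\varphi_\M$ with the conjunct $\Recurrent$ deleted, and prove that $\varphi_\M^{\mathrm{fin}}$ has a finite model iff the unique computation of $\M$ is bounded. For ``only if'', observe that $\Init$, $\psi_{\M^1}$, $\psi_{\M^2}$, and $\Sync$ still force the state satisfying $\varphi_\M^{\mathrm{fin}}$ to represent $(1,0,0)$ and every represented configuration to have a reachable state representing its (by determinism, unique) successor—precisely the inductive step of the proof of Proposition~\ref{prop-correct}, which never invoked $\Recurrent$. Hence every configuration of the computation is represented by some reachable state, and distinct configurations by distinct states, since the $\at_\ell$'s are mutually exclusive and $\INC$ is injective (Lemma~\ref{lem-tausigma}(b)); so a finite model forces only finitely many distinct configurations, i.e.\ a bounded computation. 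For ``if'', I would run the construction underlying Proposition~\ref{prop-model} on that computation: when it is bounded the configuration gadgets may be shared among equal configurations and the eventually periodic run closes into a finite cycle, yielding a finite Markov chain that still satisfies $\varphi_\M^{\mathrm{fin}}$ (we only removed a conjunct). Composing with the $\Sigma_1^0$-hardness of bounded computation for deterministic two-counter machines (Proposition~\ref{prop-twocounter}) gives $\Sigma_1^0$-hardness of finite PCTL satisfiability, and since $\varphi_\M^{\mathrm{fin}}$ inherits the $\opf,\opg$-only shape, the bound holds already for the $\opf,\opg$-fragment. As a sanity check one may note that $\varphi_\M^{\mathrm{fin}}$ is always satisfiable—an infinite model is produced by the same construction, now with no recurrence concern—so the reduction genuinely targets finiteness.

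The substantive content has already been spent in Propositions~\ref{prop-correct} and~\ref{prop-model}; what remains is essentially bookkeeping. The one step with real technical weight is the elimination of $\opu$: one must verify that the block-structure imposed by $\Struct$ makes the $\opu_{=1}$-formulation and its $\opf,\opg$-rewriting interchangeable in every model, which is the part I would expect to occupy the corresponding section of \fp.
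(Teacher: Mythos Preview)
Your plan is correct and mirrors the paper's own argument: both halves are obtained as corollaries of Propositions~\ref{prop-correct} and~\ref{prop-model} together with Proposition~\ref{prop-twocounter}, with the $\opu$-elimination deferred to the Appendix and the finite case handled by dropping the recurrence constraint and exploiting periodicity. Two small corrections are worth making. First, the $\opu$-occurrences are not confined to $\NewSimulate_\ell^{1,2}$; the subformula $\Succ^k$ inside $\Struct^k$ also uses $\opu_{=1}$, and it is in fact $\Struct$ that the Appendix rewrites (the same rewriting then covers the identical pattern in $\Step_{\ell,\ell'}$). Second, when defining $\varphi_\M^{\mathrm{fin}}$ you must also drop the inherited conjuncts $\Rec^1,\Rec^2$ sitting inside $\psi_{\M^1},\psi_{\M^2}$, not just the outer $\Recurrent$: these still reference~$\tau$, and in the finite periodic model the ``side branches'' where only one of the two $r^k$-labels survives keep that label fixed forever, so $\Rec^k$ can fail for labels outside~$\tau$ and block the ``if'' direction. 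The paper does exactly this when it passes to $\Psi_\M \equiv \xi_{\M^1}\wedge\xi_{\M^2}\wedge\Sync$.
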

Consequently, the validity problem for the $\opf,\opg$-fragment is $\Pi_1^1$-hard, and the finite validity problem for the $\opf,\opg$-fragment is $\Pi_1^0$-hard. This implies that there is no complete deductive system proving all valid (or finitely valid) formulae of the $\opf,\opg$-fragment.

\section{Conclusions}

We have shown that the PCTL satisfiability problem is highly undecidable even for the \mbox{$\opf,\opg$-fragment}. An interesting direction for future research is to characterize the decidability border for PCTL satisfiability, i.e., to establish principle boundaries of automatic probabilistic program synthesis from PCTL specifications.

\bibliography{str-long,concur}

\begin{thebibliography}{10}

\bibitem{BFS:bounded-PCTL}
N.~Bertrand, J.~Fearnley, and S.~Schewe.
\newblock Bounded satisfiability for {PCTL}.
\newblock In {\em Proceedings of CSL 2012}, volume~16 of {\em Leibniz
  International Proceedings in Informatics}, pages 92--106. Schloss
  Dagstuhl--Leibniz-Zentrum f{\"{u}}r Informatik, 2012.

\bibitem{Billingsley:book}
P.~Billingsley.
\newblock {\em Probability and Measure}.
\newblock Wiley, 1995.

\bibitem{BFKK:satisfiability}
T.~Br{\'{a}}zdil, V.~Forejt, J.~K{\v{r}}et{\'{\i}}nsk{\'{y}}, and
  A.~Ku{\v{c}}era.
\newblock The satisfiability problem for probabilistic {CTL}.
\newblock In {\em Proceedings of LICS 2008}, pages 391--402. IEEE Computer
  Society Press, 2008.

\bibitem{Canny:Tarski-exist-PSPACE}
J.~Canny.
\newblock Some algebraic and geometric computations in {PSPACE}.
\newblock In {\em Proceedings of STOC'88}, pages 460--467. ACM Press, 1988.

\bibitem{CHK:PCTL-simple}
S.{} Chakraborty and J.P.{} Katoen.
\newblock On the satisfiability of some simple probabilistic logics.
\newblock In {\em Proceedings of LICS 2016}, pages 56--65, 2016.

\bibitem{ChK:PCTL-finite-sat-LICS}
M.~Chodil and A.~Ku{\v{c}}era.
\newblock The finite satisfiability problem for {PCTL} is undecidable.
\newblock In {\em Proceedings of LICS 2024}, pages Article No.~22, pages 1--14.
  ACM Press, 2024.

\bibitem{ChK:PCTL-quatitative-fragments-JCSS}
M.~Chodil and A.~Ku{\v{c}}era.
\newblock The satisfiability problem for a quantitative fragment of {PCTL}.
\newblock {\em Journal of Computer and System Sciences}, 139:103478, 2024.

\bibitem{Church:Entscheidungsproblem}
A.~Church.
\newblock An unsolvable problem of elementary number theory.
\newblock {\em American Journal of Mathematics}, 58:345--363, 1936.

\bibitem{Cook:SAT-NP-complete}
S.A.{} Cook.
\newblock The complexity of theorem-proving procedures.
\newblock In {\em Proceedings of STOC'71}, pages 151--158. ACM Press, 1971.

\bibitem{Emerson:temp-logic-handbook}
E.A.{} Emerson.
\newblock Temporal and modal logic.
\newblock {\em Handbook of Theoretical Computer Science}, B:995--1072, 1991.

\bibitem{HJ:logic-time-probability-FAC}
H.~Hansson and B.~Jonsson.
\newblock A logic for reasoning about time and reliability.
\newblock {\em Formal Aspects of Computing}, 6:512--535, 1994.

\bibitem{Harel:Infinite-trees-JACM}
D.~Harel.
\newblock Effective transformations on infinite trees with applications to high
  undecidability dominoes, and fairness.
\newblock {\em Journal of the Association for Computing Machinery}, 33(1),
  1986.

\bibitem{HS:Prob-temp-logic}
S.~Hart and M.~Sharir.
\newblock Probabilistic temporal logic for finite and bounded models.
\newblock In {\em Proceedings of POPL'84}, pages 1--13. ACM Press, 1984.

\bibitem{Kozen:mu-calculus}
D.~Kozen.
\newblock Results on the propositional $\mu$-calculus.
\newblock {\em Theoretical Computer Science}, 27:333--354, 1983.

\bibitem{Kozen:mu-calculus-finite-model}
D.~Kozen.
\newblock A finite-model theorem for the propositional $\mu$-calculus.
\newblock {\em Studia Logica}, 47(3):233--241, 1988.

\bibitem{KL:qPCTL-satisfiability}
S.~Kraus and D.J.{} Lehmann.
\newblock Decision procedures for time and chance (extended abstract).
\newblock In {\em Proceedings of FOCS'83}, pages 202--209. IEEE Computer
  Society Press, 1983.

\bibitem{KR:PCTL-unbounded}
J.~K{\v{r}}et{\'{\i}}nsk{\'{y}} and A.~Rotar.
\newblock The satisfiability problem for unbounded fragments of probabilistic
  {CTL}.
\newblock In {\em Proceedings of CONCUR 2018}, volume 118 of {\em Leibniz
  International Proceedings in Informatics}, pages 32:1--32:16. Schloss
  Dagstuhl--Leibniz-Zentrum f{\"{u}}r Informatik, 2018.

\bibitem{KSCh:Minsky-boundedness-PCS}
E.V.{} Kuzmin, V.A.{} Sokolov, and D.Y.{} Chalyy.
\newblock Boundedness problems for {Minsky} counter machines.
\newblock {\em Programming and Computer Software}, 36(1):3--10, 2010.

\bibitem{LS:time-chance-IC}
D.~Lehmann and S.~Shelah.
\newblock Reasoning with time and chance.
\newblock {\em Information and Control}, 53:165--198, 1982.

\bibitem{Minsky:book}
M.L.{} Minsky.
\newblock {\em Computation: Finite and Infinite Machines}.
\newblock Prentice-Hall, 1967.

\bibitem{Stirling:temp-logic-handbook}
C.~Stirling.
\newblock Modal and temporal logics.
\newblock {\em Handbook of Logic in Computer Science}, 2:477--563, 1992.

\bibitem{Trakhtenbrot-FOL-finitesat}
B.~Trakhtenbrot.
\newblock The impossibility of an algorithm for the decidability problem on
  finite classes.
\newblock {\em Proceedings of the USSR Academy of Sciences}, 70(4):569--572,
  1950.

\bibitem{Turing:Entscheidungsproblem}
A.M.{} Turing.
\newblock On computable numbers, with an application to the
  {Entscheidungsproblem}.
\newblock {\em Proceedings of the London Mathematical Society},
  s2-42(1):230--265, 1937.

\bibitem{Walukiewicz:mucalculus-complete-LICS}
I.~Walukiewicz.
\newblock On completeness of the $\mu$-calculus.
\newblock In {\em Proceedings of LICS'93}, pages 136--146. IEEE Computer
  Society Press, 1993.

\end{thebibliography}

\clearpage
\appendix

\begin{center}
  \huge\bf Appendix
\end{center}
\section{Non-Deterministic $d$-Counter Machines}
\label{app-Minsky}

In this section, we prove Proposition~\ref{prop-twocounter}. We start by recalling the non-deterministic Minsky machines \cite{Minsky:book} and the corresponding undecidability results.

A \emph{non-deterministic Minsky machine $\M$ with $d \geq 1$ counters} is a finite program
\[
    1: \Ins_1; \ \cdots  \ m: \Ins_m;
\]
where $m \geq 1$ and every $i: \Ins_i$ is a labeled instruction of one of the following types:
\begin{itemize}
  \item[I.] $i: \textit{inc } c_j; \textit{ goto } L;$ 
  \item[II.] $i: \textit{if } c_j{=}0 \textit{ then goto } L \textit{ else dec } c_j; \textit{ goto } L'$
\end{itemize}
Here, $j \in \{1,\ldots,d\}$ is a counter index and $L,L' \subseteq \{1,\ldots,m\}$
are sets of labels with one or two elements. We say that $\M$ is \emph{deterministic} if all $L,L'$ occurring in the instructions of $\M$ are singletons\footnote{Our definition of non-deterministic Minsky machines is equivalent to the standard one where the target sets of labels are singletons, and there is also a Type~III instruction of the form $i: \textit{goto } u \textit{ or } u'$. For purposes of this paper, the adopted definition is more convenient.}.

A \emph{configuration} of $\M$ is a tuple $(i,n_1,\ldots,n_k)$ of non-negative integers where $1 \leq i \leq m$ represents the current control position and $n_1,\ldots,n_k$ represent the current counter values. A configuration $(i',n_1',\ldots,n_k')$ is a \emph{successor} of a configuration  $(i,n_1,\ldots,n_k)$, written $(i,n_1,\ldots,n_k) \mapsto (i',n_1',\ldots,n_k')$, if the tuple  $(n_1',\ldots,n_k')$ is obtained from $(n_1,\ldots,n_k)$ by performing $\Ins_i$, and $i'$ is an element of the corresponding $L$ (or $L'$) in $\Ins_i$. Note that every configuration has either one or two successor(s). A \emph{computation} of $\M$ is an infinite sequence of configurations $\omega \equiv C_0,C_1,\ldots$ such that $C_0 = (1,0,\ldots,0)$ and $C_i \mapsto C_{i+1}$ for all $i \in \N$. 

Now, we recall the standard undecidability results for Minsky machines. The symbols $\Sigma_1^0$ and $\Sigma_1^1$ denote the corresponding levels in the arithmetical and the analytical hierarchies, respectively.
\smallskip


(1) The \emph{boundedness problem} for a given deterministic two-counter Minsky machine $\M$ is undecidable and $\Sigma_1^0$-complete \cite{KSCh:Minsky-boundedness-PCS}. Here, $\M$ is bounded if the unique computation $\omega$ contains only finitely many pairwise different configurations.
\smallskip

(2) The \emph{recurrent reachability problem} for a given non-deterministic \mbox{two-counter} Minsky machine $\M$ is highly undecidable and $\Sigma_1^1$-complete \cite{Harel:Infinite-trees-JACM}. Here, the question is whether there exists a \emph{recurrent} computation $\omega$ of~$\M$ such that the instruction $\Ins_1$ is executed infinitely often along~$\omega$.

Observe that every instruction of a Minsky machine updates the value of just one counter and leaves the other counters unchanged. Hence, simulating Minsky machines by PCTL formulae would require inventing some mechanism for ``transferring'' the pairs of probabilities representing positive counter values when moving from one state to another. This is not needed when simulating our non-deterministic $d$-counter machines, because here every instruction updates all counters simultaneously.

\Minsky*

\begin{proof}
    Let $\M \equiv 1:\Ins_1;\cdots m: \Ins_m;$ be a non-deterministic two-counter Minsky machine. The counter modified by $\Ins_\ell$ is \emph{$\ell$-active}, and the other counter is \emph{$\ell$-inactive}.
    
    We construct a non-deterministic two-counter machine $\mathcal{N}$ faithfully simulating~$\M$. The machine $\mathcal{N}$ has $3m$ instructions constructed as follows:
    \begin{itemize}
        \item For every Type~I instruction $i: \textit{inc } c_j; \textit{ goto } L;$ of $\M$, the machine $\mathcal{N}$ has the following instructions:
        \begin{itemize}
            \item An instruction indexed by $i$ of the form 
               \( 
                   \langle C_j{=}0 \,?\, X : X \rangle : \Inc,\Inc
               \)
               where the set $X$ is obtained from $L$ by substituting every $\ell \in L$ with either $\ell + m$ or $\ell +2m$, depending on whether the $\ell$-active counter is $c_j$ or not, respectively.
            \item An instruction indexed by $i {+} m$ of the form
               \( 
                   \langle C_j{=}0 \,?\, L : L \rangle : \update_1,\update_2
               \),
               where $\update_j = \Inc$ and the other update is $\Dec$.
        \end{itemize}
        \item For every Type~II instruction $i: \textit{if } c_j{=}0 \textit{ then goto } L \textit{ else dec } c_j; \textit{ goto } L'$ of $\M$, the machine $\mathcal{N}$ has the following instructions:
        \begin{itemize}
            \item An instruction indexed by $i$ of the form 
               \( 
                   \langle C_j{=}0 \,?\, Z : P \rangle : \update_1,\update_2
               \)
               where the sets $Z$ and $P$ are obtained from $L$ and $L'$ by substituting every $\ell \in L$ (or every $\ell \in L'$) with either $\ell + m$ or $\ell +2m$, depending on whether the $\ell$-active counter is $c_j$ or not, respectively. Furthermore, $\update_j = \Dec$ and the other update is $\Inc$.
            \item An instruction indexed by $i {+} m$ of the form
               \( 
                   \langle C_j{=}0 \,?\, L : L' \rangle : \Dec,\Dec
               \).
        \end{itemize}
        \item Furthermore, for every $i \in \{1,\ldots,m\}$, the machine $\mathcal{N}$ has an instruction indexed by $i {+} 2m$ of the form
        \( 
            \langle C_1{=}0 \,?\, \{i{+}m\} : \{i{+}m\} \rangle : \update_1,\update_2
        \), where $\update_j = \Dec$ and the other update is $\Inc$.
    \end{itemize}
    Intuitively, $\mathcal{N}$ simulates $\M$ by performing the corresponding instruction on the active counter. The other counter is incremented/decremented alternately until the moment when it becomes active. Note that this can also happen in a situation when its value is inconsistent due to the previous increment. Then, an auxiliary decrement is inserted before continuing the simulation. The auxiliary decrements are performed by instructions indexed by $i{+}2m$, where $i \in \{1,\ldots,m\}$. The alternating increments/decrements of the inactive counter are performed by instructions indexed by $i$ and $i+m$ where $i \in \{1,\ldots,m\}$.
    
    Clearly, if $\M$ deterministic and bounded, then $\mathcal{N}$ is deterministic and has a bounded computation. Furthermore, $\M$ has a recurrent computation iff $\mathcal{N}$ has a $\{1,m{+}1\}$-recurrent computation.
\end{proof}

\section{Representing Non-Negative Integers by Points in $(0,1)^2$}
\label{app-area}

In this section, we prove Lemma~\ref{lem-tausigma} and Lemma~\ref{lem-outlineseg}.

\incfunction*
\begin{proof}
\textit{Item~(a).} We show that $\INC(\vec{v}) \in I_\lambda \times [0,1]$. Observe
    \begin{eqnarray*}
        \INC(\vec{v})_1 & = & \frac{\lambda}{1-\vec{v}_1}
           \ < \ \frac{\lambda}{1-\frac{1+\sqrt{1-4\lambda}}{2}}
           \ = \ \frac{2\lambda}{1- \sqrt{1-4\lambda}} \cdot \frac{1+\sqrt{1-4\lambda}}{1+\sqrt{1-4\lambda}}
           \ = \ \frac{1+\sqrt{1-4\lambda}}{2}
    \end{eqnarray*}
    Similarly, we obtain $\INC(\vec{v})_1 >  \frac{1-\sqrt{1-4\lambda}}{2}$. Since $\INC(\vec{v})_2 = \frac{\lambda \cdot \vec{v}_2}{1{-}\vec{v}_1} = \vec{v}_2 \cdot \INC(\vec{v})_1$ and $\INC(\vec{v})_1 \in I_\lambda$, we have that $\INC(\vec{v})_2 \in (0,1)$.

\smallskip  
\noindent
\textit{Items~(b) and (c)} are trivial to verify.
 
\smallskip
\noindent
\textit{Item~(d).}  Observe
\begin{eqnarray*}
    \slope(\vec{u},\INC(\vec{v})) & = & \frac{\INC(\vec{v})_2}{\INC(\vec{v})_1 - \INC^2(\vec{v})_1} = \frac{\frac{\lambda \vec{v}_2}{1-\vec{v}_1}}{\frac{\lambda}{1-\vec{v}_1}-\frac{\lambda(1-\vec{v}_1)}{1-\vec{v}_1 -\lambda}} \ = \
    \frac{\vec{v}_2(1-\vec{v}_1-\lambda)}{\vec{v}_1 - \lambda - \vec{v}_1^2}
 \end{eqnarray*}
Similarly,
\begin{eqnarray*}
    \slope(\INC(\vec{v}),\vec{v}) & = &  
    \frac{\vec{v}_2 - \frac{\lambda \vec{v}_2}{1-\vec{v}_1}}{\vec{v}_1 - \frac{\lambda}{1-\vec{v}_1}} \ = \   
    \frac{\vec{v}_2(1-\vec{v}_1-\lambda)}{\vec{v}_1 - \lambda - \vec{v}_1^2}  
\end{eqnarray*}
Hence, $\slope(\vec{u},\INC(\vec{v})) = \slope(\INC(\vec{v}),\vec{v})$.

\smallskip
\noindent
\textit{Item~(e).} Realize that 
\begin{eqnarray*}    
    \slope(\vec{u},\DEC(\vec{u})) & = & \frac{y(1-\vec{v}_1)}{\lambda} \cdot \frac{1-\vec{v}_1-\lambda}{\vec{v}_1 -\vec{v}_1^2-\lambda}\\[1ex]
    \slope(\INC(\vec{v}),\vec{v}) & = & \vec{v}_2 \cdot \frac{1-\vec{v}_1-\lambda}{\vec{v}_1 -\vec{v}_1^2-\lambda}
\end{eqnarray*}
Since 
\begin{equation*}
    \frac{y(1-\vec{v}_1)}{\lambda} \ < \ \frac{\INC(\vec{v})_2 \cdot (1-\vec{v}_1)}{\lambda}  \ = \ \vec{v}_2   
\end{equation*}
we have that $\slope(\vec{u},\DEC(\vec{u})) < \slope(\INC(\vec{v}),\vec{v})$.

\smallskip
\noindent
\textit{Item~(f).}
It is easy to verify that for all $\vec{x},\vec{y} \in I_\lambda \times (0,1)$ and all $\kappa \in (0,1]$ we have that
\[
    \DEC(\kappa\vec{x} + (1{-}\kappa)\vec{y}) \ = \ 
    \kappa'\DEC(\vec{x}) + (1{-}\kappa') \DEC(\vec{y})
\]
where 
\[
    \kappa' = \frac{\kappa \vec{x}_1}{\kappa\vec{x}_1 + (1{-}\kappa) \vec{y}_1} 
\]
Observe that $\kappa' \in (0,1]$. Item~(f) follows by putting $\vec{x} = \INC^2(\vec{v})$, $\vec{y} = \INC(\vec{v})$ and choosing $\kappa$ so that $\vec{u} = \kappa\vec{x} + (1{-}\kappa)\vec{y}$. 

\smallskip
\noindent
\textit{Item~(g).}

First, we show that 
\begin{eqnarray}
    \lim_{k \to \infty} \INC^k(\vec{v})_1 & = & \frac{1-\sqrt{4q-3}}{2} \label{lim-2}
\end{eqnarray}
Let 
\[
    J_q =    \left[\frac{1-\sqrt{1-4\lambda}}{2},  \vec{v}_1\right] 
\]
By Items~(a)~and~(c), the infinite sequence  $\vec{v}_1, \INC(\vec{v})_1, \INC^2(\vec{v})_1,\ldots$ is decreasing and bounded from below by $(1-\sqrt{1-4\lambda})/2$. Consequently, the sequence has a limit~$\alpha \in J_q$, and hence it is also a Cauchy sequence, i.e.,
\[
    \lim_{n \to \infty} \INC^{n+1}(\vec{v})_1 - \INC^{n}(\vec{v})_1 = 0 \,.
\] 
Consider the function $f : J_q \to \R$ where 
\[
    f(x) \ = \ \frac{\lambda-x(1-x)}{1-x}
\]
Observe that $f$ is non-negative and continuous. Furthermore, $f(x) = 0$ iff $x =  (1 - \sqrt{1-4\lambda})/2$. Observe that for every $n \in \N$, we have that
\begin{eqnarray*}
    \INC^{n+1}(\vec{v})_1 - \INC^{n}(\vec{v})_1 \ = \ 
        f(\INC^{n}(v)_1) \,.
\end{eqnarray*}
Hence, 
\[
    0 \ = \ \lim_{n \to \infty} \INC^{n+1}(\vec{v})_1 - \INC^{n}(\vec{v})_1  =  \lim_{n \to \infty} f(\INC^{n}(\vec{v})_1) \ = \ f(\alpha)
\]
which implies $\alpha = (1 - \sqrt{1-4\lambda})/2$.
\end{proof}

\outlineseg*
\begin{proof}
We choose $\vec{u}_2$ so that 
\[
    \slope(\vec{u},\DEC(\vec{u})) = \slope(\vec{u},\vec{v})\,.
\]
Hence, we require that
\[
     \frac{\vec{u}_2(1-\vec{u}_1)}{\vec{u}_1(1-\vec{u}_1) - \lambda}    =
     \frac{\vec{v}_2 - \vec{u}_2}{\vec{v}_1 - \vec{u}_1}
\]
From this, we obtain
\[
    \vec{u}_2 = \frac{\vec{v}_2(\vec{u}_1(1-\vec{u}_1)-\lambda)}{\vec{v}_1 - \vec{u}_1 +  \vec{u}_1(1-\vec{u}_1) - \lambda}    
\]
and the proof is finished.
\end{proof}

\section{Avoiding the $\opu$-operator}
\label{sec-U-avoid}

The formula $\varphi_\M$ constructed in Section~\ref{sec-formula} uses the path operator~$\opu$. More concretely, all occurrences of $\opu$ in $\varphi_\M$ are within the formula $\Struct$. In this section, we show that using $\opu$ can be avoided, i.e., we design a new formula $\overline{\Struct}$ such that $\overline{\Struct}$ implies $\Struct$ and $\overline{\Struct}$ contains only the $\opf$ and $\opg$ operators.

First, define $\mathcal{C} = \{a_{i,\ell},\bar{a}_{i,\ell},b_{i,\ell},c_{i,\ell},d_{i,\ell} \mid 0\leq i \leq 2, 1\leq \ell \leq m\}$.
Now, define $\overline{\Struct}$ as follows:
\[ \overline{\Struct} \equiv \overline{\Succ} \wedge Mark \wedge Lambda, \]
where $Mark$ and $Lambda$ remain as they were in the original $\Struct$
and $\overline{\Succ}$ is defined as follows:
\begin{align*}
  \overline{\Succ} \equiv &
  \bigwedge_{i=0}^{2}
  \bigwedge_{\ell=1}^{m}
  \opg_{=1}
  \left(
    \langle r_{i,\ell} \rangle_{\A}
    \implies
    \bigvee_{\ell'=1}^{m} \opf_{=1} rsuc_{i,\ell,\ell'}
  \right) \\
  & \wedge \\
  & \bigwedge_{i=0}^{2}
    \bigwedge_{\ell=1}^{m}
    \opg_{=1}
    \left(
      \langle r_{i,\ell} \rangle_{\A}
      \implies
      \opf_{<1}
      \bigvee_{\substack{
                 0 \leq z \leq 2 \\
                 1 \leq z' \leq m \\
                 z \neq i \text{ or } z' \neq \ell}
                }
      \ex{r_{z,z'}}_{\A} 
    \right) \\
  & \wedge \\
  & \opg_{=1} \bigvee_{x \in \A \setminus \B} \ex{x}_{\A \setminus \B} \\
  & \wedge \\
  & \opg_{=1} \bigwedge_{x \in \mathcal{C}} \left( x \implies \bigvee_{0 \leq j \leq 2} \ex{x,R_j}_{\A} \right) \\
  & \wedge \\
  & \opg_{=1} \bigwedge_{x \in \A \setminus (\B \cup \mathcal{C})} \left( x \implies \ex{x}_{\A} \right) \\
  & \wedge \\
  & \bigwedge_{i=0}^{2}
    \opg_{=1}
    \bigg( \ex{R_i}_{\B} \implies \opf_{=1} Rsuc_i \bigg) \\
  & \wedge \\
  & \bigwedge_{i=0}^{2}
    \opg_{=1}
    \bigg( \ex{R_i,K}_{\B} \implies \opf_{=1} RKsuc_i \bigg) \\
  & \wedge \\
  & \bigwedge_{i=0}^2
    \opg_{=1}
    \left(
      \bigg[ \ex{R_i}_{\B} \vee \ex{R_i,K}_{\B} \bigg]
    \implies
    \opf_{<1} \bigvee_{\substack{0 \leq z \leq 2 \\ z \neq i}} \ex{R_{z},K}_\B
    \right) \\
  & \wedge \\
  & \opg_{=1}
    \left(
      \bigg[
        \bigvee_{x \in \B} x
      \bigg]
      \implies
      \bigg[
        \opg_{=1} \big[ \bigvee_{i=0}^2 Rsuc_i \vee RKsuc_i \big]
      \bigg]
    \right) \\
  & \wedge \\
  & \opg_{=1} (K \implies \opg_{=1} K)
\end{align*}

Clearly, $\overline{\Struct}$ is a PCTL state formula 
in the $\opf,\opg$ fragment of PCTL. It remains to show
$\overline{\Struct}$ implies $\Struct$.

Suppose that $\overline{\Struct}$ holds in some state $o$.
We want to show that $\Struct$ also holds in $o$.
Clearly $Mark$ and $Lambda$ are satisfied,
so it remains to show that $\Succ$ is satisfied.
Recall the definition of $\Succ$:
\begin{eqnarray*}\\[-2em]
    \Succ & \equiv & \bigwedge_{\ell=1}^m \bigwedge_{i=0}^2 \opg_{=1} \bigg(  \ex{r_{i,\ell}}_{\A} \ \Rightarrow \ \bigvee_{\ell'=1}^m \ex{r_{i,\ell}}_{\A} \opu_{=1} \rSuc_{i,\ell,\ell'} \bigg)\\ 
       & \wedge & \bigwedge_{i=0}^2 \opg_{=1} \bigg(
       \big( \ex{R_i}_{\B} \ \Rightarrow \ \ex{R_i}_{\B} \opu_{=1} \RSuc_i \big) 
       \wedge
       \big( \ex{R_i,K}_{\B} \ \Rightarrow \ \ex{R_i,K}_{\B} \opu_{=1} \RKSuc_i \big) 
       \bigg)
\end{eqnarray*}
where\\[-2.5em]
\begin{eqnarray*}
    \rSuc_{i,\ell,\ell'} & \equiv & \ex{r_{S(i),\ell'}}_{\A} \vee \bigvee_{j=0}^2 \left(\ex{a_{i,\ell},R_j}_{\A} \ \vee \ex{\bar{a}_{i,\ell},R_j}_{\A}\ \vee \ \ex{b_{i,\ell}, R_j}_{\A} \ \vee \ \ex{c_{i,\ell},R_j}_{\A} \ \vee \ \ex{d_{i,\ell},R_j}_{\A}\right)  \\
    \RSuc_i & \equiv & \ex{A_i}_{\B} \ \vee \ \ex{B_i}_{\B} \ \vee \ \ex{C_i}_{\B} \ \vee \ \ex{D_i}_{\B} \ \vee \ \ex{E_i}_{\B} \ \vee \ \ex{R_{S(i)},K}_{\B}\\
    \RKSuc_i & \equiv & \ex{A_i,K}_{\B} \ \vee \ \ex{B_i,K}_{\B} \ \vee \ \ex{C_i,K}_{\B} \ \vee \ \ex{D_i,K}_{\B}  \ \vee \ \ex{R_{S(i)},K}_{\B}.
\end{eqnarray*} 

We split the proof into two propositions to make it more readable.
In the first proposition, we show that
\[
\bigwedge_{\ell=1}^m \bigwedge_{i=0}^2 \opg_{=1} \bigg(  \ex{r_{i,\ell}}_{\A} \ \Rightarrow \ \bigvee_{\ell'=1}^m \ex{r_{i,\ell}}_{\A} \opu_{=1} \rSuc_{i,\ell,\ell'} \bigg)
\]
holds.
In the second proposition, we show that
\[
       \bigwedge_{i=0}^2 \opg_{=1} \bigg(
       \big( \ex{R_i}_{\B} \ \Rightarrow \ \ex{R_i}_{\B} \opu_{=1} \RSuc_i \big) 
       \wedge
       \big( \ex{R_i,K}_{\B} \ \Rightarrow \ \ex{R_i,K}_{\B} \opu_{=1} \RKSuc_i \big) 
       \bigg)
\]
holds.

\begin{proposition}
The state $o$ satisfies the formula
\[
\bigwedge_{\ell=1}^m \bigwedge_{i=0}^2 \opg_{=1} \bigg(  \ex{r_{i,\ell}}_{\A} \ \Rightarrow \ \bigvee_{\ell'=1}^m \ex{r_{i,\ell}}_{\A} \opu_{=1} \rSuc_{i,\ell,\ell'} \bigg).
\]
\end{proposition}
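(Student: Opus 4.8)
The plan is to argue state by state. First I would recall the standard fact that for a PCTL state formula $\psi$ and a state $o$ we have $o \models \opg_{=1}\psi$ iff $t \models \psi$ for every state $t$ reachable from $o$; so it is enough to fix an arbitrary state $s$ reachable from $o$ with $s \models \ex{r_{i,\ell}}_{\A}$ and to exhibit an $\ell'$ such that $s \models \ex{r_{i,\ell}}_{\A} \opu_{=1} \rSuc_{i,\ell,\ell'}$. Before doing that I would harvest from $\overline{\Struct}$ the following elementary facts about states reachable from $o$: every such state satisfies \emph{exactly one} atom of $\A \smallsetminus \B$ (from the conjunct $\opg_{=1}\bigvee_{x}\ex{x}_{\A\smallsetminus\B}$); a state satisfying some $x \in \mathcal{C}$ actually satisfies $\ex{x,R_j}_{\A}$ for exactly one $j$; a state satisfying some $r_{z,z'}$ satisfies $\ex{r_{z,z'}}_{\A}$; and, by the $\Mark$-part, each of $a_{i,\ell},\bar a_{i,\ell},b_{i,\ell},c_{i,\ell},d_{i,\ell}$ (and each $R_j$, $K$), once true, stays true along every continuation. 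From the first conjunct of $\overline{\Succ}$ applied at $s$ I get some $\ell'$ with $s \models \opf_{=1}\rSuc_{i,\ell,\ell'}$, and I claim this $\ell'$ is the required witness.

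The heart of the argument is a propagation lemma: if $t$ is reachable from $o$ with $t \models \ex{r_{i,\ell}}_{\A}$ and $t \models \opf_{=1}\rSuc_{i,\ell,\ell'}$, then $t \not\models \rSuc_{i,\ell,\ell'}$ (the formula $\ex{r_{i,\ell}}_{\A}$ is incompatible with every disjunct of $\rSuc_{i,\ell,\ell'}$), and therefore, unfolding $\opf$ by one step — the probability of eventually reaching a $\rSuc_{i,\ell,\ell'}$-state from $t$ is the $\Prob(t,\cdot)$-average of the same probabilities at the successors of $t$, each bounded by $1$ — \emph{every} successor $t'$ of $t$ again satisfies $\opf_{=1}\rSuc_{i,\ell,\ell'}$. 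By the structural facts above, such a $t'$ is of exactly one of three kinds: (a) $t' \models \rSuc_{i,\ell,\ell'}$, a legitimate stopping state; (b) $t'$ satisfies some $x \in \mathcal{C}$ with indices different from $(i,\ell)$ — impossible, since by the marker property no state reachable from $t'$ could then satisfy $\rSuc_{i,\ell,\ell'}$, contradicting $t' \models \opf_{=1}\rSuc_{i,\ell,\ell'}$; or (c) $t' \models \ex{r_{z,z'}}_{\A}$, where either $(z,z') = (i,\ell)$ and we are back in the inductive situation, or $(z,z') \notin \{(i,\ell),(S(i),\ell')\}$, which must be excluded.

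Given (b) and the bad subcase of (c) excluded, the set of states satisfying $\ex{r_{i,\ell}}_{\A} \wedge \opf_{=1}\rSuc_{i,\ell,\ell'}$ is closed under taking successors up to the first visit of a $\rSuc_{i,\ell,\ell'}$-state; since $s$ lies in this set and $s \models \opf_{=1}\rSuc_{i,\ell,\ell'}$, almost every run from $s$ does reach such a state while staying inside $\ex{r_{i,\ell}}_{\A}$ beforehand, i.e. $s \models \ex{r_{i,\ell}}_{\A}\opu_{=1}\rSuc_{i,\ell,\ell'}$, as required. The companion clauses (one atom of $\A\smallsetminus\B$ per state, the forced $R_j$ in $\mathcal{C}$-states, permanence of markers) are exactly what make the case analysis in (a)--(c) exhaustive.

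I expect the delicate point to be the bad subcase of (c): showing that a $\ex{r_{i,\ell}}_{\A}$-state cannot pass into an $r$-state carrying a ``wrong'' target label. Here I would combine the second conjunct of $\overline{\Succ}$ — the $\opf_{<1}$ bound on reaching any differently labelled $r$-state — with the first conjunct of $\overline{\Succ}$ re-applied at $t'$: the positive-probability family of runs from $t'$ that never leave the label $(z,z')$ can never visit an $r_{S(i),\ell'}$-state, so, as they must reach $\rSuc_{i,\ell,\ell'}$ almost surely, they are forced into a $\mathcal{C}$-state with indices $(i,\ell)$, and one then derives a contradiction by pushing this through the structural constraints on $R$-states inherited from $\Struct$. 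Making this step airtight — in particular pinning down that the target label $\ell'$ is genuinely forced along every run, so that a single $\ell'$ works for the whole $\opu_{=1}$ — is the main obstacle; everything else is bookkeeping with the marker atoms.
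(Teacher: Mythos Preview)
Your plan is essentially the paper's: fix $s$, extract $\ell'$ from the first conjunct of $\overline{\Succ}$, and then show that runs reaching $\rSuc_{i,\ell,\ell'}$ stay in $\ex{r_{i,\ell}}_\A$ until they do, by ruling out the two bad intermediate possibilities (a wrong $\mathcal{C}$-atom via the $\A\smallsetminus\B$ marker property; a wrong $r$-label via the $\opf_{<1}$ clause). Your successor-closure framing and the paper's path-and-first-index framing are interchangeable.

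The one place your sketch wobbles is the closing of the bad subcase of~(c). You assemble the right ingredients---reapplying the first conjunct of $\overline{\Succ}$ at $t'$ to get $t'\models\opf_{=1}\rSuc_{z,z',z''}$ for some $z''$, together with the positive-measure family $P$ of runs never visiting a differently-labelled $r$-state, which you correctly force into an $(i,\ell)$-marker---but the contradiction does not come from ``structural constraints on $R$-states''. It comes directly from the $\A\smallsetminus\B$-marker property again: once a $P$-run enters an $(i,\ell)$-marker state, every subsequent state carries that marker exclusively in $\A\smallsetminus\B$ and therefore fails every disjunct of $\rSuc_{z,z',z''}$; and a $P$-run cannot hit a $(z,z')$-marker \emph{before} the $(i,\ell)$-marker either, since that would block $\rSuc_{i,\ell,\ell'}$. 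Hence no $P$-run satisfies $\opf\,\rSuc_{z,z',z''}$, contradicting $t'\models\opf_{=1}\rSuc_{z,z',z''}$. (The paper derives the dual contradiction: it shows that almost every run from $t'$ eventually reaches a state satisfying $\ex{r_{S(i),\ell'}}_\A$ or $\ex{r_{S(z),z''}}_\A$, both carrying a label different from $(z,z')$, and this contradicts the $\opf_{<1}$ clause directly.) The $R$-level clauses of $\overline{\Succ}$ play no role in this proposition.
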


\begin{proof}
Let $\ell \in \{1, \ldots, m\}$ and $i \in \{ 0, \ldots, 2 \}$.
Suppose $\ex{r_{i,\ell}}_{\A}$ is satisfied in some state $s$ that
is reachable from $o$.
We want to show that $s$ satisfies
\[
\bigvee_{\ell'=1}^m \ex{r_{i,\ell}}_{\A} \opu_{=1} \rSuc_{i,\ell,\ell'}.
\]
We know that $o$ satisfies $\overline{\Succ}$, and since $s$ is a state
reachable from $o$ that satisfies $\ex{r_{i,\ell}}_{\A}$,
it is easy to see that $s$ must satisfy
\[
\bigvee_{\ell'=1}^{m} \opf_{=1} rsuc_{i,\ell,\ell'}.
\]
Then $s$ satisfies $\opf_{=1} rsuc_{i,\ell,\ell'}$ for some
$\ell' \in \{ 1, \ldots, m \}$. Fix such $\ell'$.
We show that $s$ satisfies
\[
\ex{r_{i,\ell}}_{\A} \opu_{=1} \rSuc_{i,\ell,\ell'}.
\]
We do so by showing that every path $\pi$ starting in $s$
that satisfies $\opf rsuc_{i,\ell,\ell'}$ also satisfies
$\ex{r_{i,\ell}}_{\A} \opu \rSuc_{i,\ell,\ell'}$.

Let $\pi$ be a path that starts in $s$ and satisfies
$\opf rsuc_{i,\ell,\ell'}$.
Then there exists $\alpha \in \nat$ such that $\pi(\alpha)$ satisfies
$rsuc_{i,\ell,\ell'}$.
Let $\alpha$ be the least such $\alpha$.
It is now sufficient to show that $\pi(\beta)$ satisfies $\ex{r_{i,\ell}}_{\A}$ for every $\beta \in \nat$ such that $\beta < \alpha$.

Suppose it does not.
Then there exists $\beta \in \nat$ such that $\beta < \alpha$ and $\pi(\beta)$ does not satisfy $\ex{r_{i,\ell}}_{\A}$.
Let $\beta$ be such $\beta$.
Since $o$ satisfies \( \overline{\Succ} \),
we know that $\pi(\beta)$ satisfies \(\ex{x}_{\A \setminus \B}\) for some \(x \in \A \setminus \B\).
Let $x$ be such $x$.

Now we show that $x \not \in \mathcal{C}$, by contradiction.
Suppose that $x \in \mathcal{C}$.
Since $o$ satisfies \( \overline{\Succ} \),
it follows that $\pi(\beta)$ satisfies $\ex{x,R_j}_{\A}$ for some $0 \leq j \leq 2$.
Recall that $x \in \mathcal{C}$ and 
$\mathcal{C} = \{a_{i,\ell},\bar{a}_{i,\ell},b_{i,\ell},c_{i,\ell},d_{i,\ell} \mid 0\leq i \leq 2, 1\leq \ell \leq m\}$.
Then there exist $j,k \in \nat$ such that
$0 \leq j \leq 2$ and $1 \leq k \leq m$ and
$x$ is one of $a_{j,k}$, $\bar{a}_{j,k}$,
$b_{j,k}$, $c_{j,k}$, $d_{j,k}$.
Fix such $j$ and $k$.
It is easy to see that $j \neq i$ or $k \neq \ell$,
otherwise $\pi(\beta)$ would satisfy $rsucc_{i,\ell,\ell'}$, a contradiction.
Recall that $\pi(\beta)$ satisfies \( \ex{x}_{\A \setminus \B} \)
and $x \in \mathcal{C}$.
Observe that $\{x\}$ is a $\A \setminus \B$ marker for every $x \in \mathcal{C}$.
Given that $\pi(\alpha)$ is obviously reachable from $\pi(\beta)$,
it follows that $\pi(\alpha)$ satisfies \( \ex{x}_{\A \setminus \B} \).
We already know that $x$ is one of $a_{j,k}$, $\bar{a}_{j,k}$,
$b_{j,k}$, $c_{j,k}$, $d_{j,k}$ and that $j \neq i$ or $k \neq \ell$.
It follows that $\pi(\alpha)$ does not satisfy $rsucc_{i,\ell,\ell'}$,
a contradiction.
In conclusion, $x \not \in \mathcal{C}$.

We have now established that $x \not \in \mathcal{C}$.
By its definition, $x \in \A \setminus \B$.
Therefore, $x \in \A \setminus (\B \cup \mathcal{C})$.
Since $o$ satisfies $\opg_{=1} \bigwedge_{x \in \A \setminus (\B \cup \mathcal{C})} \left( x \implies \ex{x}_{\A} \right)$ and $\pi(\beta)$ satisfies $x$,
it follows that $\pi(\beta)$ satisfies $\ex{x}_{\A}$.
Since $x \in \A \setminus \B$ and $x \not \in \mathcal{C}$,
$x$ must be $r_{j,k}$ for some $j,k \in \nat$ such that
$0 \leq j \leq 2$ and $1 \leq k \leq m$.
Fix such $j,k$.
We now know that $\pi(\beta)$ satisfies $\ex{r_{j,k}}_{\A}$.
Observe that $j \neq i$ or $k \neq \ell$,
otherwise $\pi(\beta)$ would satisfy $\ex{r_{i,\ell}}_{\A}$,
contradicting the definition of $\beta$.
Observe that $o$ satisfies $\overline{\Succ}$ and $\pi(\beta)$
is reachable from $o$ and $\pi(\beta)$ satisfies $\ex{r_{j,k}}_{\A}$.
It follows that $\pi(\beta)$ satisfies $\opf_{=1} rsuc_{j,k,k'}$ for some $1 \leq k' \leq m$.
Fix such $k'$.

Now we show that $\pi(\beta)$ satisfies
$\opf_{=1} rsuc_{i,\ell,\ell'}$.
Recall that $s$ satisfies $\opf_{=1} rsuc_{i,\ell,\ell'}$.
Also recall that $\pi$ is a path starting in $s$, so $\pi(0) = s$.
From the definition of $\alpha$, we know that $\pi(\gamma)$ does not
satisfy $rsuc_{i,\ell,\ell'}$ for every $\gamma \in \nat$ such
that $\gamma \leq \beta$.
Now it is easy to see that if $\pi(\beta)$ did not satisfy
$\opf_{=1} rsuc_{i,\ell,\ell'}$, neither would $s$, a contradiction.
In conclusion, $\pi(\beta)$ satisfies $\opf_{=1} rsuc_{i,\ell,\ell'}$.

We have now established that $\pi(\beta)$ satisfies
$\opf_{=1} rsuc_{i,\ell,\ell'}$ and
$\opf_{=1} rsuc_{j,k,k'}$.
We also have already established that $j \neq i$ or $\ell \neq k$.
Recall that $\pi(\beta)$ satisfies $\ex{r_{j,k}}_{\A}$.
Recall that $o$ satisfies $\overline{\Succ}$,
and so it follows that $\pi(\beta)$ must also satisfy
\[
    \opf_{<1}
    \bigvee_{\substack{
               0 \leq z \leq 2 \\
               1 \leq z' \leq m \\
               z \neq j \text{ or } z' \neq k}
              }
    \ex{r_{z,z'}}_{\A}.
\]
Let $\pi'$ be a path starting in $\pi(\beta)$ that
satisfies $\opf rsuc_{j,k,k'}$.
Let $\gamma \in \nat$ be the least natural number such that $\pi'$
satisfies $rsuc_{j,k,k'} \vee rsuc_{i,\ell,\ell'}$ (it is easy
to see that such $\gamma$ exists).
It is easy to see that if $\pi'(\gamma)$ satisfied one of
$a_{i,\ell}$, $\bar{a}_{i,\ell}$, $b_{i,\ell}$, $c_{i,\ell}$, $d_{i,\ell}$,
then $\pi'$ could not satisfy $\opf rsuc_{j,k,k'}$, a contradiction.
It is not hard to see that if $\pi'(\gamma)$ satisfied one of
$a_{j,k}$, $\bar{a}_{j,k}$, $b_{j,k}$, $c_{j,k}$, $d_{j,k}$,
then $\pi(\beta)$ could not satisfy $\opf_{=1} rsuc_{i,\ell,\ell'}$,
a contradiction.
Therefore, $\pi'(\gamma)$ satisfies $\ex{r_{S(i),\ell'}}_{\A}
\vee \ex{r_{S(j),k'}}_{\A}$.
%
%
It follows that $\pi'(\gamma)$ satisfies 
\[
    \bigvee_{\substack{
               0 \leq z \leq 2 \\
               1 \leq z' \leq m \\
               z \neq j \text{ or } z' \neq k}
              }
    \ex{r_{z,z'}}_{\A}.
\]
Therefore, every path starting in $\pi(\beta)$ that
satisfies $\opf rsuc_{j,k,k'}$ also satisfies
\[
    \opf
    \bigvee_{\substack{
               0 \leq z \leq 2 \\
               1 \leq z' \leq m \\
               z \neq j \text{ or } z' \neq k}
              }
    \ex{r_{z,z'}}_{\A}.
\]
We already know that $\pi(\beta)$ satisfies
$\opf_{=1} rsuc_{j,k,k'}$, and so it follows that
$\pi(\beta)$ also satisfies 
\[
  \opf_{=1}
    \bigvee_{\substack{
               0 \leq z \leq 2 \\
               1 \leq z' \leq m \\
               z \neq j \text{ or } z' \neq k}
              }
    \ex{r_{z,z'}}_{\A},
\]
which is a contradiction.
We have now established that
\[
\bigwedge_{\ell=1}^m \bigwedge_{i=0}^2 \opg_{=1} \bigg(  \ex{r_{i,\ell}}_{\A} \ \Rightarrow \ \bigvee_{\ell'=1}^m \ex{r_{i,\ell}}_{\A} \opu_{=1} \rSuc_{i,\ell,\ell'} \bigg)
\]
holds.
\end{proof}

\begin{proposition}
The state $o$ satisfies the formula
\[
       \bigwedge_{i=0}^2 \opg_{=1} \bigg(
       \big( \ex{R_i}_{\B} \ \Rightarrow \ \ex{R_i}_{\B} \opu_{=1} \RSuc_i \big) 
       \wedge
       \big( \ex{R_i,K}_{\B} \ \Rightarrow \ \ex{R_i,K}_{\B} \opu_{=1} \RKSuc_i \big) 
       \bigg).
\]
\end{proposition}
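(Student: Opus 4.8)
The plan is to follow, almost step for step, the proof of the preceding proposition, with the $\B$-propositions $R_i$ and $R_i,K$ and the formulae $\RSuc_i$, $\RKSuc_i$ playing the roles previously played by $r_{i,\ell}$ and $\rSuc_{i,\ell,\ell'}$. The formula to be established is the conjunction of two $\opg_{=1}$-statements that are treated identically, so I only describe $\bigwedge_{i=0}^{2}\opg_{=1}(\ex{R_i}_\B \Rightarrow \ex{R_i}_\B\opu_{=1}\RSuc_i)$; the statement about $\ex{R_i,K}_\B$ runs by exactly the same argument with $RKsuc_i=\RKSuc_i$ in place of $Rsuc_i=\RSuc_i$, using the conjunct $\opg_{=1}(K\Rightarrow\opg_{=1}K)$ of $\overline{\Succ}$ to keep $K$ true along the whole prefix so that only the $K$-decorated patterns can occur. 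Throughout I use the elementary fact that $o\models\opg_{=1}\psi$ forces $\psi$ to hold in every state reachable from $o$, for any state formula $\psi$.

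\textbf{Setup and reduction.} Fix $i\in\{0,1,2\}$ and a state $s$ reachable from $o$ with $s\models\ex{R_i}_\B$. From the conjunct $\bigwedge_j\opg_{=1}(\ex{R_j}_\B\Rightarrow\opf_{=1}Rsuc_j)$ of $\overline{\Succ}$ we get $s\models\opf_{=1}\RSuc_i$ (recall $Rsuc_i=\RSuc_i$), and $s\not\models\RSuc_i$ since $\ex{R_i}_\B$ is incompatible with every disjunct of $\RSuc_i$. It therefore suffices to show that every run $\pi$ with $\pi(0)=s$ and $\pi\models\opf\RSuc_i$ also satisfies $\ex{R_i}_\B\opu\RSuc_i$: the set of such runs has measure $1$, so $s\models\ex{R_i}_\B\opu_{=1}\RSuc_i$, and letting $s$ vary over all reachable states gives the claim.

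\textbf{Core argument.} Fix such a $\pi$, let $\alpha$ be the least index with $\pi(\alpha)\models\RSuc_i$, and suppose for contradiction that some $\beta<\alpha$ satisfies $\pi(\beta)\not\models\ex{R_i}_\B$; take $\beta$ minimal, so $\beta\geq1$ and $\pi(\beta-1)\models\ex{R_i}_\B$. Since $\pi(\beta-1)$ satisfies a proposition of $\B$, the ``closure'' conjunct $\opg_{=1}\big([\bigvee_{x\in\B}x]\Rightarrow\opg_{=1}[\bigvee_j(Rsuc_j\vee RKsuc_j)]\big)$ forces every state reachable from $\pi(\beta-1)$, in particular $\pi(\beta)$, to satisfy $\bigvee_j(Rsuc_j\vee RKsuc_j)$; hence $\pi(\beta)\models\ex{L}_\B$ for a unique legal successor pattern $L$ among $\{A_j\},\{B_j\},\{C_j\},\{D_j\},\{E_j\},\{A_j,K\},\{B_j,K\},\{C_j,K\},\{D_j,K\},\{R_j,K\}$ ($0\leq j\leq2$). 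If $L\in\{\{A_i\},\{B_i\},\{C_i\},\{D_i\},\{E_i\},\{R_{S(i)},K\}\}$, then $\pi(\beta)\models\RSuc_i$ with $\beta<\alpha$, contradicting the minimality of $\alpha$. Otherwise $L$ is a ``dead-end'' pattern $\{A_j\},\{B_j\},\{C_j\},\{D_j\},\{E_j\}$ with $j\neq i$, or $\{A_j,K\},\{B_j,K\},\{C_j,K\},\{D_j,K\}$, or $\{R_j,K\}$ with $j\neq S(i)$. In the first two cases $L$ is a $\B$-marker (by $Mark$), so — as $\pi(\alpha)$ is reachable from $\pi(\beta)$ — we get $\pi(\alpha)\models\ex{L}_\B$, which is incompatible with every disjunct of $\RSuc_i$; hence $\pi(\alpha)\not\models\RSuc_i$, a contradiction. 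In the remaining case $\pi(\beta)\models K$, so by $\opg_{=1}(K\Rightarrow\opg_{=1}K)$ every later state, in particular $\pi(\alpha)$, satisfies $K$, whence $\pi(\alpha)\models\RSuc_i$ forces $\pi(\alpha)\models\ex{R_{S(i)},K}_\B$; here one transplants the ``wrong target label'' bookkeeping of the previous proposition, combining the conjunct $\opg_{=1}\big([\ex{R_z}_\B\vee\ex{R_z,K}_\B]\Rightarrow\opf_{<1}\bigvee_{y\neq z}\ex{R_y,K}_\B\big)$ with the fact that $\pi(\beta)\models\opf_{=1}\RSuc_i$ (inherited from $s$ exactly as in that proof, since $\pi(0),\ldots,\pi(\beta)\not\models\RSuc_i$) to derive the final contradiction. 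Thus no such $\beta$ exists and $\pi\models\ex{R_i}_\B\opu\RSuc_i$.

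\textbf{Main obstacle.} The dead-end patterns are eliminated instantly via the $\B$-markers of $Mark$, just as the $\mathcal{C}$-markers were used in the previous proposition, so the only genuinely delicate subcase is $\pi(\beta)\models\ex{R_j,K}_\B$ with $j\neq S(i)$ (and, symmetrically, a state carrying $K$ appearing inside what should be an $\ex{R_i,K}_\B$-prefix in the second conjunct): there no marker applies, and one must combine the forward-stickiness of $K$ with the ``$\opf_{<1}$'' closure conjunct of $\overline{\Succ}$ to forbid an infinite detour through the wrong $R$-patterns before $\RSuc_i$ is reached. This is precisely where the argument has to reuse, essentially verbatim, the most intricate part of the preceding proposition's proof.
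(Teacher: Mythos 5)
Your proof is correct and follows essentially the same structure as the paper's own argument: reduce to almost-sure paths via $\opf_{=1}\RSuc_i$, use the $\B$-closure conjunct to constrain $\pi(\beta)$ to a successor pattern, dispatch the non-$\{R_j,K\}$ patterns immediately through the $\B$-markers of $\mathit{Mark}$, and handle the remaining $\ex{R_j,K}_\B$ case ($j\neq S(i)$) by propagating $K$ forward and contradicting the $\opf_{<1}$ conjunct, with the second conjunct treated as a $K$-decorated analogue exactly as the paper does. The only cosmetic differences are that you take $\beta$ minimal (which the paper does not need) and you fold the "$i\neq j$" observation into the case split rather than deriving it afterward.
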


\begin{proof}
Let $i \in \{0, \ldots, 2 \}$.
Let $s$ be some state reachable from the state $o$.
We want to show that 
\[
\big( \ex{R_i}_{\B} \ \Rightarrow \ \ex{R_i}_{\B} \opu_{=1} \RSuc_i \big) 
\wedge
\big( \ex{R_i,K}_{\B} \ \Rightarrow \ \ex{R_i,K}_{\B} \opu_{=1} \RKSuc_i \big) 
\]
holds in $s$.
We start with the first conjunct
\[
\ex{R_i}_{\B} \ \Rightarrow \ \ex{R_i}_{\B} \opu_{=1} \RSuc_i.
\]
Suppose that $\ex{R_i}_{\B}$ holds in $s$.
We want to show that $\ex{R_i}_{\B} \opu_{=1} \RSuc_i$ holds as well.
Since $o$ satisfies $\overline{\Succ}$ and $s$ is reachable
from $o$, we have that $s$ satisfies $\opf_{=1} Rsuc_i$.
To show that $s$ satisfies $\ex{R_i}_{\B} \opu_{=1} \RSuc_i$,
it is sufficient to show that every path starting in $s$
that satisfies $\opf Rsuc_i$ also satisfies 
$\ex{R_i}_{\B} \opu \RSuc_i$.
Let $\pi$ be a path that starts in $s$ and satisfies $\opf Rsuc_i$.
Then there exists $\alpha \in \nat$ such that $\pi(\alpha)$
satisfies $Rsuc_i$.
Let $\alpha$ be the least such $\alpha$.
We want to show that $\pi$ satisfies $\ex{R_i}_{\B} \opu \RSuc_i$.
Suppose it does not.
Then there exists $\beta \in \nat$ such that $\beta < \alpha$
and $\pi(\beta)$ does not satisfy $\ex{R_i}_{\B}$.
Since $s$ is reachable from $o$ and $o$ satisfies $\overline{\Succ}$,
we know that $s$ satisfies
\[
    \bigg[
      \bigvee_{x \in \B} x
    \bigg]
    \implies
    \bigg[
      \opg_{=1} \big[ \bigvee_{i=0}^2 Rsuc_i \vee RKsuc_i \big]
    \bigg].
\]
We already know that $s$ satisfies $R_i$ and clearly $R_i \in \B$.
It follows that $\pi(\beta)$ satisfies $Rsuc_j \vee RKsuc_j$
for some $j \in \{0, \ldots, 2 \}$.
Fix such $j$.
Now we show that $\pi(\beta)$ must satisfy $\ex{R_{S(j)},K}_\B$.

Suppose that $\pi(\beta)$ satisfies one of $\ex{A_j}_{\B}$,
$\ex{B_j}_{\B}$, $\ex{C_j}_{\B}$, $\ex{D_j}_{\B}$,
$\ex{E_j}_{\B}$.
If $j=i$, then $\pi(\beta)$ satisfies $Rsuc_i$, a contradiction.
Otherwise $j \neq i$.
Recall that the sets $\{A_j\}$, $\{B_j\}$, $\{C_j\}$, $\{D_j\}$,
and $\{E_j\}$ are $\B$ markers.
It follows that $\pi(\alpha)$ also satisfies one of
$\ex{A_j}_{\B}$, $\ex{B_j}_{\B}$, $\ex{C_j}_{\B}$, $\ex{D_j}_{\B}$,
$\ex{E_j}_{\B}$.
Then $\pi(\alpha)$ does not satisfy $Rsuc_i$, a contradiction.

Suppose that $\pi(\beta)$ satisfies one of $\ex{A_j,K}_{\B}$,
$\ex{B_j,K}_{\B}$, $\ex{C_j,K}_{\B}$, $\ex{D_j,K}_{\B}$.
Recall that the sets $\{A_j,K\}$, $\{B_j,K\}$, $\{C_j,K\}$,
$\{D_j,K\}$, are $\B$ markers.
It follows that $\pi(\alpha)$ satisfies one of
$\ex{A_j,K}_{\B}$, $\ex{B_j,K}_{\B}$, $\ex{C_j,K}_{\B}$,
$\ex{D_j,K}_{\B}$.
Clearly, this contradicts the fact that $\pi(\alpha)$ satisfies
$Rsuc_i$.

We already know that $\pi(\beta)$ satisfies $Rsuc_j \vee RKsuc_j$,
which means that the only remaining possibility is that $\pi(\beta)$
satisfies $\ex{R_{S(j)},K}_\B$.

Observe that this also means that $\pi(\beta)$ satisfies $K$.
Since $o$ satisfies $\overline{\Succ}$ and $\pi(\beta)$ is
reachable from $o$, we know that $\pi(\beta)$ satisfies 
$K \implies \opg_{=1} K$.
Therefore, all states reachable from $\pi(\beta)$ satisfy $K$
as well.

Recall that $s$ satisfies $\opf_{=1} Rsuc_i$.
By the definition of $\alpha$, $\pi(\gamma)$ does not satisfy
$Rsuc_i$ for every $\gamma \in \nat$ such that $\gamma < \alpha$.
Recall that $\beta \in \nat$ and $\beta < \alpha$.
It follows that $\pi(\beta)$ must satisfy $\opf_{=1} Rsuc_i$.

Since $\pi(\beta)$ satisfies $\opf_{=1} Rsuc_i$
and $\pi(\beta)$ satisfies $K$
and every state reachable from $\pi(\beta)$ satisfies $K$,
we have that $\pi(\beta)$ satisfies $\opf_{=1} \ex{R_{S(i)},K}_{\B}$.

Now we show that $i \neq j$.
Suppose $i=j$.
We have shown that $\pi(\beta)$ satisfies $\ex{R_{S(j)},K}_\B$,
which means that $\pi(\beta)$ satisfies $\ex{R_{S(i)},K}_\B$.
Then $\pi(\beta)$ satisfies $Rsuc_i$, and that contradicts
the definition of $\alpha$.

We already know that $\pi(\beta)$ satisfies
$\opf_{=1} \ex{R_{S(i)},K}_{\B}$.
Given that $i \neq j$, clearly $S(i) \neq S(j)$.
Then it is easy to see that $\pi(\beta)$ also satisfies
\[
\opf_{=1} \bigvee_{\substack{0 \leq z \leq 2 \\ z \neq S(j)}} \ex{R_{z},K}_\B.
\]

Recall that $o$ satisfies $\overline{\Succ}$.
Also recall that $\pi(\beta)$ satisfies $\ex{R_{S(j)},K}_\B$.
It follows that $\pi(\beta)$ satisfies
\[
\opf_{<1} \bigvee_{\substack{0 \leq z \leq 2 \\ z \neq S(j)}} \ex{R_{z},K}_\B,
\]
a contradiction.

It remains to show that the second conjunct
\[
\ex{R_i,K}_{\B} \ \Rightarrow \ \ex{R_i,K}_{\B} \opu_{=1} \RKSuc_i
\]
holds in $s$.
In this case, we proceed analogously to the previous conjunct and
the previous proposition.
\end{proof}

\section{A Proof of Proposition~\ref{prop-model}}
\label{app-two-counter-extension}

Recall that $\lambda = \frac{14}{255}$, $I_\lambda =(\frac{1}{15},\frac{14}{15})$, $\vec{z} = (\frac{1}{12},\frac{1}{15})$, and $\delta = \frac{1}{11}$.  
We also put $\varrho = \frac{1}{13}$. This implies
\begin{eqnarray}
\vec{z}_2 < \varrho < \vec{z}_1 <\delta \label{in-cons-one}\\
2\lambda + 2\delta + 2\vec{z}_1 + 2\vec{z}_2 < 1 \label{in-cons-two}
\end{eqnarray}

\noindent
Now we restate the proposition.
\newtheorem{innercustomprop}{Proposition}
\newenvironment{customprop}[1]
  {\renewcommand\theinnercustomprop{#1}\innercustomprop}
  {\endinnercustomprop}

\begin{customprop}{\ref{prop-model}}
    For every recurrent computation of $\M$ there exist a Markov chain $M$, a state $s$ of $M$, and $\pi \in \run(s)$ such that $s \models \varphi_\M$ and $\pi$ represents the computation.
\end{customprop}
\begin{proof}
   Let $\conf_0,\conf_1,\ldots$ be a recurrent computation of~$\M$. We construct a Markov chain $M$ where the states are elements of $\N \times 2^{\A^1 \cup \A^2} \times \N \times \N$. The first component models the configuration index, the second components is the set of atomic propositions satisfied in the state, and the last two integers represent the counters' values. 

   We say that $\alpha \subseteq \A^1 \cup \A^2$ is \emph{free} if 
   $\alpha \cap \{R^k_i, r^k_{i,\ell} \mid k \in \{1,2\}, i \in \{0,1,2\}, \ell \in \{1,\ldots,m\}\} = \emptyset$. The Markov chain $M$ is the least fixed-point of the closure rules described below. Here we use the constraints~\eqref{in-cons-one} and~\eqref{in-cons-two}. 
   \begin{itemize}
    \item For every $\conf_j = (\ell,c_1,c_2)$, there is a state $[j,r^1_{i,\ell},r^2_{i,\ell},c_1,c_2]$ where $i = j \mbox{ mod } 3$. 
    \item If $[j,r^1_{i,\ell},r^2_{i,\ell},c_1,c_2]$ is a state, then its immediate successors are the associated probabilities are the following:
    \begin{itemize}
        \item If $\update_\ell^1 = \Dec$ and $\update_\ell^2 = \Dec$, then
           \begin{itemize}
            \item $[j,r^1_{i,\ell},r^2_{i,\ell},c_1,c_2] \tran{\vv{c_1}_1} 
                   [j,a^1_{i,\ell},d^2_{i,\ell},R_i^1,R_i^2,0,0]$
            \item $[j,r^1_{i,\ell},r^2_{i,\ell},c_1,c_2] \tran{\vv{c_1}_2} 
                   [j,b^1_{i,\ell},d^2_{i,\ell},R_i^1,R_i^2,0,0]$    
            \item $[j,r^1_{i,\ell},r^2_{i,\ell},c_1,c_2] \tran{\delta- \vv{c_1}_1}
                   [j,c^1_{i,\ell},d^2_{i,\ell},R_i^1,R_i^2,0,0]$
            \item $[j,r^1_{i,\ell},r^2_{i,\ell},c_1,c_2] \tran{\vv{c_2}_1} 
                   [j,d^1_{i,\ell},a^2_{i,\ell},R_i^1,R_i^2,0,0]$
            \item $[j,r^1_{i,\ell},r^2_{i,\ell},c_1,c_2] \tran{\vv{c_2}_2} 
                   [j,d^1_{i,\ell},b^2_{i,\ell},R_i^1,R_i^2,0,0]$    
            \item $[j,r^1_{i,\ell},r^2_{i,\ell},c_1,c_2] \tran{\delta- \vv{c_2}_1} 
                   [j,d^1_{i,\ell},c^2_{i,\ell},R_i^1,R_i^2,0,0]$
        \end{itemize}
        The next transitions depend on the relationship between $\vv{c_1}_1$ and $\vv{c_2}_1$. 
        
        If $\vv{c_1}_1 > \vv{c_2}_1$, then
           \begin{itemize} 
            \item $[j,r^1_{i,\ell},r^2_{i,\ell},c_1,c_2] \tran{p}  [j{+}1,r^1_{S(i),\ell'},r^2_{S(i),\ell'},c'_1,c'_2]$
            \item $[j,r^1_{i,\ell},r^2_{i,\ell},c_1,c_2] \tran{p'}  [j,r^1_{S(i),\ell'},d^2_{i,\ell},R^2_{S(i)},\max\{0,c_1{-}1\},0]$
           \end{itemize}
        where $p = \vv{c_2}_1$, $p' =  \vv{c_1}_1 - \vv{c_2}_1$, and $\conf_{j+1}= (\ell',c'_1,c'_2)$.

        If $\vv{c_1}_1 < \vv{c_2}_1$, then
        \begin{itemize} 
            \item $[j,r^1_{i,\ell},r^2_{i,\ell},c_1,c_2] \tran{p}  [j{+}1,r^1_{S(i),\ell'},r^2_{S(i),\ell'},c'_1,c'_2]$
            \item $[j,r^1_{i,\ell},r^2_{i,\ell},c_1,c_2] \tran{p'}  [j,d^1_{i,\ell},r^2_{S(i),\ell'},R^1_{S(i)},0,\max\{0,c_2{-}1\}]$
           \end{itemize}
        where $p = \vv{c_1}_1$, $p' =  \vv{c_2}_1 - \vv{c_1}_1$, and $\conf_{j+1}= (\ell',c'_1,c'_2)$.

        If $\vv{c_1}_1 = \vv{c_2}_1$, then
        \begin{itemize} 
            \item $[j,r^1_{i,\ell},r^2_{i,\ell},c_1,c_2] \tran{\vv{c_1}_1}  [j{+}1,r^1_{S(i),\ell'},r^2_{S(i),\ell'},c'_1,c'_2]$
           \end{itemize}
        where $\conf_{j+1}= (\ell',c'_1,c'_2)$. Finally,
        \begin{itemize}
            \item $[j,r^1_{i,\ell},r^2_{i,\ell},c_1,c_2] \tran{q} [j,d^1_{i,\ell},d^2_{i,\ell},R_i^1,R_i^2,0,0]$    
         \end{itemize}  
        where $q = 1 - \vv{c_1}_2 - \vv{c_2}_2 - 2\delta - p - p'$.
    \item If $\update_\ell^1 = \Dec$ and $\update_\ell^2 = \Inc$, then
        \begin{itemize}
         \item $[j,r^1_{i,\ell},r^2_{i,\ell},c_1,c_2] \tran{\vv{c_1}_1} 
                [j,a^1_{i,\ell},d^2_{i,\ell},R_i^1,R_i^2,0,c_2{+}1]$
         \item $[j,r^1_{i,\ell},r^2_{i,\ell},c_1,c_2] \tran{\vv{c_1}_2} 
                [j,b^1_{i,\ell},d^2_{i,\ell},R_i^1,R_i^2,0,c_2{+}1]$    
         \item $[j,r^1_{i,\ell},r^2_{i,\ell},c_1,c_2] \tran{\delta- \vv{c_1}_1}
                [j,c^1_{i,\ell},d^2_{i,\ell},R_i^1,R_i^2,0,c_2{+}1]$
         \item $[j,r^1_{i,\ell},r^2_{i,\ell},c_1,c_2] \tran{\vv{c_2}_1 - r} 
                [j,d^1_{i,\ell},a^2_{i,\ell},R_i^1,R_i^2,0,\max\{0,c_2{-}1\}]$
         \item $[j,r^1_{i,\ell},r^2_{i,\ell},c_1,c_2] \tran{r} 
                [j,d^1_{i,\ell},\bar{a}^2_{i,\ell},R_i^1,R_i^2,0,\max\{0,c_2{-}1\}]$
         \item $[j,r^1_{i,\ell},r^2_{i,\ell},c_1,c_2] \tran{\vv{c_2}_2} 
                [j,d^1_{i,\ell},b^2_{i,\ell},R_i^1,R_i^2,0,c_2{+}1]$    
         \item $[j,r^1_{i,\ell},r^2_{i,\ell},c_1,c_2] \tran{\delta- \vv{c_2}_2} 
                [j,d^1_{i,\ell},c^2_{i,\ell},R_i^1,R_i^2,0,c_2{+}1]$
         \item $[j,r^1_{i,\ell},r^2_{i,\ell},c_1,c_2] \tran{\vv{c_1}_1}  
                [j{+}1,r^1_{S(i),\ell'},r^2_{S(i),\ell'},c'_1,c'_2]$
         \item $[j,r^1_{i,\ell},r^2_{i,\ell},c_1,c_2] \tran{q} [j,d^1_{i,\ell},d^2_{i,\ell},R_i^1,R_i^2,0,c_2{+}1]$    
     \end{itemize}
     where $r = \varrho - (\vv{c_2+1}_2 \cdot (1-\vv{c_2}_1))$,
     $q = 1 - \vv{c_1}_2 - \vv{c_2}_1 - \vv{c_1}_1 - 2\delta$,
     and $\conf_{j+1}= (\ell',c'_1,c'_2)$.
    \item If $\update_\ell^1 = \Inc$ and $\update_\ell^2 = \Dec$, then 
     \begin{itemize}
        \item $[j,r^1_{i,\ell},r^2_{i,\ell},c_1,c_2] \tran{\vv{c_2}_1} 
               [j,d^1_{i,\ell},a^2_{i,\ell},R_i^1,R_i^2,c_1{+}1,0]$
        \item $[j,r^1_{i,\ell},r^2_{i,\ell},c_1,c_2] \tran{\vv{c_2}_2} 
               [j,d^1_{i,\ell},b^2_{i,\ell},R_i^1,R_i^2,c_1{+}1,0]$    
        \item $[j,r^1_{i,\ell},r^2_{i,\ell},c_1,c_2] \tran{\delta- \vv{c_2}_1}
               [j,d^1_{i,\ell},c^2_{i,\ell},R_i^1,R_i^2,c_1{+}1,0]$
        \item $[j,r^1_{i,\ell},r^2_{i,\ell},c_1,c_2] \tran{\vv{c_1}_1 - r} 
               [j,a^1_{i,\ell},d^2_{i,\ell},R_i^1,R_i^2,\max\{0,c_1{-}1\},0]$
        \item $[j,r^1_{i,\ell},r^2_{i,\ell},c_1,c_2] \tran{r} 
               [j,\bar{a}^1_{i,\ell},d^2_{i,\ell},R_i^1,R_i^2,\max\{0,c_1{-}1\},0]$
        \item $[j,r^1_{i,\ell},r^2_{i,\ell},c_1,c_2] \tran{\vv{c_1}_2} 
               [j,b^1_{i,\ell},d^2_{i,\ell},R_i^1,R_i^2,c_1{+}1,0]$    
        \item $[j,r^1_{i,\ell},r^2_{i,\ell},c_1,c_2] \tran{\delta- \vv{c_1}_2} 
               [j,c^1_{i,\ell},d^2_{i,\ell},R_i^1,R_i^2,c_1{+}1,0]$
        \item $[j,r^1_{i,\ell},r^2_{i,\ell},c_1,c_2] \tran{\vv{c_2}_1}  
               [j{+}1,r^1_{S(i),\ell'},r^2_{S(i),\ell'},c'_1,c'_2]$
        \item $[j,r^1_{i,\ell},r^2_{i,\ell},c_1,c_2] \tran{q} [j,d^1_{i,\ell},d^2_{i,\ell},R_i^1,R_i^2,c_1{+}1,0]$    
     \end{itemize}
    where $r = \varrho - (\vv{c_1+1}_2 \cdot (1-\vv{c_1}_1))$,
    $q = 1 - \vv{c_2}_2 - \vv{c_1}_1 - \vv{c_2}_1 - 2\delta$,
    and $\conf_{j+1}= (\ell',c'_1,c'_2)$.
    \item If $\update_\ell^1 = \Inc$ and $\update_\ell^2 = \Inc$, then 
    \begin{itemize}
        \item $[j,r^1_{i,\ell},r^2_{i,\ell},c_1,c_2] \tran{\vv{c_1}_1 - r_1} 
               [j,a^1_{i,\ell},d^2_{i,\ell},R_i^1,R_i^2,\max\{0,c_1{-}1\},c_2{+}1]$
        \item $[j,r^1_{i,\ell},r^2_{i,\ell},c_1,c_2] \tran{r_1} 
               [j,\bar{a}^1_{i,\ell},d^2_{i,\ell},R_i^1,R_i^2,\max\{0,c_1{-}1\},c_2{+}1]$
        \item $[j,r^1_{i,\ell},r^2_{i,\ell},c_1,c_2] \tran{\vv{c_1}_2} 
               [j,b^1_{i,\ell},d^2_{i,\ell},R_i^1,R_i^2,c_1{+}1,c_2{+}1]$    
        \item $[j,r^1_{i,\ell},r^2_{i,\ell},c_1,c_2] \tran{\delta- \vv{c_1}_2} 
               [j,c^1_{i,\ell},d^2_{i,\ell},R_i^1,R_i^2,c_1{+}1,c_2{+}1]$
        \item $[j,r^1_{i,\ell},r^2_{i,\ell},c_1,c_2] \tran{\vv{c_2}_1 - r_2} 
                [j,d^1_{i,\ell},a^2_{i,\ell},R_i^1,R_i^2,c_1{+}1,\max\{0,c_2{-}1\}]$
        \item $[j,r^1_{i,\ell},r^2_{i,\ell},c_1,c_2] \tran{r_2} 
                [j,d^1_{i,\ell},\bar{a}^2_{i,\ell},R_i^1,R_i^2,c_1{+}1,\max\{0,c_2{-}1\}]$
        \item $[j,r^1_{i,\ell},r^2_{i,\ell},c_1,c_2] \tran{\vv{c_2}_2} 
                [j,d^1_{i,\ell},b^2_{i,\ell},R_i^1,R_i^2,c_1{+}1,c_2{+}1]$    
        \item $[j,r^1_{i,\ell},r^2_{i,\ell},c_1,c_2] \tran{\delta- \vv{c_2}_2} 
                [j,d^1_{i,\ell},c^2_{i,\ell},R_i^1,R_i^2,c_1{+}1,c_2{+}1]$
        \item $[j,r^1_{i,\ell},r^2_{i,\ell},c_1,c_2] \tran{q}  
               [j{+}1,r^1_{S(i),\ell'},r^2_{S(i),\ell'},c'_1,c'_2]$
        \item $[j,r^1_{i,\ell},r^2_{i,\ell},c_1,c_2] \tran{q} [j,d^1_{i,\ell},d^2_{i,\ell},R_i^1,R_i^2,c_1{+}1,c_2{+}1]$    
     \end{itemize}
     where $r_1 = \varrho - (\vv{c_1+1}_2 \cdot (1-\vv{c_1}_1))$, $r_2 = \varrho - (\vv{c_2+1}_2 \cdot (1-\vv{c_2}_1))$,
     $2q = 1 - \vv{c_2}_2 - \vv{c_1}_1 - \vv{c_2}_1 - 2\delta$,
     and $\conf_{j+1}= (\ell',c'_1,c'_2)$.
    \end{itemize}
    \item For every state $[j,\alpha,r^1_{i,\ell},R_{i}^2,c_1,0]$ where $\alpha \subseteq \A^1 \cup \A^2$ is free, the immediate successors are the associated probabilities are the following:
    \begin{itemize}
        \item $[j,\alpha,r_{i,\ell}^1,R_i^2,c_1,0] \tran{\vv{c_1}_1}
               [j,\alpha,a_{i,\ell}^1,R_i^1,D_i^2,0,0]$ 
        \item $[j,\alpha,r_{i,\ell}^1,R_i^2,c_1,0] \tran{\vv{c_1}_2}
               [j,\alpha,b_{i,\ell}^1,R_i^1,D_i^2,0,0]$ 
        \item $[j,\alpha,r_{i,\ell}^1,R_i^2,c_1,0] \tran{\delta - \vv{c_1}_1}
               [j,\alpha,c_{i,\ell}^1,R_i^1,D_i^2,0,0]$        
        \item $[j,\alpha,r_{i,\ell}^1,R_i^2,c_1,0] \tran{q}
               [j,\alpha,d_{i,\ell}^1,R_i^1,D_i^2,0,0]$ 
        \item $[j,\alpha,r_{i,\ell}^1,R_i^2,c_1,0] \tran{\vv{c_1}_1}
               [j,\alpha,r_{S(i),\ell}^1,D_i^2,\max\{0,c_1{-}1\},0]$ 
        \item $[j,\alpha,r_{i,\ell}^1,R_i^2,c_1,0] \tran{\vec{z}_1}
               [j,\alpha,d_{i,\ell}^1,R_i^1,A_i^2,0,0]$ 
        \item $[j,\alpha,r_{i,\ell}^1,R_i^2,c_1,0] \tran{\vec{z}_2}
               [j,\alpha,d_{i,\ell}^1,R_i^1,B_i^2,0,0]$ 
        \item $[j,\alpha,r_{i,\ell}^1,R_i^2,c_1,0] \tran{\delta - \vec{z}_1}
               [j,\alpha,d_{i,\ell}^1,R_i^1,C_i^2,0,0]$ 
        \item $[j,\alpha,r_{i,\ell}^1,R_i^2,c_1,0] \tran{\lambda}
               [j,\alpha,d_{i,\ell}^1,R_i^1,E_i^2,0,0]$ 
        \item $[j,\alpha,r_{i,\ell}^1,R_i^2,c_1,0] \tran{\vec{z}_1}
               [j,\alpha,d_{i,\ell}^1,R_{S(i)}^1,R_{S(i)}^2,K^2,0,0]$ 
  \end{itemize}
  where $q = 1 - \vv{c_1}_2 - \vv{c_1}_1 -\vec{z}_2 - \vec{z}_1  - 2\delta -\lambda$.
  \item   For every state $[j,\alpha,R_{i}^1,r^2_{i,\ell},0,c_2]$ where $\alpha \subseteq \A^1 \cup \A^2$ is free, the immediate successors are the associated probabilities are the following:

  \begin{itemize}
      \item $[j,\alpha,r_{i,\ell}^2,R_i^1,0,c_2] \tran{\vv{c_2}_1}
             [j,\alpha,a_{i,\ell}^2,R_i^2,D_i^1,0,0]$ 
      \item $[j,\alpha,r_{i,\ell}^2,R_i^1,0,c_2] \tran{\vv{c_2}_2}
             [j,\alpha,b_{i,\ell}^2,R_i^2,D_i^1,0,0]$ 
      \item $[j,\alpha,r_{i,\ell}^2,R_i^1,0,c_2] \tran{\delta - \vv{c_2}_1}
             [j,\alpha,c_{i,\ell}^2,R_i^2,D_i^1,0,0]$        
      \item $[j,\alpha,r_{i,\ell}^2,R_i^1,0,c_2] \tran{q}
             [j,\alpha,d_{i,\ell}^2,R_i^2,D_i^1,0,0]$ 
      \item $[j,\alpha,r_{i,\ell}^2,R_i^1,0,c_2] \tran{\vv{c_2}_1}
             [j,\alpha,r_{S(i),\ell}^2,D_i^1,0,\max\{0,c_2{-}1\}]$ 
      \item $[j,\alpha,r_{i,\ell}^2,R_i^1,0,c_2] \tran{\vec{z}_1}
             [j,\alpha,d_{i,\ell}^2,R_i^2,A_i^1,0,0]$ 
      \item $[j,\alpha,r_{i,\ell}^2,R_i^1,0,c_2] \tran{\vec{z}_2}
             [j,\alpha,d_{i,\ell}^2,R_i^2,B_i^1,0,0]$ 
      \item $[j,\alpha,r_{i,\ell}^2,R_i^1,0,c_2] \tran{\delta - \vec{z}_1}
             [j,\alpha,d_{i,\ell}^2,R_i^2,C_i^1,0,0]$ 
      \item $[j,\alpha,r_{i,\ell}^2,R_i^1,0,c_2] \tran{\lambda}
             [j,\alpha,d_{i,\ell}^2,R_i^2,E_i^1,0,0]$ 
      \item $[j,\alpha,r_{i,\ell}^2,R_i^1,0,c_2] \tran{\vec{z}_1}
             [j,\alpha,d_{i,\ell}^2,R_{S(i)}^1,R_{S(i)}^2,K^2,0,0]$ 
  \end{itemize}
     where $q = 1 - \vv{c_2}_2 - \vv{c_2}_1 -\vec{z}_2 - \vec{z}_1  - 2\delta -\lambda$.
  \item  For every state  $[j,\alpha,r^1_{i,\ell},c_1,0]$ where $\alpha \subseteq \A^1 \cup \A^2$ is free, the immediate successors are the associated probabilities are the following:
  \begin{itemize}
    \item $[j,\alpha,r_{i,\ell}^1,c_1,0] \tran{\vv{c_1}_1}
           [j,\alpha,a_{i,\ell}^1,R_i^1,0,0]$ 
    \item $[j,\alpha,r_{i,\ell}^1,c_1,0] \tran{\vv{c_1}_2}
           [j,\alpha,b_{i,\ell}^1,R_i^1,0,0]$ 
    \item $[j,\alpha,r_{i,\ell}^1,c_1,0] \tran{\delta - \vv{c_1}_1}
           [j,\alpha,c_{i,\ell}^1,R_i^1,0,0]$        
    \item $[j,\alpha,r_{i,\ell}^1,c_1,0] \tran{q}
           [j,\alpha,d_{i,\ell}^1,R_i^1,0,0]$ 
    \item $[j,\alpha,r_{i,\ell}^1,c_1,0] \tran{\vv{c_1}_1}
           [j,\alpha,r_{S(i),\ell}^1,\max\{0,c_1{-}1\},0]$ 
   \end{itemize}
   where $q = 1 - \vv{c_1}_2 - \vv{c_1}_1 - \delta$.
   \item For every state $[j,\alpha,r^2_{i,\ell},0,c_2]$ where $\alpha \subseteq \A^1 \cup \A^2$ is free, the immediate successors are the associated probabilities are the following:
   \begin{itemize}
     \item $[j,\alpha,r_{i,\ell}^2,0,c_2] \tran{\vv{c_2}_1}
            [j,\alpha,a_{i,\ell}^2,R_i^2,0,0]$ 
     \item $[j,\alpha,r_{i,\ell}^2,0,c_2] \tran{\vv{c_2}_2}
            [j,\alpha,b_{i,\ell}^2,R_i^2,0,0]$ 
     \item $[j,\alpha,r_{i,\ell}^2,0,c_2] \tran{\delta - \vv{c_2}_1}
            [j,\alpha,c_{i,\ell}^2,R_i^2,0,0]$        
     \item $[j,\alpha,r_{i,\ell}^2,0,c_2] \tran{q}
            [j,\alpha,d_{i,\ell}^2,R_i^2,0,0]$ 
     \item $[j,\alpha,r_{i,\ell}^2,0,c_2] \tran{\vv{c_2}_1}
            [j,\alpha,r_{S(i),\ell}^2,0,\max\{0,c_2{-}1\}]$ 
    \end{itemize}
    where $q = 1 - \vv{c_2}_2 - \vv{c_2}_1 - \delta$.

  \item For every state $[j,\alpha,R_i^1,R_i^2,c_1,c_2]$ where $\alpha \subseteq \A^1 \cup \A^2$ is free, the immediate successors are the associated probabilities are the following:
  \begin{itemize}
      \item $[j,\alpha,R_i^1,R_i^2,c_1,c_2] \tran{\vv{c_1}_1}
             [j,\alpha,A_i^1,D_i^2,0,0]$ 
      \item $[j,\alpha,R_i^1,R_i^2,c_1,c_2] \tran{\vv{c_1}_2}
             [j,\alpha,B_i^1,D_i^2,0,0]$ 
      \item $[j,\alpha,R_i^1,R_i^2,c_1,c_2] \tran{\delta - \vv{c_1}_1}
             [j,\alpha,C_i^1,D_i^2,0,0]$        
      \item $[j,\alpha,R_i^1,R_i^2,c_1,c_2] \tran{q}
             [j,\alpha,D_i^1,D_i^2,0,0]$ 
      \item $[j,\alpha,R_i^1,R_i^2,c_1,c_2] \tran{\lambda}
             [j,\alpha,E_i^1,D_i^2,0,0]$ 
      \item $[j,\alpha,R_i^1,R_i^2,c_1,c_2] \tran{\vv{c_1}_1}
             [j,\alpha,R_{S(i)}^1,K^1,D_i^2,\max\{c_1{-}1,0\},0]$ 
      \item $[j,\alpha,R_i^1,R_i^2,c_1,c_2] \tran{\vv{c_2}_1}
             [j,\alpha,D_i^1,A_i^2,0,0]$ 
      \item $[j,\alpha,R_i^1,R_i^2,c_1,c_2] \tran{\vv{c_2}_2}
             [j,\alpha,D_i^1,B_i^2,0,0]$ 
      \item $[j,\alpha,R_i^1,R_i^2,c_1,c_2] \tran{\delta - \vv{c_2}_1}
             [j,\alpha,D_i^1,C_i^2,0,0]$        
      \item $[j,\alpha,R_i^1,R_i^2,c_1,c_2] \tran{\lambda}
             [j,\alpha,D_i^1,E_i^2,0,0]$ 
      \item $[j,\alpha,R_i^1,R_i^2,c_1,c_2] \tran{\vv{c_2}_1}
             [j,\alpha,D_i^2,R_{S(i)}^2,K^2,0,\max\{c_2{-}1,0\}]$ 
\end{itemize}
  where $q = 1 - \vv{c_1}_2 -\vv{c_1}_1 - \vv{c_2}_2 - \vv{c_2}_1- 2\lambda - 2\delta$.
  \item For every state $[j,\alpha,R_i^1,c_1,0]$  where $\alpha \subseteq \A^1 \cup \A^2$ is free, the immediate successors are the associated probabilities are the following:
  \begin{itemize}
        \item $[j,\alpha,R_i^1,c_1,0] \tran{\vv{c_1}_1}
               [j,\alpha,A_i^1,0,0]$ 
        \item $[j,\alpha,R_i^1,c_1,0] \tran{\vv{c_1}_2}
               [j,\alpha,B_i^1,0,0]$ 
        \item $[j,\alpha,R_i^1,c_1,0] \tran{\delta - \vv{c_1}_1}
               [j,\alpha,C_i^1,0,0]$        
        \item $[j,\alpha,R_i^1,c_1,0] \tran{q}
               [j,\alpha,D_i^1,0,0]$ 
        \item $[j,\alpha,R_i^1,c_1,0] \tran{\vv{c_1}_1}
               [j,\alpha,R_{S(i)}^1,K^1,\max\{c_1{-}1,0\},0]$ 
  \end{itemize}
  where $q = 1 - \vv{c_1}_2 - \vv{c_1}_1 - \delta$.
  \item For every state $[j,\alpha,R_i^2,0,c_2]$ where $\alpha \subseteq \A^1 \cup \A^2$ is free, the immediate successors are the associated probabilities are the following:
  \begin{itemize}
        \item $[j,\alpha,R_i^2,0,c_2] \tran{\vv{c_2}_1}
               [j,\alpha,A_i^2,0,0]$ 
        \item $[j,\alpha,R_i^2,0,c_2] \tran{\vv{c_2}_2}
               [j,\alpha,B_i^2,0,0]$ 
        \item $[j,\alpha,R_i^2,0,c_2] \tran{\delta - \vv{c_2}_1}
               [j,\alpha,C_i^2,0,0]$        
        \item $[j,\alpha,R_i^2,0,c_2] \tran{q}
               [j,\alpha,D_i^2,0,0]$ 
        \item $[j,\alpha,R_i^2,0,c_2] \tran{\vv{c_2}_1}
               [j,\alpha,R_{S(i)}^2,K^2,0,\max\{c_2{-}1,0\}]$ 
  \end{itemize}
  where $q = 1 - \vv{c_2}_2 - \vv{c_2}_1 - \delta$.
  \item If $[\alpha,c_1,c_2]$ is a state where $\alpha \subseteq \A^1 \cup \A^2$ is free, then $[\alpha,c_1,c_2] \tran{1} [\alpha,c_1,c_2]$.
\end{itemize}
A routine check confirms that $[0,r^1_{0,1},r^2_{0,1},0,0] \models \varphi_\M$ and the considered computation is modeled by the run $\pi$ such that
$\pi(j) = [j,r^1_{i,\ell},r^2_{i,\ell},c_1,c_2]$ where $\conf_j = (\ell,c_1,c_2)$ and $i = j \mbox{ mod } 3$.
\end{proof}

Let us consider the formula $\Psi_\M \equiv \xi_{\M^1} \wedge \xi_{\M^2} \wedge \Sync$ where $\xi_{\M^1}$ and  $\xi_{\M^2}$ are obtained from $\psi_{\M^1}$ and  $\psi_{\M^2}$ just by omitting the subformulae $\Rec^1$ and $\Rec^2$, respectively.
Then, one can easily show that a \emph{deterministic} two-counter machine $\M$ has a \emph{bounded} computation iff $\Psi_\M$ has a finite model. The ``$\Leftarrow$'' direction follows by using the same arguments as in Proposition~\ref{prop-correct}. The ``$\Rightarrow$'' direction uses a construction similar to the one of Proposition~\ref{prop-model}. Realize that if $\M$ is deterministic, then a bounded run $\conf_0,\conf_1,\ldots$ of $\M$ is necessarily \emph{periodic}, i.e., there exist $t,u \in \N$ such that $t < u$ and the infinite sequences $\conf_t,\conf_{t+1},\ldots$ and $\conf_u,\conf_{u+1},\ldots$ are the same. Hence, at the beginning of the construction of Proposition~\ref{prop-model}, we now add only \emph{finitely} many states of the form $[j,r^1_{i,\ell},r^2_{i,\ell},c_1,c_2]$ where $j<u$. When constructing the successors of $[u-1,r^1_{i,\ell},r^2_{i,\ell},c_1,c_2]$, we use the state $[t,r^1_{S(i),\ell'},r^2_{S(i),\ell'},c'_1,c'_2]$ instead of $[u,r^1_{S(i),\ell'},r^2_{S(i),\ell'},c'_1,c'_2]$. Otherwise, the construction is the same.
Thus, we obtain the following:

\begin{corollary}
    The \emph{finite} satisfiability problem for the $\opf,\opg$-fragment of PCTL is $\Sigma^0_1$-hard.
\end{corollary}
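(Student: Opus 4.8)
The plan is to assemble the statement from three already-available ingredients: the effective construction $\M \mapsto \varphi_\M$ of Section~\ref{sec-twocounter}, the two correctness claims (Propositions~\ref{prop-correct} and~\ref{prop-model}), and the hardness of the corresponding two-counter machine problems (Proposition~\ref{prop-twocounter}). For the $\Sigma_1^1$ lower bound, note that given a non-deterministic two-counter machine $\M$ together with a target set $\tau$, the formula $\varphi_\M$ is produced from $(\M,\tau)$ by a purely syntactic procedure, so $\M \mapsto \varphi_\M$ is a computable reduction. By Proposition~\ref{prop-correct}, satisfiability of $\varphi_\M$ forces a $\tau$-recurrent computation of $\M$; by Proposition~\ref{prop-model}, every $\tau$-recurrent computation of $\M$ gives rise to a model of $\varphi_\M$. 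Hence $\varphi_\M$ is satisfiable iff $\M$ has a $\tau$-recurrent computation, and since the latter problem is $\Sigma_1^1$-hard by Proposition~\ref{prop-twocounter}, PCTL satisfiability is $\Sigma_1^1$-hard.

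Next I would push this down to the $\opf,\opg$-fragment. The only path connective in $\varphi_\M$ outside $\{\opf,\opg\}$ is $\opu$, and all its occurrences sit inside $\Struct$, all of the restricted shape $\ex{\psi}\,\opu_{=1}\,\Phi$ with $\psi$ a conjunction of literals. Section~\ref{sec-U-avoid} produces a formula $\overline{\Struct}$ of the $\opf,\opg$-fragment with $\overline{\Struct}\Rightarrow\Struct$; replacing $\Struct$ by $\overline{\Struct}$ in $\varphi_\M$ therefore yields a formula of the $\opf,\opg$-fragment that still implies the original $\varphi_\M$, so the ``satisfiable $\Rightarrow$ recurrent computation'' direction is untouched. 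For the converse, one checks that the Markov chain built in the proof of Proposition~\ref{prop-model} in fact validates the stronger $\overline{\Struct}$ directly, so the equivalence of the previous paragraph is preserved verbatim. This gives $\Sigma_1^1$-hardness already for the $\opf,\opg$-fragment.

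Finally, for finite satisfiability let $\M$ be a \emph{deterministic} two-counter machine and let $\Psi_\M \equiv \xi_{\M^1}\wedge\xi_{\M^2}\wedge\Sync$ be $\varphi_\M$ with the recurrence-enforcing subformulae ($\Rec^1$, $\Rec^2$, $\Recurrent$) deleted and $\Struct$ replaced by $\overline{\Struct}$. Exactly as in Proposition~\ref{prop-correct}, every model of $\Psi_\M$ contains a run representing the unique computation of $\M$, so a finite model of $\Psi_\M$ forces that computation to visit only finitely many distinct configurations, i.e.\ $\M$ is bounded. Conversely, a bounded computation of a deterministic machine is eventually periodic, so the (infinite) construction of Proposition~\ref{prop-model} can be truncated and its tail ``wrapped around'' into a finite Markov chain satisfying $\Psi_\M$. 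Hence $\Psi_\M$ has a finite model iff $\M$ is bounded; combined with the $\Sigma_1^0$-hardness of deterministic two-counter boundedness (Proposition~\ref{prop-twocounter}), this gives $\Sigma_1^0$-hardness of finite satisfiability for the $\opf,\opg$-fragment.

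All of the conceptual difficulty is already absorbed into Theorem~\ref{thm-area} and the construction of $\varphi_\M$; what remains are two routine but genuinely necessary verifications: (i) that swapping $\Struct$ for $\overline{\Struct}$ does not invalidate Proposition~\ref{prop-model}, i.e.\ that the explicitly defined model meets every $\opg_{=1}$-constraint of $\overline{\Struct}$ (permanence of $K$, the pairing of each proposition of $\mathcal C$ with some $R_j$, the single-relevant-proposition discipline, etc.); and (ii) the folding argument turning an eventually periodic computation into a finite model. I expect (i) to be the more delicate of the two, since $\overline{\Struct}$ trades the local $\opu$-assertions of $\Struct$ for several global safety constraints that must be re-checked against the construction of Proposition~\ref{prop-model}.
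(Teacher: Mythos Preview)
Your proposal is correct and follows essentially the same route as the paper: drop the recurrence subformulae, restrict to deterministic $\M$, argue (via the Proposition~\ref{prop-correct} mechanism together with the injectivity of $n\mapsto\INC^n(\vec z)$) that a finite model forces a bounded computation, and conversely fold an eventually periodic computation into a finite model by wrapping the successor of the last pre-period state back to the period's start, exactly as the paper does. Your explicit flag that the concrete model of Proposition~\ref{prop-model} must be checked against $\overline{\Struct}$ (not just $\Struct$) is a point the paper leaves implicit but is indeed required for the $\opf,\opg$-fragment claim.
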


\end{document}